\pgfplotsset{compat=newest}
\newcommand{\lv}[1]{}
\newtheorem{theorem}{Theorem}
\newtheorem{lemma}{Lemma}
\newtheorem{proposition}{Proposition}
\crefname{rrule}{Rule}{Rules}
\DeclarePairedDelimiterX{\abs}[1]{\lvert}{\rvert}{#1}
\DeclarePairedDelimiterX{\norm}[1]{\lVert}{\rVert}{#1}
\DeclarePairedDelimiterX{\ceil}[1]{\lceil}{\rceil}{#1}
\DeclarePairedDelimiterX{\angled}[1]{\langle}{\rangle}{#1}
\newcommand{\poly}{\ensuremath{\textnormal{poly}}}
\newcommand{\ncc}{\nu} %
\DeclareMathOperator{\pf}{pf}
\DeclareMathOperator{\dist}{dist}
\DeclareMathOperator{\GF}{GF}
\DeclareMathOperator{\vc}{vc}
\DeclareMathOperator{\td}{td}
\title{Fully Polynomial-time Algorithms Parameterized by Vertex Integrity Using Fast Matrix Multiplication}
\author[1]{Matthias Bentert\footnote{Supported by the European Research Council (ERC) project LOPRE (819416) under the Horizon 2020 research and innovation program.}}
\author[2]{Klaus Heeger\footnote{Supported by Deutsche Forschungsgemeinschaft (DFG) project FPTinP (NI 369/16).}}
\author[2]{Tomohiro Koana\footnote{Supported by the DFG project DiPa (NI 369/21).}}
\affil[1]{University of Bergen}
\affil[2]{Technische Universit\"at Berlin}
\date{\small \texttt{matthias.bentert@uib.no, heeger@post.bgu.ac.il, tomohiro.koana@tu-berlin.de}}
\begin{document}

\maketitle

\begin{abstract}
We study the computational complexity of several polynomial-time-solvable graph problems parameterized by \emph{vertex integrity}, a measure of a graph's vulnerability to vertex removal in terms of connectivity.
Vertex integrity is the smallest number $\iota$ such that there is a set~$S$ of~$\iota' \le \iota$ vertices such that every connected component of~$G-S$ contains at most $\iota-\iota'$~vertices.
It is known that the vertex integrity lies between the well-studied parameters \emph{vertex cover number} and \emph{tree-depth}.

Alon and Yuster [ESA 2007] designed algorithms for graphs with small vertex cover number using fast matrix multiplications.
We demonstrate that fast matrix multiplication can also be effectively used when parameterizing by vertex integrity~$\iota$ by developing efficient algorithms for problems including an~$O(\iota^{\omega-1}n)$-time algorithm for computing the girth of a graph, randomized~$O(\iota^{\omega - 1}n)$-time algorithms for \textsc{Maximum Matching} and for finding any induced four-vertex subgraph except for a clique or an independent set, and an~$O(\iota^{(\omega-1)/2}n^2) \subseteq O(\iota^{0.687} n^2)$-time algorithm for \textsc{All-Pairs Shortest Paths}.
These algorithms can be faster than previous algorithms parameterized by tree-depth, for which fast matrix multiplication is not known to be effective. 
\end{abstract}

\section{Introduction}
Parameterized complexity provides a powerful framework for studying NP-hard problems. %
The main idea behind parameterized algorithms is to analyze the running time in terms of the input size $|I|$ as well as a parameter $k$, some measure of the input instance.
A problem is \emph{fixed-parameter tractable} or \emph{FPT} for short, if it admits an \emph{FPT algorithm}, i.e., an algorithm running in~$f(k) \cdot |I|^{O(1)}$~time, where $f$ is a computable function solely depending on $k$.
In the past decade, a line of research dubbed ``FPT in P'' has emerged, where the goal is a more refined parameterized analysis of polynomial-time-solvable problems~\cite{DBLP:conf/soda/AbboudWW16,DBLP:journals/jcss/BentertFNN19,DBLP:journals/algorithmica/BringmannHM20,DBLP:journals/talg/CoudertDP19,DBLP:journals/talg/FominLSPW18,DBLP:conf/icalp/HegerfeldK19,DBLP:conf/stacs/KratschN20}.
Although the function~$f$ usually has to be at least exponential when working with NP-hard problems, this is not true for problems in~$P$.
FPT algorithms where~$f$ is a polynomial function are called \emph{fully polynomial-time algorithms} and are of course desirable.

We study graph problems in this work.
Let $n$ and $m$ be the number of vertices and edges, respectively.
Also, let $\vc$ be the vertex cover number and $\td$ be the tree-depth (see \Cref{sec:prelim} for definitions).
Alon and Yuster \cite{DBLP:conf/esa/AlonY07} demonstrated that fast matrix multiplication can be used effectively for graphs with a (not necessarily small) vertex cover, developing algorithms for \textsc{Maximum Matching} and \textsc{All-Pairs Shortest Paths (APSP)} that run in $O(n^{\omega})$ time (where $\omega < 2.372$ is the matrix multiplication exponent) even when $\vc = \Theta(n)$ (and even faster if the vertex cover number is smaller).
More recently, Iwata et al.~\cite{DBLP:conf/stacs/IwataOO18} proposed a divide-and-conquer framework in the design of fully polynomial-time algorithms parameterized by tree-depth.
For instance, they showed that \textsc{Maximum Matching} can be solved in $O(m \cdot \td)$ time.

In this work, we consider the parameter  \emph{vertex integrity}, a parameter that lies between vertex cover number and tree-depth.
The vertex integrity~$\iota$ of a graph~$G$ is the smallest integer such that~$G$~contains a set~$S$ of~$\iota' \le \iota$ vertices whose deletion results in a graph whose connected components each have at most~$\iota - \iota'$ vertices.
Many problems can be solved in~$O(nm)$~time and thus in $O(\iota n^2)$ time, since $m \in O(\iota n)$.
As the relation $\td \le \iota \le \vc + 1$ holds for any graph,
an algorithm that runs in $O(m  \cdot \td)$ time (e.g., for \textsc{Maximum Matching}) also runs in~$O(\iota^2 n)$~time.
These bounds become $O(n^3)$ when~$\iota = \Theta(n)$.
However, many problems can be solved in faster~$O(n^{\omega})$~time using fast matrix multiplication.
We aim to close this gap by developing fully polynomial-time algorithms that run in $O(n^{\omega})$ time even when~$\iota = \Theta(n)$.
Such algorithms are called \emph{adaptive} and are optimal unless there is an (unparameterized) algorithm that runs faster than $O(n^{\omega})$ time.
For many problems, the discovery of such an algorithm would be a breakthrough, given that these~$O(n^{\omega})$-time algorithms were developed decades ago and have not been improved since.

\subparagraph*{Our approach.}
Before describing our results, let us briefly discuss our approach (see \Cref{sec:prelim} for details).
Let $S$ be a~\emph{$k$-separator}, a vertex set of size~$k' \le k$ such that each connected component of~$G - S$ has size at most~$k - k'$.
Throughout the paper, we will make the assumption that a \mbox{$k$-separator} is given as input.
We remark that our algorithms do not require an (optimal)~\mbox{$\iota$-separator} to compute the correct solution.
However, the running times of our algorithm will depend on~$k$ and to achieve the claimed running times, we require a~$k$-separator with~$k \in O(\iota)$.
Although $G - S$ may have~$O (n)$ connected components, we may assume that there are $\Theta (n/k)$ ``components'' (which are not necessarily connected; see \Cref{sec:prelim} for details), each with~$O(k)$ vertices.
For every component~$C$, we can use the $O(n^{\omega})$-time algorithm to solve the instance on~$G[C]$ or~$G[S \cup C]$, which takes~$O(k^{\omega} \cdot n/k) = O(k^{\omega - 1} n)$ time. 
The next step is to combine solutions for $O(n / k)$ instances, which varies depending on the problem.
For instance, this is trivial for the problem of finding a triangle, as a triangle must be contained in $S \cup C$ for some component $C$.
For other problems, e.g., finding a maximum matching, this step requires a more sophisticated approach.

\subparagraph*{Main results.}
There are four main results in this work.
The first result concerns the problem of computing the girth of a graph, that is, the size of a smallest cycle.
The second result is about finding an induced copy of a graph $H$.
Vassilevska Williams et al.~\cite{DBLP:conf/soda/WilliamsWWY15} gave an $O(n^{\omega})$-time algorithm that finds an induced copy of $H$ when $H$ is a graph on four vertices that is not a clique $K_4$ or its complement $\overline{K_4}$.
Their randomized algorithm is based on computing the number of induced copies of $H$ modulo some integer $q$ which
they show to be computable from $A^2$ in linear time, where $A$ is the adjacency matrix.
We observe that the ``essential'' part of $A^2$ can be computed in~$O(\iota^{\omega - 1} n)$~time, leading to~$O(\iota^{\omega - 1} n)$-time randomized algorithms.

Third, we develop a randomized~$O(\iota^{\omega - 1} n)$-time algorithm for finding a maximum matching.
We start by showing that whether a graph contains a perfect matching can be determined in~$O(\iota^{\omega - 1} n)$~time.
Tutte \cite{tutte1947factorization} observed that the Tutte matrix is nonsingular if and only if the graph has a perfect matching.
By the Schwartz-Zippel lemma, we can test its nonsingularity in randomized~$O(n^{\omega})$~time.
We can thus test whether each component in $G - S$ has a perfect matching in~$O(\iota^{\omega - 1} n)$~time.
However, there might be a vertex in $S$ that must be matched to a vertex in~$G - S$.
To handle these cases, we use \emph{Schur complements}.
The task of finding a maximum matching is more intricate.
Lov\'{a}sz \cite{DBLP:conf/fct/Lovasz79} generalized Tutte's observation by stating that the rank of the Tutte matrix (which can be computed in randomized $O(n^\omega)$ time) equals twice the size of a maximum matching.
It was only decades later that $O(n^{\omega})$-time algorithms for finding one were discovered.
Mucha and Sankowski~\cite{DBLP:conf/focs/MuchaS04} and Harvey \cite{DBLP:journals/siamcomp/Harvey09} gave such algorithms.
We show how to adapt the latter to obtain a randomized~$O(\iota^{\omega - 1} n)$-time algorithm for finding a maximum matching.

Lastly, we study APSP on unweighted graphs.
Seidel \cite{DBLP:conf/stoc/Seidel92} showed that APSP can be solved in~$O(n^{\omega} \log n)$~time.
Alon and Yuster \cite{DBLP:conf/esa/AlonY07} later developed an algorithm that runs in~$O(\vc^{\omega - 2} n^2)$~time (they actually provide a stronger bound on the running time using rectangular matrix multiplication).
We show that APSP can be solved in~$O(\iota^{\omega - 2} n^2)$~time when the graph has constant diameter.
We were not able to obtain an adaptive algorithm in general, but we give an~$O(\iota^{(\omega-1)/2}n^2) \subseteq O(\iota^{0.687} n^2)$-time algorithm.
When parameterizing by $\vc$, we can effectively replace every vertex not in the vertex cover with edges of weight two connecting their neighbors.
Thus, the $O(Wn^{\omega})$-time algorithm~\mbox{\cite{DBLP:journals/jgaa/EirinakisWS17,DBLP:conf/focs/ShoshanZ99}} for weighted APSP, where~$W$~is the maximum weight, finds all pairwise distances between vertices in the vertex cover in~$O(\vc^{\omega})$~time.
For vertex integrity, we show how to replace every component with edges of weight~$O(\iota)$.
To compute distances between pairs of vertices with at least one vertex not in the~$k$-separator, we use the known subcubic-time algorithm for computing min-plus matrix multiplication for \emph{bounded-difference matrices}, matrices in which the difference of two adjacent entries in a row is constant \cite{DBLP:conf/stoc/ChiDX022}.

\subparagraph*{Previous work on vertex integrity.}
The notion of vertex integrity was introduced by Barefoot et al.~\cite{barefoot1987vulnerability}.
The vertex integrity $\iota$ can be much smaller than $n$, e.g., it is known that~$\iota \in O(n^{2/3})$ on $K_h$-minor free graphs \cite{DBLP:journals/siamdm/BenkoEL09}.
\textsc{Vertex Integrity}, i.e., the problem of computing an $\iota$-separator is NP-hard.
A straightforward branching algorithm solves \textsc{Vertex Integrity} in~$O(\iota^{\iota} \cdot n)$ time (see~\cite{DBLP:journals/algorithmica/DrangeDH16}).
A greedy algorithm can find an~$O(\iota^2)$-separator in linear time.
There is also a polynomial-time algorithm that can find an~$O(\iota \log \iota)$-separator~\cite{DBLP:journals/mp/Lee19}.
FPT algorithms parameterized by vertex integrity gained increased attention recently~\cite{DBLP:journals/algorithmica/BodlaenderHKKOO20,DBLP:journals/algorithmica/DrangeDH16,DBLP:journals/ai/DvorakEGKO21,DBLP:journals/tcs/GimaHKKO22,DBLP:conf/isaac/LampisM21}.
In particular, see Gima et al.~\cite{DBLP:journals/tcs/GimaHKKO22} for an extensive list of problems that are W[1]-hard for tree-depth but become FPT when parameterized by vertex integrity.

\section{Preliminaries}
\label{sec:prelim}

We use standard notation from graph theory.
Unless stated otherwise, all appearing graphs are undirected.
Further, $V$ denotes the set of vertices in the graph, $E$ its set of edges, $n$ its number of vertices, and $m$ its number of edges.
We denote an edge between two vertices $u$ and $v$ by~$uv$.
A \emph{walk} of length~$\ell$ is a sequence~$v_1, \dots, v_{\ell + 1}$ of (not necessarily distinct) vertices such that $v_i v_{i+1} \in E$ for all $i \in [\ell]$, where $[j]:= \{1, \dots, j\}$ for any integer~$j$.
A walk whose vertices are all pairwise distinct is a \emph{path}.
The \emph{adjacency matrix} of~$G$ is the $V \times V$-matrix~$A$ with~$A [u, v] = 1$ if and only if $uv \in E$, and $A[u, v] = 0$ otherwise (where $A[u, v]$ is the entry of~$A$ indexed by~$u$ and $v$).

\subparagraph*{Graph parameters.} For a graph $G$, the \emph{vertex integrity} is the smallest integer~$\iota$ such that~$G$~contains a set $S$ (called \emph{$\iota$-separator}) of $\iota' \le \iota$ vertices whose deletion results in a graph whose connected components each contain at most $\iota - \iota'$ vertices.
The \emph{vertex cover number} $\vc$ is the smallest cardinality of a vertex cover, a set of vertices that contains at least one endpoint of every edge.
The \emph{tree-depth} $\td$ is the smallest depth of a rooted forest $F$ with vertex set $V$ such that~$G$~can be \emph{embedded} in $F$, i.e., for every edge $xy$ in $G$, $x$ is an ancestor of $y$ or vice versa.
The \emph{feedback vertex number} is the smallest cardinality of a feedback vertex set, a set of vertices that contains at least one vertex of every cycle.
The \emph{girth} is the size of a smallest cycle in a graph.

\subparagraph*{Decomposition.}

Here, we describe the decomposition with respect to a $k$-separator, which will be used throughout the paper.
Let $S$ be a $k$-separator.
Typically in our algorithms, we spend~$O(k^{\omega})$~time for every connected component in $G - S$.
Since $G - S$ may have~$\Omega(n)$ connected components, this would result in a running time of $O(k^{\omega} n)$, which is often worse than a more straightforward algorithm.
Thus, we will do the following to bound the number of ``components'' by $O(n / k)$:
Basically, we put together some connected components~$C$ and construct a collection~$\mathcal{T}$ of sets, each (except for possibly the last one) containing between~$k$ and~$2k-1$ vertices.
More precisely, we start with $\mathcal{T} = \emptyset$ and process the connected components of~$G-S$ one by one as follows.
If every set $T \in \mathcal{T}$ has at least $k$ vertices, then add $\{ C \}$ to $\mathcal{T}$, and otherwise replace the set $T \in \mathcal{T}$ with $|T| < k$ by $T \cup C$.
Since every connected component $C$ has at most $k$ vertices, every set $T \in \mathcal{T}$ (except for possibly the last set which may be smaller) contains between~$k$ and~$2k - 1$~vertices.
Let~$\mathcal{T} = \{ T_1, \dots, T_{\nu} \}$.
It is easy to see that $\ncc \le n / k + 1$.
In our algorithms, we will always assume that the decomposition $(S; T_1, \dots, T_{\nu})$ of $V$ is given.
Note that given a $k$-separator, the decomposition can be computed in linear time.

\subparagraph*{Basic operations in matrix multiplication time.}

For $n \times n$-matrices $A$ and~$B$, one can compute the following in $O(n^{\omega})$ time: (i) the product $AB$, (ii) the determinant $\det A$, (iii) the inverse~$A^{-1}$, and (iv) a row/column basis of $A$ (see for example~\cite{DBLP:books/daglib/0090316}).
More generally, for a~$k \times n$-matrix $A$ and an~$n \times k$-matrix $B$, one can compute the product~$AB$ in $O(k^{\omega - 1} n)$ time by dividing~$A$ and $B$ into $n / k$ blocks of size $k \times k$.
The rank of~$A$ can be computed using Gaussian elimination using $O(k^{\omega - 1} n)$ arithmetic operations \cite{bunch1974triangular}.
Throughout the paper, we will use a word RAM model with word size $O(\log n)$.
If $\mathbb{F}$ is a field of size $\poly(n)$, we will assume that addition and multiplication take $O(1)$ time.

\subparagraph*{Matrices and Matchings.}
For a subset~$X$ of rows and a subset~$Y$ of columns, we denote by~$A[X, Y]$ the restriction of the matrix~$A$ to rows~$X$ and columns~$Y$.
When $X$ contains all rows (or $Y$ all columns), we simplify this notation to $A[\cdot, Y]$ (or $A[X, \cdot]$ for columns).
If $X$ (or $Y$) is the entire row set (column set, respectively), then we write $A[\cdot, Y]$ ($A[X, \cdot]$, respectively).
For a set~$X$ of rows (or columns), we will use the shorthand $A[X]$ for $A[X, X]$.
The identity matrix is denoted by $I$, and the zero matrix is denoted by $\mathbf{0}$.
The $i$-th power of the adjacency matrix~$A$ correspond to the number of walks of length~$i$, i.e., $A^i [u, v] $ equals the number of $u$-$v$-walks of length exactly~$i$ in~$G$.

Note that for any graph with a~$k$-separator~$S$ and decomposition~$(S; T_1, \dots, T_\ncc)$, it holds that there is no edge between a vertex in~$T_i$ and a vertex in~$T_j$ for any~$i \neq j$.
We can therefore represent the adjacency matrix~$A$ of the graph as follows.
\begin{align}
    \label{eq:adj-matrix}
    A = \begin{blockarray}{ccc}
        & S & \bar{S} \\
        \begin{block}{c[cc]}
            S & \gamma & \beta \\
            \bar{S} & \beta^T & \alpha\\
        \end{block}
    \end{blockarray}, \text{ where }
    \alpha = \begin{blockarray}{cccc}
        & T_1 & \cdots & T_{\ncc} \\
        \begin{block}{c[ccc]}
            T_1 & \alpha_1 & \cdots & \bold{0} \\
            \vdots & \vdots & \ddots & \vdots \\
            T_{\ncc} & \bold{0} & \cdots & \alpha_{\ncc} \\
        \end{block}
    \end{blockarray} \text{ and }
    \beta = \begin{blockarray}{cccc}
        & T_1 & \cdots & T_{\ncc} \\
        \begin{block}{c[ccc]}
            S & \beta_1 & \cdots & \beta_{\ncc} \\
        \end{block}
    \end{blockarray}.
\end{align}
Here, $\bar{S} = T_1 \cup \dots \cup T_{\ncc}$ and the matrices $\alpha_i, \beta_i, \gamma$ all have size $O(k) \times O(k)$.
Many of our algorithms will use this representation and exploit the sparseness when computing e.g.\ matrix multiplications and determinants.

We denote the (unique) finite field with $2^q$ many elements by $\GF(2^q)$.
Note that this field has characteristic~2, i.e., $x + x = 0$ for every $x\in \GF(2^q)$.
For a graph $G = (V, E)$, the Tutte matrix (also known as the skew adjacency matrix) $A$ whose rows and columns are indexed by~$V = \{v_1, \dots, v_n\}$ is defined by 
\begin{align*}
    A[u, v] = \begin{cases}
        +x_{uv} & \text{ if } u = v_i, v = v_j \text{ with } i < j \text{ and } uv \in E \\
        -x_{uv} & \text{ if } u = v_i, v = v_j \text{ with } j < i \text{ and } uv \in E \\
        0 & \text{ otherwise}, \\
    \end{cases}
\end{align*}
where $x_{uv}$ is a variable associated with the edge $uv$.
The Tutte matrix $A$ is \emph{skew-symmetric}, that is,~$A = -A^T$.
The Pfaffian of a skew-symmetric matrix~$A$ indexed by $V$ is defined as
\begin{align*}
    \pf(A) = \sum_{M \in \mathcal{M}} \sigma_M \prod_{uv \in M} A[u, v],
\end{align*}
where $\mathcal{M}$ is the set of all perfect matchings of $(V, \binom{V}{2})$ and $\sigma_M \in \{ \pm 1\}$ is the sign of $M$.
We will assume that the field has characteristic 2 (implying $-1=1$), and thus the precise definition of $\sigma_M$ is not important for us.
(This assumption is not essential to our algorithm but it will simplify the notation.)

The following are well-known facts about skew-symmetric matrices (see e.g., \cite{DBLP:books/daglib/0077284,murota1999matrices}).
\begin{lemma}
    \label{lemma:pfaffian}
    For a skew-symmetric matrix $A$, we have $\det A = \pf(A)^2$.
\end{lemma}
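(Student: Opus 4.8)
The plan is to give a combinatorial/polynomial-identity argument. First I would treat the entries $A[u,v]$ for $u<v$ as independent indeterminates (setting $A[v,u]=-A[u,v]$ and $A[v,v]=0$), so that both $\det A$ and $\pf(A)$ become polynomials in these variables; proving the identity in this generic ring of polynomials immediately gives it over any field by substitution. Recall $\det A = \sum_{\pi \in S_n} \sgn(\pi)\prod_{i} A[i,\pi(i)]$. The key observation is that any permutation $\pi$ having a fixed point contributes $0$, since $A[i,i]=0$; so only fixed-point-free permutations survive. I would then group the remaining permutations by their cycle structure and argue that, because $A$ is skew-symmetric, a permutation and the same permutation with one of its odd-length… no — with one of its cycles reversed — contribute in a controlled way. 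Concretely, a cycle of even length, when reversed, flips the sign of the product of the $A$-entries along it an even number of times but changes $\sgn(\pi)$ in a compensating way; a cycle of odd length of length $\ge 3$, paired with its reversal, yields two terms that cancel. Hence only permutations all of whose cycles have even length contribute to $\det A$.

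Next I would set up the bijection with pairs of perfect matchings. A fixed-point-free permutation all of whose cycles are even can be decomposed, by 2-coloring the edges along each cycle alternately, into an ordered pair $(M_1, M_2)$ of perfect matchings of the complete graph on $V$; conversely, an ordered pair of perfect matchings whose union is a disjoint union of even cycles and doubled edges gives such a permutation. Tracking signs carefully — this is where the characteristic-2 assumption of the paper lets me be cavalier, since there $-1=1$ and all the sign bookkeeping collapses — one shows $\det A = \sum_{M_1, M_2} (\prod_{uv\in M_1} A[u,v])(\prod_{uv \in M_2} A[u,v]) = \bigl(\sum_M \prod_{uv\in M} A[u,v]\bigr)^2 = \pf(A)^2$. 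In characteristic $2$ the sign $\sigma_M$ is irrelevant and $\pf(A)=\sum_M \prod_{uv\in M}A[u,v]$, so the final squaring step is immediate; over a general field one must verify $\sigma_{M_1}\sigma_{M_2}$ equals the sign of the associated permutation, which is the classical (and slightly tedious) part.

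The main obstacle is precisely the sign bookkeeping in the general-characteristic case: matching the product of cycle-reversal signs and the permutation sign against $\sigma_{M_1}\sigma_{M_2}$ requires a careful induction on the number of cycles or an appeal to the multiplicativity of the Pfaffian under congruence. Since the paper only needs the characteristic-2 version (and states so explicitly), I would present the clean characteristic-2 proof in full and simply cite \cite{DBLP:books/daglib/0077284,murota1999matrices} for the general statement; alternatively, one may derive the general identity from the characteristic-2 case plus a continuity/density argument over $\mathbb{Q}$, since both sides are polynomials with integer coefficients and agreeing over $\mathbb{F}_2[x_{uv}]$ together with agreeing up to sign over $\mathbb{Q}$ pins down the sign. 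Either way, the substantive content — that only even-cycle permutations survive and that they biject with pairs of perfect matchings — is the same.
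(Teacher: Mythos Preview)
The paper does not prove this lemma at all; it simply states it as a well-known fact about skew-symmetric matrices and cites \cite{DBLP:books/daglib/0077284,murota1999matrices}. Your proposal therefore goes well beyond the paper's treatment by actually sketching the standard combinatorial proof.

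Your outline is essentially the classical argument and is sound at the high level, but two details are imprecise. First, reversing a $k$-cycle never changes $\sgn(\pi)$ (the cycle and its reversal are both $k$-cycles with sign $(-1)^{k-1}$), so for even $k$ there is nothing to ``compensate''; the product of $A$-entries along the cycle picks up $(-1)^k$, which is $+1$ for even $k$ and $-1$ for odd $k$, and that alone drives the cancellation of odd cycles in characteristic $\ne 2$. Second, your characteristic-$2$ simplification is slightly off: in characteristic $2$ the odd-cycle cancellation you describe does not work as stated (since $-1=1$), but something stronger happens---reversing \emph{any} cycle of length $\ge 3$ produces a distinct permutation with the identical monomial, and the two copies sum to zero. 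Hence only fixed-point-free involutions survive in $\det A$, and these correspond to single matchings, matching $\pf(A)^2 = \sum_M \prod_{e\in M} x_e^2$ by the Frobenius identity $(\sum a_i)^2 = \sum a_i^2$. The ``pairs of matchings'' bijection you set up is really the characteristic-$\ne 2$ route, and there the map between even-cycle permutations and ordered pairs $(M_1,M_2)$ is not a bijection on the nose (each side has a $2^{c}$ ambiguity from cycle orientation versus colour choice) --- the sign bookkeeping you flag as ``tedious'' is exactly what absorbs this. Since the paper only needs characteristic $2$, your plan to do that case cleanly and cite the references for the rest matches what the paper does (which is to cite and move on).
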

In particular, a skew-symmetric matrix $A$ is nonsingular if and only if $\pf(A) \ne 0$.
\begin{lemma}
    \label{lemma:ss-basis}
    For a skew-symmetric matrix $A$, if $X$ is a row (or column) basis, then $A[X]$ is nonsingular.
\end{lemma}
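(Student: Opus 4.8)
The plan is to reduce the statement to the nonsingularity of a maximal nonvanishing principal minor, using the fact (Lemma~\ref{lemma:pfaffian}) that for a skew-symmetric matrix the determinant is the square of the Pfaffian, together with the observation that for skew-symmetric matrices the row space and column space coincide. Concretely, let $A$ be skew-symmetric of size $V \times V$ and let $X \subseteq V$ be a row basis, so $|X| = \rank(A) =: r$ and the rows indexed by $X$ are linearly independent.

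First I would record that since $A = -A^T$, a set $X$ of rows is a row basis if and only if the same set of columns is a column basis: the column span of $A$ equals the row span of $A^T = -A$, which equals the row span of $A$, and linear independence is preserved under transposition. Hence $A[\cdot, X]$ has full column rank $r$ and $A[X, \cdot]$ has full row rank $r$. Now consider the $r \times r$ submatrix $A[X]$. Writing $A[X] = A[X, \cdot]\, E_X$, where $E_X$ is the $V \times X$ matrix selecting the columns in $X$, and using that $A[X,\cdot]$ has rank $r$ while its columns indexed by $X$ span the whole column space of $A[X,\cdot]$ (because $X$ is a column basis), one gets that $A[X]$ has rank $r$; equivalently, restricting the independent rows $A[X,\cdot]$ to the coordinates $X$ loses no rank because those coordinates already carry the full column space. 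Therefore $A[X]$ is an $r \times r$ matrix of rank $r$, i.e.\ nonsingular.

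The one genuine subtlety — and the step I expect to be the main obstacle to state cleanly — is the claim that restricting to the columns $X$ does not drop the rank of $A[X,\cdot]$. The clean way to see this: the rows of $A$ all lie in $\operatorname{span}\{A[\cdot, v] : v \in V\}$ and, $X$ being a column basis, already in $\operatorname{span}\{A[\cdot, v] : v \in X\}$; dually, every column of $A[X,\cdot]$ is a linear combination of the columns of $A[X,\cdot]$ indexed by $X$. Since the full matrix $A[X,\cdot]$ has rank $r$ and its column space is spanned by just the $r$ columns in $X$, those $r$ columns are themselves a basis of that column space, so $A[X]$ has rank $r$. (Alternatively, one can argue via Pfaffians directly: among all principal submatrices $A[Y]$ with $\pf(A[Y]) \neq 0$ pick one of maximum size $s$; by Lemma~\ref{lemma:pfaffian} such $A[Y]$ is nonsingular, so $\rank(A) \geq s$, while an expansion-of-the-Pfaffian / Schur-complement argument shows every row of $A$ lies in the span of the rows indexed by $Y$, giving $\rank(A) \leq s$ and hence $Y$ is a row basis with $A[Y]$ nonsingular; then any two row bases have the same size $s$ and the transpose symmetry lets one transfer the conclusion to an arbitrary row basis $X$.) I would present the first, linear-algebraic argument as the main proof, since it avoids invoking the combinatorial structure of the Pfaffian and uses only Lemma~\ref{lemma:pfaffian} implicitly through the equality $\rank(A) = |X|$.
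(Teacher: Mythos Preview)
Your argument is correct. The key step---that if $X$ is simultaneously a row basis and a column basis of a matrix $A$, then $A[X]$ is nonsingular---is the standard linear-algebra fact you spell out: every column of $A[X,\cdot]$ lies in the span of the columns indexed by $X$ (because $X$ is a column basis of $A$, and this relation survives restriction to the rows in $X$), so the $r$ columns of $A[X]$ span the rank-$r$ column space of $A[X,\cdot]$ and are therefore independent. Skew-symmetry enters only to guarantee that a row basis is automatically a column basis, via $A^T=-A$.

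For comparison: the paper does not prove this lemma at all; it is stated as a well-known fact about skew-symmetric matrices with a citation to standard references (e.g.\ Murota). So there is no ``paper's proof'' to match, and your self-contained linear-algebraic argument is a perfectly good one to include. One minor remark: your closing sentence says the argument ``uses only Lemma~\ref{lemma:pfaffian} implicitly through the equality $\rank(A)=|X|$,'' but in fact that equality is just the definition of $X$ being a row basis and has nothing to do with Pfaffians; your main argument never invokes Lemma~\ref{lemma:pfaffian} at all, which is a feature, not a defect.
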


The next is immediate from the definition of the Pfaffian.
\begin{lemma}[row expansion] \label{lemma:laplace}
     For a skew-symmetric matrix $A$ indexed by $V$ and $v \in V$ over a field of characteristic 2, we have
     $\pf(A) = \sum_{v' \in V \setminus \{ v \}}A[v, v'] \cdot  \pf(\widehat{A}_{v, v'})$,
     where $\widehat{A}_{v, v'}$ is the matrix where the rows and columns indexed by $v$ and $v'$ are deleted.
\end{lemma}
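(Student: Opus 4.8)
The plan is to unfold the definition of $\pf(A)$ and regroup its monomials according to the partner of the fixed vertex $v$ in each perfect matching. Because we work over a field of characteristic $2$, every sign $\sigma_M$ equals $1$, so the statement reduces to a purely combinatorial identity about sums of products of matrix entries, with no sign bookkeeping.

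First I would handle the degenerate case where $|V|$ is odd: then $\mathcal{M} = \emptyset$, so $\pf(A) = 0$, while each $\widehat{A}_{v,v'}$ has an odd number of rows and columns, whence $\pf(\widehat{A}_{v,v'}) = 0$ as well, and the identity reads $0 = 0$. So assume $|V|$ is even. For each $v' \in V \setminus \{v\}$ let $\mathcal{M}_{v'} \subseteq \mathcal{M}$ denote the set of perfect matchings of $(V, \binom{V}{2})$ that contain the edge $vv'$. Since every perfect matching pairs $v$ with exactly one other vertex, $\mathcal{M} = \bigsqcup_{v' \in V \setminus \{v\}} \mathcal{M}_{v'}$ is a partition, and $M \mapsto M \setminus \{vv'\}$ is a bijection from $\mathcal{M}_{v'}$ onto $\mathcal{M}'_{v'}$, the set of perfect matchings of $(V \setminus \{v,v'\}, \binom{V \setminus \{v,v'\}}{2})$, which is precisely the index set for $\pf(\widehat{A}_{v,v'})$.

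Plugging this into the definition, for $M \in \mathcal{M}_{v'}$ with $M' = M \setminus \{vv'\}$ we have $\prod_{uw \in M} A[u,w] = A[v,v'] \cdot \prod_{uw \in M'} A[u,w]$ and $\sigma_M = \sigma_{M'} = 1$, so
\[
\pf(A) = \sum_{v' \in V \setminus \{v\}} \;\sum_{M \in \mathcal{M}_{v'}} \prod_{uw \in M} A[u,w] = \sum_{v' \in V \setminus \{v\}} A[v,v'] \sum_{M' \in \mathcal{M}'_{v'}} \prod_{uw \in M'} A[u,w] = \sum_{v' \in V \setminus \{v\}} A[v,v'] \cdot \pf(\widehat{A}_{v,v'}),
\]
which is the claimed formula; here one uses that $\widehat{A}_{v,v'}[u,w] = A[u,w]$ for all $u,w \in V \setminus \{v,v'\}$, so the entries appearing in the two Pfaffians agree.

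There is no genuinely hard step. The only point that would require care over another characteristic is the interaction of the permutation sign with the removal of the matched pair $vv'$, which in general contributes a factor $(-1)$ raised to a power depending on the positions of $v$ and $v'$ in the ordering — this is the classical cofactor-type expansion of the Pfaffian. The characteristic-$2$ hypothesis makes that factor disappear, which is exactly why it is adopted here.
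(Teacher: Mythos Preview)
Your proof is correct and is exactly the argument the paper has in mind: the paper states only that the lemma ``is immediate from the definition of the Pfaffian,'' and your grouping of the perfect matchings by the partner of $v$ together with the observation that characteristic~$2$ kills all signs is precisely that unfolding.
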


\section{Finding Short Cycles}
\label{sec:girth}
In this section, we will show how to compute the girth of a graph in $O(k^{\omega - 1} n)$ time when a $k$-separator of the input graph is given.
Recall the \emph{girth} of a graph is the length of its shortest cycle.
Although the problem of finding the girth is seemingly more difficult than finding a triangle, these two problems are subcubic equivalent---in fact, there is an algorithm computing the girth in~$O(n^{\omega})$ time \cite{DBLP:journals/siamcomp/ItaiR78}.
We also give a more efficient algorithm to determine the \emph{even girth}. 
The algorithm for computing the girth uses the following~$O(n)$-time BFS-based procedure that terminates as soon as a cycle is encountered.

\begin{lemma}[\cite{DBLP:journals/siamdm/YusterZ97}]
    \label{lemma:bfs-cycle}
    Let $G$ be a graph and $\ell$ be the girth of $G$.
    There is an $O(n)$-time algorithm that given a vertex $v \in V$, determines that $v$ is not on a shortest cycle or (i) finds a cycle of length $\ell$ if $\ell$ is even and (ii) finds a cycle of length of at most $\ell + 1$ and all vertices at distance $(\ell - 1) / 2$ from $v$ if $\ell$ is odd.
\end{lemma}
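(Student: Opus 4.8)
The statement is attributed to Yuster–Zwick, so I only need to recall the standard argument. The plan is to run a breadth-first search from $v$, building the BFS tree level by level, and to stop the very first time a non-tree edge is discovered. Let me write $L_0 = \{v\}, L_1, L_2, \dots$ for the BFS layers, and suppose the search first encounters a non-tree edge $xy$ while processing layer $L_i$ (so $x \in L_i$ and $y \in L_i \cup L_{i-1}$, since a non-tree edge in BFS always joins the same layer or adjacent layers). Because no non-tree edge was seen in layers $L_0, \dots, L_{i-1}$, the subgraph induced on $L_0 \cup \dots \cup L_{i-1}$ is exactly the BFS tree restricted to those layers, hence a tree. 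The first key step is to observe that the cycle closed by $xy$ together with the two tree paths from $x$ and $y$ up to their lowest common ancestor $z$ has length at most $2i+1$ (it is $2(i - \operatorname{depth}(z)) + 1$ if $x,y$ are in the same layer, or $2(i - \operatorname{depth}(z))$ if $y \in L_{i-1}$), and that no cycle entirely within the explored region existed before, so any cycle through $v$ — indeed any shortest cycle of $G$ passing near $v$ — could not have been ``missed.''

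The second, more delicate step is the case analysis on the parity of the girth $\ell$. Suppose $v$ lies on some shortest cycle $C$ of length $\ell$. If $\ell = 2t$ is even, the two arcs of $C$ from $v$ have length $t$ each, and one shows the first non-tree edge is found no later than when processing $L_t$; moreover the closed cycle one extracts has length exactly $\ell$ — it cannot be shorter, since $\ell$ is the girth, and a parity/shortest-path argument rules out length $\ell+1$ being forced here. If $\ell = 2t+1$ is odd, $C$ has arcs of lengths $t$ and $t+1$ from $v$; the first non-tree edge appears while processing $L_t$ or $L_{t+1}$, and the extracted cycle has length at most $\ell + 1 = 2t+2$ (it could be $2t$ or $2t+2$, i.e. off by at most one because the two endpoints of the closing edge may sit in the same layer $L_t$ rather than in consecutive layers). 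In the odd case we also need to certify that, at the moment we stop, we have correctly identified all vertices at distance $(\ell-1)/2 = t$ from $v$: these are exactly the vertices placed in layer $L_t$, and since we stop no earlier than finishing layer $L_t$ and BFS assigns correct distances up to the first layer containing an endpoint of a non-tree edge, the set $L_t$ is precisely $\{u : \operatorname{dist}(v,u) = t\}$. Finally, if $v$ is \emph{not} on any shortest cycle, the algorithm simply reports this when its termination condition is met ``too late'' relative to $\ell$ — concretely, if no non-tree edge is found by the time BFS would have had to close a shortest cycle through $v$, we may safely declare $v$ irrelevant.

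For the running time, each edge is examined $O(1)$ times before we halt, and we halt at the latest upon discovering the first non-tree edge, so the total work is $O(n + m)$; but since we can abort BFS as soon as the number of explored edges exceeds $n$ (a graph with more than $n$ edges trivially has a short cycle, or we bound the explored portion directly by the layers visited), this is $O(n)$. The main obstacle, and the only part requiring genuine care rather than bookkeeping, is the tight parity accounting in the odd-girth case: pinning down why the extracted cycle is within $\ell+1$ (not $\ell+2$) and why layer $L_t$ is fully and correctly populated at the stopping time. I would handle this by the clean observation that at the stopping moment every vertex in $L_0 \cup \dots \cup L_t$ has been assigned its true BFS distance — because no non-tree ``shortcut'' edge was processed before layer $L_t$ completed — combined with the fact that a closing edge within $L_t$ yields an odd cycle of length $2t$ or $2t+2$ depending on the LCA depth, never anything in between.
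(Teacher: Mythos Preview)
The paper does not prove this lemma; it is stated with a citation to Yuster and Zwick and used as a black box. So there is no ``paper's own proof'' to compare against, and your write-up is essentially a reconstruction of the standard BFS argument from that reference (and Itai--Rodeh before it).

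Your overall strategy is the right one and matches the cited source: run BFS from $v$, stop at the first non-tree edge, and analyse the resulting fundamental cycle by parity. Two small corrections are worth making. First, your claim that the first non-tree edge $xy$ with $x\in L_i$ has $y\in L_i\cup L_{i-1}$ is backwards: if $y\in L_{i-1}$ then the edge $xy$ would already have been examined when $y$ was processed, so in fact $y\in L_i\cup L_{i+1}$. This does not damage the argument, but the cycle-length formula in the cross-layer case should read $2(i-\operatorname{depth}(z))+2$ rather than $2(i-\operatorname{depth}(z))$. Second, the $O(n)$ bound deserves a cleaner justification than ``abort once more than $n$ edges are explored'': the real reason is that every edge examined before the stopping edge is a tree edge (since any earlier non-tree edge would have triggered the stop), so at most $n-1$ tree edges plus one closing edge are touched; and in the odd case, layer $L_t$ is fully populated while processing $L_{t-1}$, during which only tree edges are seen because a non-tree edge there would close a cycle of length at most $2t<\ell$. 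With these two fixes your sketch is sound.
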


\Cref{lemma:bfs-cycle} helps us focus on the case where the girth is odd.

\begin{restatable}{proposition}{thmgirth}\label{thm:girth}
    \mbox{Given a graph and its $k$-separator, we can find its girth in $O(k^{\omega - 1} n)$ time.}
\end{restatable}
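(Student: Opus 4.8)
The plan is to reduce computing the girth to two subtasks: (a) detecting the smallest \emph{even} cycle, and (b) detecting the smallest \emph{odd} cycle, and then returning the minimum. For the even case, observe that the length of a shortest even cycle can be computed by a bounded BFS from each vertex as in \Cref{lemma:bfs-cycle}, but running this na\"{i}vely from all $n$ vertices costs $O(n^2)$, which is too slow. Instead I would exploit the separator structure: any short even cycle either lies entirely inside $G[S \cup T_i]$ for a single component index $i$, or it passes through at least two distinct components $T_i, T_j$ and hence uses at least two vertices of $S$. In the first case we can afford to run the $O(n^\omega)$ even-girth subroutine on each $G[S \cup T_i]$, costing $O(k^\omega \cdot \ncc) = O(k^{\omega-1} n)$ in total. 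In the second case, the cycle decomposes into maximal subpaths that alternate between lying inside a single component (with both endpoints in $S$) and single edges/vertices in $S$; so the relevant information is, for each pair $s, s' \in S$ and each component $T_i$, the shortest $s$--$s'$ path whose interior lies in $T_i$. Collecting the minimum over all $i$ for each pair $(s,s')$ gives an auxiliary graph $H$ on vertex set $S$ with $O(k^2)$ weighted edges (plus the original edges of $G[S]$), and the shortest cycle through $S$ in the original graph corresponds to a shortest ``non-trivial'' closed walk in $H$. The per-component all-pairs-of-$S$ shortest path computation costs $O(k^\omega)$ (e.g.\ via the weighted-APSP bound, or just BFS from each of the $k$ separator vertices inside $G[S \cup T_i]$ which is $O(k \cdot k) = O(k^2)$ per component and hence $O(kn)$ overall), and then we solve girth on $H$ in $O(k^\omega)$ time using the known subcubic girth algorithm.

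The subtle point is parity: in subtask (b) we need the shortest cycle that is \emph{odd}, and the auxiliary-graph reduction must track the parity of each replaced subpath. I would handle this by the standard bipartite-double-cover / parity-labelling trick: build $H$ on two copies $S \times \{0,1\}$ of the separator, where an edge representing a path of length $\ell$ connects $(s, p)$ to $(s', p \oplus (\ell \bmod 2))$. A shortest odd closed walk through $S$ in $G$ then corresponds to a shortest walk from $(s,0)$ to $(s,1)$ in $H$ for some $s$, i.e.\ to the girth-type quantity in this doubled graph, which again has $O(k)$ vertices and $O(k^2)$ edges and is handled in $O(k^\omega)$ time. Combined with \Cref{lemma:bfs-cycle}, which lets us either certify a vertex is not on a shortest cycle in $O(n)$ time or extract the actual short cycle, the odd case reduces to: run the separator-based search above to get the length of the shortest odd cycle through $S$; handle odd cycles inside a single $G[S\cup T_i]$ by the $O(n^\omega)$ subroutine on each component (total $O(k^{\omega-1}n)$); take the minimum. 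One must also be careful that a ``cycle'' in $G$ not revisiting a vertex corresponds to a walk in $H$ that could in principle repeat a separator vertex, but a shortest such walk can be assumed simple when we only want the minimum length, since a repeated vertex can be short-circuited without increasing length.

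The main obstacle I anticipate is bounding the length of the subpaths we replace so that the min-plus / BFS steps inside each component remain correct and cheap, and ensuring the auxiliary-graph shortest-closed-walk really equals the true girth rather than a shorter ``cheating'' closed walk (e.g.\ one that uses the same component path twice, or degenerates into a length-$2$ back-and-forth). This is a bookkeeping issue: I would insist that each $H$-edge coming from component $T_i$ records both endpoints in $S$ and that a valid cycle in $G$ uses at most one such subpath per component, then argue that a shortest closed walk in $H$ can be assumed to use distinct component-edges (repeating one gives a strictly shorter walk or a closed sub-walk we can peel off), so the correspondence is tight. The remaining details—verifying the $O(k^{\omega-1}n)$ accounting for the per-component calls, and invoking \Cref{lemma:bfs-cycle} to turn length into an explicit cycle—are routine. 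Finally, the overall running time is $O(k^{\omega-1} n)$ for the per-component subroutines plus $O(k^\omega) = O(k^{\omega-1} n)$ (as $k \le n$) for the auxiliary graphs, giving the claimed bound.
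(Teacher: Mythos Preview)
Your approach is quite different from the paper's, and it has two genuine gaps.

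\textbf{Running time.} You claim to compute the girth of the weighted auxiliary graph $H$ (on $|S|\le k$ vertices, integer edge weights up to $O(k)$) in $O(k^{\omega})$ time ``using the known subcubic girth algorithm.'' But the $O(n^{\omega})$ girth algorithm of Itai and Rodeh is for \emph{unweighted} graphs. For minimum-weight cycle with integer weights bounded by~$W$ the best known bound is $\widetilde{O}(W n^{\omega})$, which here gives $\widetilde{O}(k^{\omega+1})$; this exceeds the target $O(k^{\omega-1}n)$ whenever $k^{2}>n$, so your algorithm is not adaptive. The same issue hits the parity double-cover in the odd case.

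\textbf{Correctness.} The correspondence between cycles in $G$ and closed walks in $H$ is not merely bookkeeping. Your assertion that ``a valid cycle in $G$ uses at most one such subpath per component'' is false: a simple cycle can enter and leave the same $T_i$ several times through different separator vertices, so its image in $H$ reuses one component with different endpoint pairs. Conversely, a short closed walk in $H$ that uses two $H$-edges drawn from the same component may unfold in $G$ to a closed walk whose only simple-cycle substructure lies entirely inside a single $G[S\cup T_i]$; it then witnesses nothing about multi-component cycles. Your ``peel off a repeated edge'' argument does not handle these cases.

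The paper sidesteps all of this with a much simpler idea. After handling cycles disjoint from $S$ by computing the girth of each $G[T_i]$, it observes that any remaining shortest cycle passes through one of at most $k$ vertices, so it simply runs the $O(n)$-time BFS of \Cref{lemma:bfs-cycle} from every $s\in S$, in $O(kn)$ total time; this already pins the girth down to within one. The remaining off-by-one is resolved by Itai and Rodeh's reduction to triangle detection in an \emph{unweighted} auxiliary graph that still admits a $2k$-separator, so triangle-finding there costs $O(k^{\omega-1}n)$. No weighted contraction, no parity split, no closed-walk-versus-cycle subtlety.
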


\begin{proof}
    First, we check whether~$G$ contains a triangle in~$O(k^{\omega -1} n)$ time.
    If so, then this triangle is clearly a shortest cycle.
    Thus, we assume in the following that~$G$ is triangle-free.

    We compute the girth of the graph $G[T_i]$ for every $i$, using the $O(n^\omega)$-algorithm from Itai and Rodeh~\cite{DBLP:journals/siamcomp/ItaiR78}.
    This takes $O(\ncc \cdot k^\omega) = (k^{\omega - 1} n)$ time.
    This covers the case that the shortest cycle is fully contained in $T_i$ for some $i \in [\ncc]$.

    Now we may assume that a shortest cycle intersects $S$. 
    For each vertex~$v \in S$, we run the algorithm of \Cref{lemma:bfs-cycle}.
    Let~$\ell_v$ the length of the cycle found when calling \Cref{lemma:bfs-cycle} for~$v \in S$, and let~$\ell_{\min} := \min_{v \in S} \ell_v$.
    If $\ell_{\min}$ is odd, then the corresponding cycle is already a shortest cycle in~$G$ by \Cref{lemma:bfs-cycle}.
    So assume that $\ell_{\min} $ is even.
    Itai and Rodeh~\cite{DBLP:journals/siamcomp/ItaiR78} reduced the problem of determining whether a shortest cycle in~$G$ has length $\ell_{\min}$ or $\ell_{\min} - 1$ to the problem of finding a triangle in an auxiliary graph~$H$.
    The graph~$H$ is constructed as follows, starting with a copy of~$G$:
    Let~$L:= \{v \in S \mid \ell_v = \ell_{\min}\}$.
    Then, for any~$v \in L$, we add a vertex~$v^*$ to~$H$ and connect~$v^*$ to all vertices in~$G$ which have distance exactly $(\ell_{\min}-1)/2$ from~$v$ (note that the algorithm from \Cref{lemma:bfs-cycle} computes these vertices).
    Note that $S \cup \{v^* \mid v\in L\}$ is a $2k$-separator.
    Consequently, finding a triangle in~$H$ can be done in $O(k^{\omega-1}n)$ time.
    Thus, the whole algorithm runs in~$O(k^{\omega-1}n)$~time.
\end{proof}

We continue with the even girth, that is, the length of a shortest even cycle.

\begin{restatable}{proposition}{propevencycle}\label{prop:even-cycle}
    Given a graph~$G$ and its $k$-separator~$S$, a shortest even cycle can be found in~$O(kn \alpha (n))$~time (if one exists), where $\alpha$ is the inverse Ackermann function.
\end{restatable}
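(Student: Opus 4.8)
The plan is to distinguish, for a shortest even cycle $C$, whether $C$ is contained in $G[S\cup T_i]$ for a single index $i$ (type A) or meets at least two distinct components (type B). For type A I would run a known $O(N^2)$-time algorithm for finding a shortest even cycle (Yuster and Zwick~\cite{DBLP:journals/siamdm/YusterZ97}) on each of the graphs $G[S\cup T_i]$. Since $|S\cup T_i|\in O(k)$ and there are $\ncc\in O(n/k)$ such graphs, this costs $O(k^2\cdot n/k)=O(kn)$ time in total, and it finds the optimum whenever the optimum is of type A --- in particular also when $C$ lies entirely inside one component.

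For type B, the first observation is that such a $C$ passes through at least two vertices of $S$: deleting a single vertex from a cycle leaves a connected path, which, if it avoided $S$, would be contained in one component. Hence $C$ decomposes into maximal subpaths between consecutive $S$-vertices, each such subpath being either a single edge of $G[S]$ or a path whose interior lies in one component. My approach would then be to replace every component $T_i$ by a small gadget $R_i$ on $O(|T_i|)$ vertices and edges that is \emph{faithful}: for all $u,v\in N(T_i)\cap S$ and all parities $p\in\{0,1\}$, the length of a shortest $u$--$v$ walk through $T_i$ of length $\equiv p\pmod 2$ is preserved, while $R_i$ creates no new short even cycles. The gadget must encode walks of both parities between all boundary pairs simultaneously, since even the optimal even cycle may enter and leave the same component twice (this already happens on small examples), so one cannot simply contract each component to a ``coloured clique'' that is used at most once. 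On the resulting graph~$G^\star$, which still has $S$ as an $O(k)$-separator, one then searches for a shortest even cycle meeting $S$; I expect this to be carried out by a breadth-first-search-based cycle-detection procedure in the style of \Cref{lemma:bfs-cycle}, whose bookkeeping of least common ancestors --- equivalently, of incremental merging of search subtrees --- is where a union--find data structure, and hence the $\alpha(n)$ factor, enters.

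The main obstacle is type B, and within it two things. First, the faithful gadgets must be built spending only $O(|V(G[T_i])|+|E(G[T_i])|)$ time per component, so that the construction amortizes to $O(m)=O(kn)$; in particular one cannot afford a separate breadth-first search for each boundary vertex of a component --- that would cost $\Omega(k)\cdot\sum_i|E(G[T_i])|$ in the worst case --- and must instead extract all needed parity-labelled distance information using only a constant number of (multi-source) breadth-first searches per component. Second, the cycle search on $G^\star$ must touch each edge only a bounded number of times rather than once per vertex of $S$ (which is again too expensive when $S$ is large), which likely forces a case distinction on the size of $k$, falling back on the unparameterized $O(n^2)$-time algorithm when $k$ is close to $n$. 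Proving that $G^\star$ has a shortest even cycle of exactly the same length as $G$, and that a cycle found in $G^\star$ lifts back to a genuine even cycle of $G$, is the technical heart of the argument.
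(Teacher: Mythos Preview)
Your plan is far more complicated than necessary, and the obstacles you identify are genuine --- but the paper sidesteps all of them by invoking a single black-box result you do not mention: Monien~\cite{DBLP:journals/computing/Monien83} gave an $O(n\alpha(n))$-time algorithm that, given a graph and a designated vertex $v$, finds a shortest even cycle \emph{through $v$}. (This is where the $\alpha(n)$ factor in the statement originates.) With this in hand the proof is two lines: for cycles disjoint from $S$, run the $O(N^2)$ Yuster--Zwick algorithm on each $G[T_i]$ for a total of $O(kn)$; for cycles meeting $S$, call Monien's algorithm once for each of the at most $k$ vertices of $S$, for a total of $O(k\cdot n\alpha(n))$.

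There is therefore no need for the gadget construction, no need to bound how many times a cycle can revisit a component, and no need for the delicate ``faithful replacement'' that must simultaneously preserve parity-labelled distances for all boundary pairs. Your type~A/type~B split (single $S\cup T_i$ versus two components) is also finer than required: the paper's split is simply ``meets $S$'' versus ``avoids $S$'', and Monien's algorithm handles the former case uniformly regardless of how many components the cycle traverses. The technical difficulties you anticipate --- building the gadgets in linear time per component, and searching $G^\star$ without paying $|S|$ times per edge --- are real for your approach, but they simply do not arise once Monien's result is used.
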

\begin{proof}
    We make a case distinction whether there is a shortest even cycle containing a vertex from~$S$ or not:
    If every shortest even cycle is disjoint from~$S$, then it suffices to check~$T_1, \dots, T_\ncc$ for a shortest even cycle.
    For each~$i \in [\ncc]$, a shortest even cycle can be found in $O(|T_i|^2) = O(k^2)$ by the algorithm from \citet{DBLP:journals/siamdm/YusterZ97}.
    As $\ncc \le n/k$, this case runs in $O(nk)$ time overall.

    It remains to consider the case that every shortest even cycle contains at least one vertex from~$S$.
    Monien~\cite{DBLP:journals/computing/Monien83} showed how to compute a shortest even cycle containing a given vertex~$v$ in~$O(n \alpha (n))$~time.
    For each vertex~$s \in S$, we compute the shortest even cycle containing~$s$ in~$O(n\alpha (s))$~time using Monien's algorithm.
    The running time spent for all vertices from~$S$ together is~$O(kn \alpha (n))$, implying the theorem.
\end{proof}

We conclude this section by showing that finding a (not necessarily induced) cycle $C_{\ell}$ of length~$\ell$ for constant~$\ell$ takes~$O(k^{\omega - 1} n)$~time.
This is an adaptation of the color-coding algorithm of Alon et al.~\cite{DBLP:journals/jacm/AlonYZ95}.

\begin{restatable}{proposition}{thmconstcycle}\label{thm:constant-cycle}
 Given a graph~$G$ and its $k$-separator~$S$, a cycle $C_{\ell}$ of length $\ell$ can be found in~$O(k^{\omega -1} n)$~time for any fixed $\ell$ if one exists.
\end{restatable}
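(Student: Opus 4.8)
The plan is to adapt the classical color-coding algorithm for finding a $C_\ell$ so that the expensive matrix-multiplication step is performed separately on each component of the decomposition, and then to argue that the contributions combine with only an $O(n/k)$ overhead. Recall that in the standard algorithm one picks a random coloring $c \colon V \to [\ell]$ and looks for a \emph{colorful} $C_\ell$, i.e.\ a cycle $v_1, \dots, v_\ell, v_1$ with $c(v_i) = i$; such a cycle exists with probability at least $\ell^{-\ell} = \Omega(1)$ whenever $G$ contains any $C_\ell$, so $O(\ell^\ell) = O(1)$ repetitions suffice, and derandomization via a perfect hash family costs only a $2^{O(\ell)} \log n$ factor if a deterministic algorithm is wanted. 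Given the coloring, the textbook approach is a dynamic program computing, for each color class $i$ and each vertex $v$ with $c(v)=i$, the set of colors reachable from a fixed start vertex on a colorful path; the crucial observation is that the number of colorful $v_1$-$v_\ell$ paths whose color set is exactly a prescribed $\{1,\dots,\ell\}\setminus\{\text{one color}\}$ can be read off from a product of $\ell-1$ restricted adjacency matrices, and closing the cycle is one more multiplication.

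First I would fix a colorful target: guess the two color classes of the endpoints of the ``opened'' path, say colors $1$ and $\ell$, and guess whether both endpoints lie in $S$, both lie in a single $T_i$, or one lies in $S$ and the other in $T_i$ (a colorful $C_\ell$ that meets two different $T_i$'s must pass through $S$, so any such cycle, when cut at an $S$-vertex, has both endpoints in $S \cup T_i$ for some single $i$, after re-rooting the path at an $S$-vertex if it has one). So it suffices to detect, for each $i$, a colorful $C_\ell$ inside $G[S \cup T_i]$, \emph{plus} colorful cycles that stay inside a single $T_i$ (handled directly in $O(k^\omega)$ per component), \emph{plus} the genuinely ``global'' cycles that thread through $S$ and several $T_i$'s. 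For the last type, since $|S| \le k$, I would run the path-counting DP but collapse each $T_i$ into its effect on $S$: for a fixed interval of consecutive colors to be realized inside $T_i$, the number of colorful walks entering $T_i$ at $u \in S$, traversing vertices of $T_i$ realizing exactly that color interval, and leaving at $u' \in S$ is an entry of a product of $O(\ell)$ matrices of size $O(k)\times O(k)$, computable in $O(k^\omega)$ time; summing over the $O(n/k)$ components and the $O(\ell^{O(\ell)}) = O(1)$ ways of distributing colors along the cycle among $S$ and the visited components gives $O(k^{\omega-1} n)$. Concatenating these per-component transfer matrices along $S$ is a product of $O(\ell)$ matrices of size $O(k)\times O(k)$, again $O(k^\omega) \subseteq O(k^{\omega-1}n)$.

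To turn ``count colorful walks'' into ``find a colorful cycle'' I would use the standard trick of working over $\mathrm{GF}(2^q)$ with $q = O(\log n)$ and assigning each vertex an independent random label, so that a nonzero entry in the relevant matrix product certifies (with probability $1 - n^{-\Omega(1)}$, by Schwartz--Zippel) the existence of a colorful $C_\ell$, and then doing the usual self-reduction (fix one vertex at a time, recompute, keep it if the product stays nonzero) to extract the actual cycle; each recomputation is again a bounded product of $O(k)\times O(k)$ blocks, so extraction costs another $O(k^{\omega-1}n)$. The main obstacle I anticipate is bookkeeping the case analysis cleanly: precisely formalizing ``a colorful $C_\ell$ either lives inside one $G[S\cup T_i]$ or is captured by the $S$-restricted transfer-matrix DP'', and making sure that when the cycle visits a component $T_i$ more than once (possible since $\ell$ is a constant $> $ the number of components it could revisit only if $k$ is tiny, but still a case to rule out or absorb) the DP over $S$ still accounts for it correctly. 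This is a finite case analysis driven by how the (constantly many) colors are partitioned along the cycle, so it is routine but needs care; everything else is a direct reuse of the decomposition bound $\ncc \le n/k + 1$ together with the $O(k^{\omega-1}n)$ cost of multiplying $k$-blocked matrices.
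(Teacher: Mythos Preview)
Your plan is correct and follows the same template as the paper: color-code, handle cycles inside each $G[S\cup T_i]$ in $O(k^{\omega})$ time, and for cycles that thread through several components reduce to a computation on~$S$ via per-component ``transfer matrices'' (powers of the color-respecting adjacency matrix restricted to $S\cup T_i$). The paper's implementation is cleaner in two respects, and both remove exactly the bookkeeping you flag as the main obstacle. First, it works with the directed graph~$H$ (arc $v\to w$ iff $vw\in E$ and $c(w)\equiv c(v)+1\bmod \ell$) and computes $M_i^{\ell'}$ on $H[S\cup T_i]$ rather than on $T_i$ alone, so $S$--$S$ steps are absorbed automatically and no guessing of ``how the colors are distributed among $S$ and the visited components'' is needed. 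Second, instead of counting over $\mathrm{GF}(2^q)$ with random labels, the paper builds a tiny auxiliary digraph~$H^*$ on $S$ (plus a copy $S_1$ of the color-$1$ vertices of $S$) with an arc $u\to v$ whenever some $M_i^{c(v)-c(u)}[u,v]\ne 0$, and then just tests $s$--$s_1$ reachability in $O(k^{\omega})$ time; since colors strictly increase along any walk in $H$ of length $<\ell$, walks are already simple paths, so nonzero entries certify existence directly and Schwartz--Zippel is unnecessary. Your worry about a component being revisited dissolves for the same reason: two visits through the same $T_i$ use disjoint color intervals, hence disjoint vertex sets, so concatenating the pieces still yields a simple cycle.
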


\begin{proof}
 First, we check whether~$T_i$ contains an $\ell$-cycle for some~$i$ in~$O (k^\omega)$ time~\cite{DBLP:journals/jacm/AlonYZ95}.
 This takes overall~${O(k^\omega \ncc) = O( k^{\omega -1} n)}$~time.
 If we find such a cycle, then we are done;
 so assume that each~$C_{\ell}$ contains at least one vertex from~$S$.
 
 We use color-coding and assign each vertex a color~$c(v)$ chosen from~$[\ell]$ uniformly at random.
 We assume without loss of generality that there is some $C_{\ell}$ containing a vertex from~$S$ colored with~1.
 From this coloring, we construct a directed graph~$H$ in which the vertex set is $V$ and there is an arc from $v$ to $w$ if and only if $vw \in E$ and $c(w) - c(v) \equiv 1 \mod \ell$.
 For each~$i \in [\ncc]$, let~$M_i$ be the adjacency matrix of $H[S \cup T_i]$.
 For every $u, v \in S$ and $\ell' \in [\ell]$, $M_i^{\ell'} [u, v] \neq 0$ if and only if there is a (directed) path of length exactly $\ell'$ from~$u$ to~$v$ in $H[S \cup T_i]$.
 We construct a graph~$H^*$ with vertex set~$S \cup S_1$ with~$S_1 = \{s_1 \mid s\in S \land c(s) = 1\}$ and an arc from $v$ to $w$ if $c (v) < c (w)$ and there is some~$i $ such that $M_i^{c(w) - c(v)} (v, w) \neq 0$.
 Further, $H^*$ contains an arc from $u$ to $s_1$ if there is some~$i$ such that~$M_i^{\ell - c(u)} (u, s) \neq 0$.
 We claim that $G$ contains a cycle~$C$ of length $\ell$ containing a vertex from~$S_1$ and obeying the color coding (i.e., the $i$-th vertex of~$C$ has color~$i$) if and only if~$H^*$~contains an~$s$-$s_1$-path for some~$s \in S_1$.
 Since we can decide in~$O (k^\omega) $ time which vertex is reachable from which other vertex in~$H^*$, we can decide the existence of an $s$-$s_1$-path in~$O (k^\omega) \subseteq O(k^{\omega -1} n)$ time.
 
 So assume that~$G$ contains a cycle~$C$ of length $\ell$ containing a vertex~$s^1 \in S_1$ and obeying the color coding.
 Let~$C = \langle s^1,P^1, s^2,P^2,s^3, \dots, s^j,P^j, s^1\rangle$ where $s^i \in S$ (but not necessarily $c(s^i) = i$) and the subpath~$P^i $ does not contain any vertex from~$S$.
 Clearly, all vertices of~$P^i$ are contained in some component~$T_{p_i}$.
 Hence, we have that~$M^{\ell -1}_{p_i} (s^i, s^{i+1}) \neq 0$ and $M^{\ell -1}_{p_j} (s^j, s^1) \neq 0$.
 It follows that $\langle s^1,s^2,s^3, \dots, s^j, s^1_1\rangle$ is an $s^1$-$s^1_1$-path in~$H^*$.
 
 For the reverse direction, let~$P = \langle s^1, s^2, s^3, \dots, s^1_1\rangle$ be an~$s^1$-$s^1_1$-path in~$H^*$.
 By the definition of~$H^*$, for each~$q \in [j]$, there exists some~$i_q$ such that $M_{i_q} ^{\ell-1} (s^q, s^{q+1}) \neq 0$ (or $M_{i_j}^{\ell -1} (s^j, s^1) \neq 0$ for~$j = q$).
 Consequently, there exists some~$s^q$-$s^{q+1}$-path~$P_q$ in~$S \cup T_{i_q}$.
 Note that by the definition of~$H$, the internal vertices of~$P_q$ have colors~$c(s^q) +1, c (s^q) + 2, \dots,  c(s^{q+1})-1$.
 By the definition of~$H^*$, we have that~$c (s^q) < c (s^{q+1})$.
 Consequently, the walk~$\langle s^1, P^1, s^2, P^2, \dots, s^1_1\rangle$ is a cycle of length~$\ell $ obeying the color coding.
\end{proof}

\section{Finding Small Induced Subgraphs}
\label{sec:subgraph}

In this section, we develop adaptive algorithms for finding four-vertex subgraphs.
There are eleven non-isomorphic graphs with 4 vertices:
the clique on four vertices ($K_4$) and its complement ($\overline{K_4}$), the diamond ($K_4 - e$) and its complement ($\overline{K_4} + e$), the claw ($K_{1, 3}$) and its complement ($\overline{K_{1, 3}}$), the paw ($K_{1,3} + e$) and its complement ($\overline{K_{1,3}} -e$), the cycle on four vertices ($C_4$) and its complement~($\overline{C_4}$), and the path on four vertices ($P_4$) (which is its own complement).
Here, $+e$ and $-e$ indicate the insertion of an edge or the deletion of any edge, respectively.
It is known that an induced~$P_4$ can be detected in linear time \cite{DBLP:journals/siamcomp/CorneilPS85}.
For $K_4$ and $\overline{K_4}$, the currently fastest algorithm runs in~$O(n^{3.257})$~\cite{DBLP:journals/tcs/EisenbrandG04,DBLP:conf/focs/Gall12} and for all other graphs, the best known algorithms run in~$O(n^\omega)$ time.
These running times are achieved by an algorithm by  Williams et al.~\cite{DBLP:conf/soda/WilliamsWWY15}.\footnote{We mention that there are a couple of four-vertex graphs for which algorithms with this running time were previously known, but the algorithm by Williams et al. gives a nice unifying approach. For this reason, we focus on their algorithm.}
The approach by Williams et al. for all graphs except the~$C_4$ and its compliment can be summarized as follows.
\begin{itemize}
    \item 
Let $G=(V,E)$ be an undirected graph and let~$A$ be its adjacency matrix.
Let $H = (V', E')$ be a four-vertex graph that is none of $K_4$, $\overline{K_4}$, $C_4$, $\overline{C_4}$.
There is an integer~$2 \leq q_H \leq 6$ such that if we can compute~$A^2[u,v]$ for every edge $uv \in E$ in time~$t$, then we can compute the number of induced copies of $H$ in $G$ modulo $q_H$ in~$O(n+m+t)$ time.
See~\cite[Lemma~4.1]{DBLP:conf/soda/WilliamsWWY15} for details.
(Some equations provided in \cite{DBLP:conf/soda/WilliamsWWY15} require $A^3[v]$ for every $v \in V$. However, this can be computed in~$O(m)$ time if~$A^2[u, v]$ is given for every edge $uv$.)

\item
Let~$q\geq 2$ be an integer and let~$G,H$ be two undirected graphs.
Let~$G'$  be an induced subgraph of $G$ obtained by independently deleting each vertex with probability $1/2$.
If~$G$ contains~$H=(V_H,E_H)$ as an induced subgraph, then the number of induced copies of~$H$ in~$G'$ modulo~$q$ is not 0 with probability at least~$2^{-|V_H|}$.
(For our applications, $|V_H| = 4$, so this probability is at least $1/16$.)
\end{itemize}

We show that when a $k$-separator is given, one can test in $O(k^{\omega -1} n)$ time whether there is an induced copy of $H$ for each four-vertex graph $H$ except for $K_4$ and $\overline{K_4}$.
We start with all graphs except for $K_4$, $\overline{K_4}$, $C_4$, $\overline{C_4}$.
Using the framework by Williams et al.~\cite{DBLP:conf/soda/WilliamsWWY15}, it suffices to show how to compute~$A^2[u,v]$ for every edge~$\{u,v\}$.
Clearly, it requires $\Omega(n^2)$ time to compute the square~$A^2$.
Our key observation is that the relevant part of~$A^2$ can be computed in~$O(k^{\omega-1} \cdot n)$ time.
(Incidentally, $A^2$ can be computed in $O(k^{\omega - 2} n^2)$ time; see \Cref{lem:multiplication-with-arbitrary-matrix}.)

\begin{lemma}
    \label{adjacencyMatrix}
    Given a graph~$G=(V,E)$ and a $k$-separator $S$ for~$G$, we can compute~$A^2[u,v]$ for every edge~${uv \in E}$ in~$O(k^{\omega-1} n)$ time.
\end{lemma}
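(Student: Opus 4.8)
The plan is to exploit the block form of the adjacency matrix in~\eqref{eq:adj-matrix} and compute the needed entries of $A^2$ block by block. First I would split $E$ into three groups according to how many endpoints of an edge lie in $S$. Since there is no edge between $T_i$ and $T_j$ for $i \neq j$, every edge with no endpoint in $S$ has both endpoints in a common block $T_i$, so the three groups are: both endpoints in $S$; exactly one endpoint in $S$ and the other in some $T_i$; and both endpoints in the same $T_i$.

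For each $i \in [\ncc]$, let $B_i := A[S \cup T_i]$ be the adjacency matrix of $G[S \cup T_i]$. This is an $O(k) \times O(k)$ matrix, so $B_i^2$ can be computed in $O(k^{\omega})$ time, and since $\ncc \le n/k + 1$, computing $B_i^2$ for all $i$ costs $O(k^{\omega - 1} n)$ time in total. The key point is that for $v \in T_i$, all neighbors of $v$ in $G$ lie in $S \cup T_i$; hence for any edge $uv$ with $v \in T_i$, every common neighbor of $u$ and $v$ in $G$ is a neighbor of $v$ and so already lies in $S \cup T_i$. Consequently $A^2[u,v] = B_i^2[u,v]$ for every edge $uv$ with at least one endpoint in $\bar{S}$, which settles the second and third groups for free (reading off the relevant entries takes $O(m)$ time).

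It remains to handle the edges $uv$ with $u, v \in S$. Now a common neighbor of $u$ and $v$ may lie in $S$ or in any of $T_1, \dots, T_{\ncc}$, so a single $B_i$ no longer captures $A^2[u,v]$. Writing the $S \times S$ block of $B_i^2$ as $(B_i^2)[S,S] = \gamma^2 + \beta_i \beta_i^T$ (the first summand counting common neighbors inside $S$, the second those inside $T_i$), I would compute $\gamma^2$ once in $O(k^{\omega})$ time and then use
\[
A^2[u,v] \;=\; \gamma^2[u,v] + \sum_{i=1}^{\ncc} (\beta_i \beta_i^T)[u,v] \;=\; \gamma^2[u,v] + \sum_{i=1}^{\ncc} \bigl( (B_i^2)[u,v] - \gamma^2[u,v] \bigr).
\]
The sum on the right ranges over $\ncc$ matrices of size $O(k) \times O(k)$ that are already available, so it can be evaluated in $O(\ncc \cdot k^2) = O(nk) \subseteq O(k^{\omega - 1} n)$ time using $\omega \ge 2$. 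Altogether the running time is $O(k^{\omega - 1} n)$.

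The only point that needs care — and what I would call the main obstacle — is precisely this last case: because common neighbors of two vertices of $S$ can be spread over all $\Theta(n/k)$ blocks, one has to aggregate per-block contributions rather than read off a single small matrix, and one must subtract the $\gamma^2$ term each time so that common neighbors lying inside $S$ are not counted once per block. Everything else is the routine observation that a product of $O(k) \times O(k)$ matrices costs $O(k^{\omega})$ time and that there are only $O(n/k)$ blocks.
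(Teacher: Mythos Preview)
Your proof is correct and follows essentially the same approach as the paper: both exploit the block structure of $A$ from \eqref{eq:adj-matrix} to compute only the relevant $O(k) \times O(k)$ sub-products, with the paper writing out the blocks $\zeta = \gamma^2 + \beta\beta^T$, $\eta_i = \gamma\beta_i + \beta_i\alpha_i$, $\delta_i = \beta_i^T\beta_i + \alpha_i^2$ directly, while you obtain the same quantities by squaring $B_i = A[S \cup T_i]$ and aggregating. The only cosmetic difference is that you compute $\gamma^2$ redundantly inside each $B_i^2$ and then subtract it out, whereas the paper computes it once; this does not affect the $O(k^{\omega-1}n)$ bound.
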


\begin{proof}
    We use the decomposition $(S; T_1, \ldots, T_{\ncc})$ described in \Cref{sec:prelim} and suppose that the adjacency matrix $A$ has the form given in \Cref{eq:adj-matrix}.
Note that

\begin{minipage}{.3\textwidth}
\begin{align*}
    A^2 = \begin{blockarray}{ccccc}
        & S & T_1 & \cdots & T_{\ncc} \\
        \begin{block}{c[cccc]}
            S & \zeta & \eta_1 & \cdots & \eta_\ncc \\
            T_1 & \eta_1^T & \delta_1 & \cdots & - \\
            \vdots & \vdots & \vdots & \ddots & \vdots \\
            T_{\ncc} & \eta_\ncc^T & - & \cdots & \delta_{\ncc} \\
        \end{block}
    \end{blockarray}, \quad \text{ where}
\end{align*}
\end{minipage}
\begin{minipage}{.3\textwidth}
    \begin{align*}
    \zeta &= \gamma^2 + \beta \beta^T,\\
    \eta_i &= \gamma  \beta_i + \beta_i  \alpha_i, \text{ and }\\
    \delta_i &= \beta_i^T  \beta_i + \alpha_i^2.
\end{align*}
\end{minipage}

Note that computing $\zeta$ takes $O(\ncc \cdot k^\omega) = O(k^{\omega-1}n)$ time and computing each of the $O(\ncc)$ submatrices $\eta_i$ or $\delta_i$ takes $O(k^\omega)$ time.
The $-$ represents pairs where the corresponding vertices belong to different~$T_i$ and are therefore non-adjacent.
We thus do not need to compute these values.
Thus, we can compute all relevant values in $O(k^{\omega-1}n)$ time.
\end{proof}

By \Cref{adjacencyMatrix}, an induced copy of $H \notin \{K_4, \overline{K_4}, C_4, \overline{C_4}\}$ can be detected in~$O(\iota^{\omega - 1} n)$~time by a randomized algorithm.
We show next that $C_4$ and $\overline{C_4}$ can also be detected in~$O(\iota^{\omega - 1} n)$ (deterministic) time.

\begin{restatable}{proposition}{inducedCycle}
    \label{inducedCycle}
    Given a graph $G=(V,E)$, a $k$-separator for~$G$, and a graph~$H \in \{C_4,\overline{C_4}\}$, we can test whether~$G$ contains~$H$ as an induced subgraph in~$O(k^{\omega-1}n)$~time.
\end{restatable}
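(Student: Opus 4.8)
The plan is, for both $H = C_4$ and $H = \overline{C_4}$, to first absorb every induced copy that is confined to a single $S \cup T_i$, and then to show that the remaining copies -- those whose vertices meet at least two of the sets $T_1,\dots,T_\ncc$ -- fall into a short list of patterns, each detectable with one small matrix product per component. For the first step I would run the $O(n^\omega)$-time induced-subgraph algorithm of Williams et al.~\cite{DBLP:conf/soda/WilliamsWWY15} on $G[S \cup T_i]$ for every $i$ (and on $G[S]$ itself if $\ncc = 0$); since $|S \cup T_i| = O(k)$ this costs $O(\ncc \cdot k^\omega) = O(k^{\omega-1}n)$ time, and it is correct because the subgraph induced by any four vertices of $S \cup T_i$ is the same in $G$ as in $G[S \cup T_i]$. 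From now on we search only for induced copies of $H$ meeting two distinct components, and we repeatedly use that vertices lying in distinct $T_i, T_j$ are non-adjacent, both to prune the possible patterns and to batch the relevant queries.

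For $H = C_4$, a case analysis on how the four vertices split between $S$ and the components leaves exactly one pattern: a non-edge $w_1 w_2$ with $w_1, w_2 \in S$, together with vertices $x \in T_i$ and $y \in T_j$ for some $i \neq j$ that are each common neighbours of $w_1$ and $w_2$ (so that $w_1, x, w_2, y$ induce a $C_4$ in this cyclic order). Any component contributing two or more vertices, or a third component, would force more non-edges than $C_4$ has, and the ``$(2,1)$'' split is impossible because the vertex alone in its component would then have degree at most one. To detect this pattern I would compute, for each $i$, the $S \times S$ matrix $\beta_i\beta_i^{T}$ (its $(w_1,w_2)$-entry counts common neighbours of $w_1,w_2$ in $T_i$) in $O(k^\omega)$ time, replace every entry by $\min(1,\cdot)$, sum the results over $i$ to obtain a matrix $N$, and answer \yes{} iff $N[w_1,w_2] \geq 2$ for some non-adjacent pair $w_1,w_2 \in S$. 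The products cost $O(k^{\omega-1}n)$ in total and the remaining bookkeeping is $O(kn)$.

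For $H = \overline{C_4}$, i.e.\ two disjoint edges $ab, cd$ with the four cross pairs all non-edges, the same analysis leaves three patterns: \textbf{(i)} the two edges lie inside two distinct components; \textbf{(ii)} one edge is incident to a vertex $w \in S$ and the other edge lies inside a component avoiding $N(w)$; \textbf{(iii)} the two edges are incident to distinct non-adjacent vertices $w_1,w_2 \in S$ and their other endpoints lie in distinct components. Pattern (i) is simply ``at least two components contain an edge'', decidable in $O(m)$ time. For (iii), I would compute for each $i$ the product $\beta_i\bar\beta_i^{T}$, where $\bar\beta_i$ is the entrywise complement of $\beta_i$; its $(w_1,w_2)$-entry counts the vertices $b \in T_i$ with $b \sim w_1$ and $b \not\sim w_2$, so after thresholding and recording, for each ordered pair $(w_1,w_2)$, up to two witnessing component indices, I can check for each non-adjacent pair whether there are distinct components witnessing $(w_1,w_2)$ and $(w_2,w_1)$. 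For (ii), the number of edges of $G[T_j]$ avoiding $N(w)$ equals half the $(w,w)$-entry of $\bar\beta_j\,\alpha_j\,\bar\beta_j^{T}$, so computing the diagonal of this matrix once per component (in $O(k^\omega)$ time) yields, for all $w \in S$ simultaneously, which components contain an edge avoiding $N(w)$; together with the easily-computed lists of components in which each $w$ has a neighbour, the condition ``such an edge exists in a component other than one in which $w$ has a neighbour'' is checked per $w$ in $O(1)$ time. All three patterns are thus handled in $O(k^{\omega-1}n)$ time.

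The step I expect to be the main obstacle is pattern (ii): the direct approach -- for every $w \in S$ and every component $T_j$, test whether $T_j \setminus N(w)$ contains an edge -- costs $\Theta(\ncc \cdot k^2) = \Theta(k^2 n)$, which is too slow. The fix, in the spirit of \Cref{adjacencyMatrix}, is to pack the queries for all $w \in S$ into the single product $\bar\beta_j\alpha_j\bar\beta_j^{T}$ per component; once this is in place, the rest is routine case checking together with maintaining a constant amount of witness data per pair of vertices in $S$, and the only genuine content is verifying that the three patterns above, together with the per-component calls, exhaust all induced copies of $H$.
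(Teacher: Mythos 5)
Your proof is correct, and the overall strategy—absorb all single-component copies by running the $O(n^\omega)$-time algorithm on each $G[S\cup T_i]$, then detect cross-component copies with one $O(k^\omega)$-time matrix product per component—matches the paper's. For $C_4$ the two arguments are essentially identical: both reduce the cross-component case to a non-adjacent pair in $S$ with common neighbours in two distinct components, read off from $\beta_i\beta_i^T$. For $\overline{C_4}$, however, your decomposition is genuinely different. The paper first notes that if two components each induce an edge we are done, so we may assume at most one component (say $T_1$) has an edge, and then blanket-checks $G[S\cup T_1\cup T_i]$ for every $i$; this single $O(k^\omega\ncc)$-time pass silently absorbs your pattern (ii) together with several boundary cases, leaving only the private-neighbour pattern (your (iii)). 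You instead enumerate three explicit cross-component shapes and detect (ii) directly via the diagonal of $\bar\beta_j\alpha_j\bar\beta_j^T$, which is a nice way to batch the otherwise $\Theta(k^2 n)$-time ``does $T_j\setminus N(w)$ contain an edge'' scans into $O(k^{\omega-1}n)$ total; the trade-off is a longer exhaustiveness argument (which you correctly assert but do not fully spell out). A further cosmetic difference is that for private neighbours the paper compares $|N(u)\cap T_i|$ against the already-computed common-neighbour count $(\beta_i\beta_i^T)[u,v]$ rather than forming $\beta_i\bar\beta_i^T$, but both take $O(k^\omega)$ per component. Both routes give the claimed $O(k^{\omega-1}n)$ bound.
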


\begin{proof}
    First, observe that if the adjacency matrix has the form of \Cref{eq:adj-matrix}, the number of common neighbors of $u, v \in S$ in $T_i$ (i.e., $|N(u) \cap N(v) \cap T_i|$) equals $(\beta_i \beta_i^T)[u, v]$ since $${(\beta_i \beta_i^T)[u, v] = \beta_i[u] \beta_i^T[v] = \sum_{w \notin S} A[u, w] A[v, w]}.$$
    Thus, we can compute these values in ${O(k^{\omega} \nu ) = O(k^{\omega - 1} n)}$ time.
    
    We start with $C_4$.
    We can check in $O(k^{\omega-1} n)$ time whether there exists a component~$T_i$ in~$G-S$ such that $G[S \cup T_i]$ contains a $C_4$ by running the $O(n^\omega)$-time algorithm on $G[S \cup T_i]$ (which contain $O(k)$ vertices each) for each $i \in [\ncc]$.
    It remains to check whether there is an induced $C_4$ containing vertices from different components.
    Note that in this case there need to be exactly two vertices in~$S$ which are not adjacent and have common neighbors in two different components $T_i$ and $T_j$.
    Since~$\beta_i \beta_i^T$ has been computed for all $i \in [\ncc]$, we can simply check in $O(k^2 \ncc) = O(kn)$ time whether there are two non-adjacent vertices $u, v$ in $S$ with $(\beta_i \beta_i^T)[u, v] \ne 0$ and $(\beta_j \beta_j^T)[u, v] \ne 0$ for $i < j \in [\ncc]$.

    We continue with $\overline{C_4}$, which is a graph with two edges not sharing an endpoint.
    If there are two components which each induces at least one edge, then there is a $\overline{C_4}$.
    Hence, we may assume that only $T_1$ induces an edge.
    We then check in $O(k^\omega \ncc)$ time whether there is a component $T_i$ such that $G[S \cup T_1 \cup T_i]$ contains a $\overline{C_4}$ using an $O(n^{\omega})$-time algorithm.
    Suppose that no $\overline{C_4}$ is detected. 
    Then, $G$ contains $\overline{C_4}$ only if there are two nonadjacent vertices $u,v \in S$ and two components $T_i$ and $T_j$ such that there are vertices $x \in T_i$ and $y \in T_j$ with $xu \in E$, $xv \notin E$, $yu \notin E$, $yv \in E$.
    We call $x$ and $y$ private neighbors of $u$ and $v$, respectively. 
    With $u, v \in S$ fixed, whether $u$ has a private neighbor in a component~$T_i$  can be checked in constant time by comparing $|N(u) \cap T_i|$ and $(\beta_i \beta_i^T)[u, v]$:
    Since $(\beta_i \beta_i^T)[u, v]$ is the number of common neighbors of $u$ and $v$ in $T_i$, vertex~$u$ has a private neighbor if and only if $|N(u) \cap T_i| > (\beta_i \beta_i^T)[u, v]$.
    Observe that there are $i, j\in [\ncc]$ such that $u$ has a private neighbor in~$T_i$ and $v$ has a private neighbor in~$T_j$ if and only if each of $u$ and $v$ has a component~$T_i$ with a private neighbor in it and there are at least two components in which $u$ or $v$ has a private neighbor.
    Thus, we can find private neighbors of $u$ and $v$ (if they exist) in overall $O(k^{\omega-1}n)$ time.
\end{proof}

Thus, we obtain the following.

\begin{theorem}\label{thm:subgraphs-main}
    Given a graph~$G$, a $k$-separator, and a graph $H$ with four vertices that is not $K_4$ or~$\overline{K_4}$, we can test whether $G$ contains an induced copy of $H$ in randomized $O(k^{\omega - 1} n)$ time.
\end{theorem}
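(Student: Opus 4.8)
The statement follows by assembling the ingredients developed in this section, so the plan is essentially bookkeeping, split according to whether $H \in \{C_4, \overline{C_4}\}$ or not. For $H \in \{C_4, \overline{C_4}\}$ I would simply invoke \Cref{inducedCycle}, which already provides a deterministic $O(k^{\omega-1}n)$-time algorithm. For $H = P_4$ the linear-time algorithm of Corneil et al.\ cited at the start of the section suffices, and the cases $H = K_4$ and $H = \overline{K_4}$ are explicitly excluded. So the only real work is for the remaining four-vertex graphs $H$, where I would use the framework of Williams et al.\ recalled at the beginning of this section together with \Cref{adjacencyMatrix}.

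For such $H$, the first point to note is that a $k$-separator $S$ for $G$ is also a $k$-separator for every induced subgraph of $G$: deleting vertices cannot enlarge $|S|$ and can only shrink the connected components of $G - S$. Hence \Cref{adjacencyMatrix} applies verbatim to any induced subgraph of $G$ with the same separator $S$. The algorithm then repeats the following a constant number of times. Form $G'$ by deleting each vertex of $G$ independently with probability $1/2$; using \Cref{adjacencyMatrix} on $G'$ with the restriction of $S$, compute the relevant entries $A'^2[u,v]$ for every edge $uv$ of $G'$ in $O(k^{\omega-1}n)$ time; feed these into the procedure of Williams et al.\ to obtain the number of induced copies of $H$ in $G'$ modulo $q_H$ in additional $O(n+m) = O(kn)$ time (using $m = O(kn)$); report ``yes'' iff this residue is nonzero. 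If $G$ contains no induced copy of $H$, neither does $G'$, so every iteration correctly reports ``no''. If $G$ does contain an induced copy of $H$, then a single iteration reports ``yes'' with probability at least $2^{-4} = 1/16$, so a constant number of independent repetitions drives the success probability above any desired constant. Each iteration costs $O(k^{\omega-1}n)$ time and there are $O(1)$ of them, giving total time $O(k^{\omega-1}n)$; taking $k = \iota$ (or any $k \in O(\iota)$) yields the claimed bound.

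The one place that needs a moment of care — and it is the closest thing to an obstacle — is that the subsampling step changes the graph, so one cannot reuse a single precomputed $A^2$ across iterations but must re-run \Cref{adjacencyMatrix} on each sampled graph $G'$. This is harmless precisely because $S$ remains a valid $k$-separator for $G'$, so the per-iteration cost is unchanged; everything else is routine verification that the cited framework is applied correctly.
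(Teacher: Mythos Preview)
Your proposal is correct and follows essentially the same approach as the paper: invoke \Cref{inducedCycle} for $H\in\{C_4,\overline{C_4}\}$, and for the remaining cases plug \Cref{adjacencyMatrix} into the Williams et al.\ framework with the random-subsampling trick (your observation that the $k$-separator survives passing to induced subgraphs is exactly the point that makes the per-iteration cost work). The only cosmetic difference is that you single out $P_4$ via the linear-time cograph recognition algorithm, whereas the paper simply lets $P_4$ fall into the Williams et al.\ case; either is fine.
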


We can also find an induced copy with a constant overhead using a standard self-reduction.

\begin{restatable}{corollary}{corfind}\label{cor:find}
    Given a graph, a $k$-separator, and a graph $H$ with four vertices that is not $K_4$ or~$\overline{K_4}$, we can find an induced copy of $H$ in $O(k^{\omega - 1} n)$ time if it exists.
\end{restatable}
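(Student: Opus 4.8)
The plan is to boost the randomized detection procedure of \Cref{thm:subgraphs-main} into a search procedure by a standard ``shrinking'' self-reduction, organized so that the overhead is a constant factor rather than a logarithmic one. Throughout, I maintain the invariant that the current graph~$G$ contains an induced copy of~$H$, together with a~$(\le k)$-separator of~$G$; this holds after one initial call to \Cref{thm:subgraphs-main} (if that call reports that no induced copy exists, we return that answer). In a single round, as long as~$G$ has at least five vertices, I partition~$V(G)$ into five groups~$P_1, \dots, P_5$, each of size~$\lceil |V(G)|/5 \rceil$ or~$\lfloor |V(G)|/5 \rfloor$. Since any fixed induced copy of~$H$ uses only four vertices, it is disjoint from at least one group~$P_j$; hence for that~$j$ the graph~$G - P_j$ still contains an induced copy of~$H$, and running the detector on~$G - P_j$ (using~$S \cap V(G - P_j)$ as separator, which is still a valid~$(\le k)$-separator, and recomputing the decomposition~$(S; T_1, \dots, T_\ncc)$ in linear time) reports ``yes'' with at least some absolute constant probability. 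I therefore run the detector on each of~$G - P_1, \dots, G - P_5$; if some call reports ``yes'' I replace~$G$ by the corresponding~$G - P_j$ and proceed to the next round, and if all five report ``no'' I simply repeat the round with fresh randomness. Once~$|V(G)| = 4$, the four surviving vertices must induce~$H$, and I output them.

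Correctness follows from the one-sided error of the detector: if a graph contains zero induced copies of~$H$, then the count computed via \cite{DBLP:conf/soda/WilliamsWWY15} is zero, hence~$0$ modulo~$q_H$, so a ``yes'' answer is always correct. Thus the invariant is preserved whenever we act, and~$|V(G)|$ never drops below four since~$G$ always contains an induced copy of~$H$; a short check shows each successful round keeps~$|V(G)| \ge 4$ and strictly decreases it, so the process terminates with exactly four vertices. For the running time, a round on a graph with~$n_i$ vertices succeeds with at least constant probability, so the expected number of repetitions per round is~$O(1)$; each repetition costs~$O(k^{\omega - 1} n_i)$ time by \Cref{thm:subgraphs-main} (when~$n_i < k$ the detector still runs in~$O(n_i^{\omega}) \subseteq O(k^{\omega - 1} n_i)$ time, and recomputing the decomposition costs~$O(n_i)$). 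A successful round shrinks~$|V(G)|$ by a factor of at most~$4/5$ (up to an additive constant), so if~$n_0 \ge n_1 \ge \cdots$ are the sizes in successive rounds then~$n_i = O((4/5)^i n)$, and the total expected time telescopes to~$\sum_i O(k^{\omega - 1} n_i) = O(k^{\omega - 1} n)$.

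The one point requiring care — and the only place a naive self-reduction would lose a~$\log n$ factor — is precisely this last step: a single round on the full graph already costs~$\Theta(k^{\omega - 1} n)$, and there are~$\Theta(\log n)$ rounds, so the saving comes entirely from the per-round cost being linear in the current vertex count and that count decaying geometrically. The remaining ingredients are routine: that the one-sided error lets us avoid amplifying individual detector calls (we just retry a failed round, keeping the expected number of calls per round~$O(1)$); that restricting~$S$ to the surviving vertices keeps a~$(\le k)$-separator; and that the decomposition is recomputed in linear time after each deletion. If a worst-case rather than expected bound is desired, one caps the repetitions per round at~$\Theta(\log n)$ (a round failing this many times then certifies, with probability~$1 - n^{-\Omega(1)}$, that no induced copy survives), trading the expectation for an~$n^{-\Omega(1)}$ failure probability and an extra~$\log n$ factor in the worst case.
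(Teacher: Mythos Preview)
Your proposal is correct and follows essentially the same self-reduction as the paper: partition the current vertex set into five equal parts, use the one-sided detector on each $G - P_j$, and recurse on a part that answers ``yes'', exploiting the geometric shrinkage so the costs telescope to $O(k^{\omega-1}n)$ in expectation. The only notable difference is in converting the expected bound to a worst-case one: the paper simply aborts the entire search after $\Theta(k^{\omega-1}n)$ steps and applies Markov's inequality to bound the failure probability by a constant, thereby matching the stated $O(k^{\omega-1}n)$ bound without the extra $\log n$ factor your per-round cap incurs.
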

\begin{proof}
    We use a self-reduction:
    First, we apply the detection algorithm from \Cref{thm:subgraphs-main} in such a way that the error probability is at most $1/4$.
    If this algorithm determines that there is no induced copy of~$H$ in~$G$, then our algorithm concludes so.
    Otherwise we know that there is an induced copy of~$H$ in~$G$ and employ the following algorithm to find one:
    We arbitrarily divide the vertex set in five parts $V_1, \dots, V_5$ of equal size.
    We apply the detection algorithm on $G - V_i$ in a round-robin manner for all $i\in [5]$ (and repeat it in the case of failure).
    As soon as $H$ is detected in $G - V_i$ for some $i \in [5]$, we recursively search the subgraph $G - V_i$.
    As there exists an induced copy of~$H$ in $G - V_i$ for some $i\in [5]$ and the detection algorithm has a constant error probability, it follows that the expected running time before the recursion is $O(k^{\omega - 1} n)$. 
    This yields the recurrence relation
    $T(n, k) = T(\frac{4}{5} n, k) + O(k^{\omega - 1}n)$ (note that the $k$-separator in~$G$ is also a $k$-separator in $G - V_i$) for the expected running time.
    Solving the recurrence, the expected running time for finding an induced copy of $H$ is $O(k^{\omega - 1} n)$.
    To guarantee a running time of $O(k^{\omega -1} n)$, we stop the finding algorithm after $4c k^{\omega -1}n$ many steps and (falsely) conclude that there is no $H$ in~$G$.
    
    Now we analyze the success probability.
    If the algorithm returns an induced copy of~$H$, then the algorithm is always correct.
    The algorithm may falsely report that there is no induced copy of~$H$ only in two cases:
    (i) the detection algorithm used in a first step was wrong, which happens with probability $1/4$, or 
    (ii) the finding algorithm might take more than $4c k^{\omega -1}n$ steps.
    By Markov's inequality, this happens with probability~$1/4$.
    Thus, the error probability of the algorithm is~$1/2$.
\end{proof}

Finally, let us also remark on the detection of cliques $K_{\ell}$ and independent sets $\overline{K_{\ell}}$.
Let~$t_{\ell}(n)$ be the time complexity of finding $K_{\ell}$ (or $\overline{K_{\ell}}$) on an $n$-vertex graph.
(It is known, for instance, that~$t_4(n) \in O(n^{3.257})$ \cite{DBLP:journals/tcs/EisenbrandG04} using fast rectangular matrix multiplication \cite{DBLP:conf/focs/Gall12}.)
Since any $K_{\ell}$ must be fully contained in $G[S \cup T_i]$ for some $i$, it can be detected in $O(n \cdot \nicefrac{t_{\ell}(\iota)}{\iota})$ time.
For the detection of $\overline{K_{\ell}}$, note that if $n / \iota \ge \ell$, then there is an independent set of size~$\ell$.
Thus, we may assume that~$n \le \iota \ell$.
For constant $\ell$, we can thus find $\overline{K_{\ell}}$ in~$O(t_{\ell}(\iota))$ time.

\section{Matching}
\label{sec:matching}

In this section, we show the following two results.

\begin{restatable}{theorem}{matching} \label{theorem:matching}
    Given a graph and a $k$-separator, we can find a maximum matching in randomized~$O(k^{\omega - 1} n)$ time.
\end{restatable}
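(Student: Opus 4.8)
The plan is to reduce the problem of finding a maximum matching to a sequence of perfect-matching/rank computations on the decomposition $(S; T_1, \dots, T_\nu)$, adapting Harvey's recursive algorithm~\cite{DBLP:journals/siamcomp/Harvey09} so that each recursive call works on a submatrix of the Tutte matrix whose ``inside'' block is block-diagonal with blocks of size $O(k)$. Throughout, the field is $\GF(2^q)$ with $q = \Theta(\log n)$, so by the Schwartz–Zippel lemma a random substitution of values for the edge variables preserves rank with high probability, and I will work with the resulting numerical skew-symmetric matrix $A$ having the block form of~\Cref{eq:adj-matrix}.

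First I would handle the decision version: does $G$ have a matching of a prescribed size, and in particular a perfect matching. By \Cref{lemma:pfaffian} this is testing nonsingularity of $A$. Writing $A = \left(\begin{smallmatrix}\gamma & \beta\\ \beta^T & \alpha\end{smallmatrix}\right)$ with $\alpha = \operatorname{diag}(\alpha_1,\dots,\alpha_\nu)$, I first test each $\alpha_i$ for nonsingularity; if some $\alpha_i$ is singular I restrict to a column basis $X_i$ of $\alpha_i$ (so $\alpha_i[X_i]$ is nonsingular by \Cref{lemma:ss-basis}) and the ``leftover'' rows of $T_i$ must be matched into $S$. Having made every diagonal block invertible (after recording which vertices of $\bar S$ are forced toward $S$), the Schur complement of $\alpha$ in $A$ is $\gamma - \beta \alpha^{-1}\beta^T = \gamma - \sum_i \beta_i \alpha_i^{-1}\beta_i^T$, a $k\times k$ skew-symmetric matrix computable in $O(k^\omega \nu) = O(k^{\omega-1}n)$ time (each $\alpha_i^{-1}$ and each product costs $O(k^\omega)$). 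Then $A$ is nonsingular iff this Schur complement is, and the rank of $A$ equals $2\sum_i \operatorname{rank}\alpha_i / 2$-type bookkeeping plus the rank of the Schur complement — more precisely $\operatorname{rank} A = \operatorname{rank}\alpha + \operatorname{rank}(\gamma - \beta\alpha^+\beta^T)$ after passing to bases, which by Lovász's theorem gives twice the maximum matching size. This already yields the size of a maximum matching in $O(k^{\omega-1}n)$ time.

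For \emph{finding} a maximum matching I would follow Harvey's divide-and-conquer on the Tutte matrix, which maintains the inverse of a nonsingular principal submatrix and, at each step, decides for a pair $\{u,v\}$ whether the edge $uv$ can be included by checking whether $(A^{-1})[u,v] \ne 0$, then updates $A^{-1}$ by a rank-2 (Sherman–Morrison–Woodbury) correction. The key structural point is that the submatrices arising in the recursion, after reordering so that $S$ comes first, keep the block-diagonal-plus-border shape: the recursion can be arranged to recurse \emph{separately on each $T_i$} (size $O(k)$, cost $O(k^\omega)$ each, $O(k^{\omega-1}n)$ total) for edges inside a component, and then run one instance of Harvey's algorithm on the $O(k)\times O(k)$ Schur complement for the $S$-side edges, with updates propagated back through the Woodbury formula. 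Since every update touches only the border blocks and one diagonal block, each update costs $O(k^\omega)$ and there are $O(n/k)$ of them for the within-component phase plus $O(k)$ for the $S$-phase, for $O(k^{\omega-1}n)$ overall. The vertices of $\bar S$ that were forced to be matched into $S$ in the decision phase are incorporated by first fixing those edges (each fixing is again a rank-2 update on a single $\beta_i$/$\alpha_i$ pair) and then matching the rest.

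The main obstacle, and where I would spend the most care, is verifying that Harvey's recursion can genuinely be forced to respect the separator structure: his algorithm recurses on arbitrary index subsets, and I must choose a recursion tree whose internal nodes split along the partition $S \,\dot\cup\, T_1 \,\dot\cup\, \dots \,\dot\cup\, T_\nu$ so that (i) no recursive subproblem ever mixes two distinct components $T_i, T_j$ except through $S$, and (ii) the partial inverse maintained at the top level is only ever the $O(k)\times O(k)$ Schur complement plus the $O(k)$-sized diagonal inverses, never a dense $n\times n$ object. This requires re-deriving the correctness of the ``lazy update'' step in terms of Schur complements and checking that the bounded-size blocks are preserved by each Woodbury correction — essentially showing that the algorithm only ever inverts matrices of the form $\alpha_i$ or the $k\times k$ Schur complement. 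I expect this to go through because $\beta$ has only $k$ nonzero rows, so any update confined to rows/columns of a single $T_i$ plus $S$ stays within an $O(k)\times O(k)$ world; making that precise, and bounding the number of updates by $O(n/k + k)$, is the technical heart of the proof.
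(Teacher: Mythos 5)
Your high-level plan closely matches the paper's: use the decomposition from \Cref{eq:adj-matrix}, reduce the decision/rank question to Schur complements, and adapt Harvey's algorithm so that \textsc{DeleteEdgesCrossing} is only ever run on $O(k)\times O(k)$ blocks. You also correctly identify that the whole difficulty is making Harvey's recursion respect the separator structure. However, you stop precisely at the step you yourself call ``the technical heart of the proof,'' so as written there is a genuine gap rather than a proof.

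Concretely, the paper establishes the following, none of which you actually argue. It defines a shrinking sequence $\widetilde{S}_{\nu+1} \supseteq \widetilde{S}_\nu \supseteq \dots \supseteq \widetilde{S}_1 \subseteq S^*$ and the updated border blocks $\gamma_{i+1} = \gamma_i + \beta_i^T\alpha_i^{-1}\beta_i$, and proves as an invariant that $\gamma_i[\widetilde{S}_i]$ stays nonsingular as you peel off components (\Cref{lemma:schur-nonsingularity,lemma:nonsingularity}). It then shows that, after running \textsc{DeleteEdgesCrossing} on the bipartition $(\widetilde{S}_{i+1}, T_i')$, the surviving nonzero cross entries form a matching between $\widetilde{S}_{i+1}\setminus\widetilde{S}_i$ and $T_i'\setminus T_i''$ (\Cref{lemma:submatching}), via a Pfaffian row-expansion argument and Schwartz--Zippel. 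These are the statements that guarantee you can stitch the per-component matchings $M_i', M_i''$ together with a final perfect matching in $G[\widetilde{S}_1]$. Your ``I expect this to go through because $\beta$ has only $k$ nonzero rows'' is plausible but is not a proof of any of these facts, and in particular the interaction between successive components (each deletion round changes which part of $S$ is still ``live'') is exactly where a naive per-component run of Harvey's algorithm breaks, because the state on $S$ must be threaded through all $\nu$ rounds.

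Two smaller issues. First, the rank identity $\operatorname{rank} A = \operatorname{rank}\alpha + \operatorname{rank}(\gamma - \beta\alpha^{+}\beta^T)$ with a pseudoinverse is not valid in general; it requires column-space containment conditions. The paper instead passes to a basis $T_i'$ of each $\alpha_i$ so that the leading block is genuinely nonsingular before taking the Schur complement, and you should do the same rather than invoke $\alpha^+$. Second, reducing maximum matching to perfect matching requires producing a basis $X$ of the Tutte matrix in $O(k^{\omega-1}n)$ time; you state the reduction but not the algorithm. The paper handles this by observing that, after taking Schur complements against the $\alpha_i[T_i']$ blocks, the residual block on $\bigcup_i T_i''$ vanishes, so the basis search collapses to a $O(k)\times n$ matrix whose row basis can be found by Gaussian elimination in $O(k^{\omega-1}n)$ time. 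You would need to supply this (or an equivalent) step explicitly.
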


\begin{restatable}{theorem}{matchingfvs} \label{theorem:matching-fvs}
    We can find a maximum matching in randomized $O(k^{\omega - 1} n \log (n / k))$ time, where~$k$ is the feedback vertex number of the input graph.
\end{restatable}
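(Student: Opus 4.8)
The plan is to bootstrap the $k$-separator algorithm of \Cref{theorem:matching} to a feedback vertex set by a balanced divide-and-conquer over the forest left after deleting the set, paying one logarithmic factor for the recursion depth. First I would compute, in polynomial time, a feedback vertex set $S$ with $|S|=O(k)$ (as everywhere in the paper only an $O(k)$ bound on the separator size matters, so a constant-factor approximation suffices), and put $F=G-S$, a forest on $N\le n$ vertices. The Tutte matrix of $G$ then has the block shape of \Cref{eq:adj-matrix} with $\bar S$-block $\alpha$ block-diagonal over the trees of $F$ and a border of width $|S|=O(k)$ indexed by $S$; the only difference from \Cref{theorem:matching} is that these diagonal blocks need not be of size $O(k)$.

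To cut them down I would build a rooted forest $\mathcal{D}$ of depth $O(\log(N/k))$ that embeds $F$: repeatedly split off a centroid of the current largest tree component, or, when no component exceeds half of the current total, split the trees into two balanced halves, recursing until every bottom block is a subtree of $F$ on at most $k$ vertices. Each split removes at most one vertex of $F$, so a bottom block $B$ together with $S$ and the $O(\log(N/k))$ split-vertices lying above it in $\mathcal{D}$ induces a subgraph of $G$ whose only non-separator component is $B$ itself.

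I would then process $\mathcal{D}$ from the leaves upward, in the spirit of the tree-depth framework of Iwata et al.~\cite{DBLP:conf/stacs/IwataOO18} but over $\mathcal{D}$ with $S$ as a fixed $O(k)$-vertex border, following the linear-algebraic approach of Harvey~\cite{DBLP:journals/siamcomp/Harvey09}. At each node I would Schur-complement out a maximum-rank principal submatrix of the Tutte matrix supported on the vertices strictly below that node --- non-singular by \Cref{lemma:ss-basis} --- recording which forest vertices it covers; since those vertices reach the rest of $G$ only through $S$ and the $O(1)$ split-vertices on the node's root-path, the eliminated blocks and the surviving matrix stay small, and invariance of rank under Schur complementation together with \Cref{lemma:pfaffian} keeps twice the maximum matching size unchanged. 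At the bottom, each $O(k)$-vertex block with its $O(k)$-width $S$-coupling is handled by exactly the Tutte-matrix computations behind \Cref{theorem:matching}. Aligning Harvey's own divide-and-conquer with the levels of $\mathcal{D}$, so that each merge combines the two children's Schur complements on the shared $O(k)$-vertex boundary while respecting the Pfaffian expansion of \Cref{lemma:laplace}, then yields an actual maximum matching, not only its size.

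The running time splits into $O(\log(n/k))$ levels, each of which processes blocks of total size $O(n)$ with $O(k^{\omega-1}n)$ work overall, just as in \Cref{theorem:matching}, giving $O(k^{\omega-1}n\log(n/k))$. The main obstacle is making that per-level bound truly hold: naively, a deep block keeps an active boundary of $S$ plus up to $\Theta(\log(n/k))$ accumulated split-vertices, so forming full boundary matrices would cost $(k+\log(n/k))^{\omega}$ per block, too much when $k=o(\log(n/k))$. As in Iwata et al., this must be avoided by never materializing the full boundary --- only the width-$O(k)$ coupling to $S$ is ever multiplied out, which is where fast matrix multiplication enters exactly as in \Cref{theorem:matching}, while the $O(\log(n/k))$-deep spine of $\mathcal{D}$ is threaded through cheap linear-time forest and path operations. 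Carrying out this separation while keeping the Pfaffian/rank bookkeeping correct through every Schur complement and merge is the technical heart of the argument.
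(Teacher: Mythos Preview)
Your approach is genuinely different from the paper's, and the gap you yourself flag is real.

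The paper does \emph{not} decompose the forest $F=G-S$ at all. Instead it exploits the tree structure exactly once, and in a different way: since $B\coloneqq A[V\setminus S]$ is the Tutte matrix of a forest, the associated bipartite graph has constant treewidth, so the linear-equation solver of Fomin et al.~\cite{DBLP:journals/talg/FominLSPW18} solves $Bx=b$ in $O(n)$ time. Running it on the $k$ columns of $A[V\setminus S,S]$ yields $B^{-1}A[V\setminus S,S]$ in $O(kn)$ time, and from this both the Schur complement on $S$ and the slice $A^{-1}[S,V\setminus S]$ are computed in $O(k^{\omega-1}n)$ time. After that the forest structure is \emph{forgotten}: $V\setminus S$ is partitioned arbitrarily into $\nu=n/k$ equal-size blocks $T_1,\dots,T_\nu$, and a Harvey-style divide-and-conquer \textsc{DeleteEdges} is run over the index set $[\nu]$, calling \textsc{DeleteEdgesCrossing}$(S,T_i)$ at the leaves and updating $A^{-1}[S,\cdot]$ via \Cref{lemma:harvey} between recursive calls. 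The recurrence $f(\nu)=2f(\nu/2)+O(\nu k^{\omega})$ gives $O(k^{\omega-1}n\log(n/k))$; this is where the logarithm comes from, not from any tree decomposition.

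Your centroid-decomposition route is natural, but the obstacle you name in your last paragraph is not a detail to be cleaned up --- it is the whole difficulty. At depth $d$ the active boundary is $S$ plus $d$ accumulated centroids, and you give no mechanism for Schur-complementing against $S$ while treating the centroid spine by ``cheap linear-time forest and path operations''; the Schur complement on $S$ genuinely depends on how the centroids interact with both $S$ and the subtree below, so the two cannot simply be decoupled. As written, the per-level cost is $O\bigl((n/k)(k+\log(n/k))^{\omega}\bigr)$ rather than $O(k^{\omega-1}n)$, which already misses the target when $k$ is small. The paper avoids this issue entirely: because Fomin et al.\ gives you $B^{-1}$ applied to any vector in linear time, the inverse is computed up front and the divide-and-conquer never has to carry more than the fixed $O(k)$-row strip indexed by $S$.
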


Note that the running time of \Cref{theorem:matching-fvs} has an additional $O(\log n)$ factor compared to the running time of \Cref{theorem:matching-fvs} when $k \in \Theta(n^{1 - \varepsilon})$ for some~$0 < \varepsilon \leq 1$.

As mentioned in the introduction, Lov\'{a}sz \cite{DBLP:conf/fct/Lovasz79} showed that the cardinality of a maximum matching can be determined in randomized~$O(n^{\omega})$ time.
Several decades later, two algorithms have been developed to find a maximum matching \cite{DBLP:journals/siamcomp/Harvey09,DBLP:conf/focs/MuchaS04}.
Tutte \cite{tutte1947factorization} showed that~$G$ has a perfect matching if and only if its symbolic Tutte matrix~$A$ is nonsingular.
Lov\'{a}sz \cite{DBLP:conf/fct/Lovasz79} later showed that the rank of $A$ equals twice the size of maximum matching.
To avoid computation over a multivariate polynomial ring, we will assume that each variable $x_{uv}$ is instantiated with an element chosen from a finite field $\mathbb{F}$ uniformly at random. (We will assume that $|\mathbb{F}| = \GF(2^{c \lceil \log n \rceil})$ for a sufficiently large constant $c > 0$.)
Since the determinant of the symbolic Tutte matrix has degree at most~$n$, by the Schwartz-Zippel lemma \cite{DBLP:journals/jacm/Schwartz80,zippel1979probabilistic}\footnote{The Schwartz-Zippel lemma states that a non-zero polynomial of total degree $d$ over a finite field $\mathbb{F}$ is non-zero with probability at least $1 - d / |\mathbb{F}|$ when evaluated at a coordinate chosen uniformly at random.}, if~$G$ has a perfect matching and $\mathbb{F}$ is of size at least $\delta n$, then $A$ is nonsingular with probability at least~$1 - 1 / \delta$.

The problem of finding a perfect matching efficiently is more intricate.
(To find a maximum matching, search for a perfect matching in $G[X]$, where $X$ is a basis of the Tutte matrix.)
For an edge $uv \in E(G)$, it can be shown that there is a perfect matching containing $uv$ if and only if $A^{-1}[u, v] \ne 0$ (see e.g., \cite{DBLP:journals/jal/RabinV89}).
This gives us an $O(n^{\omega + 1})$-time algorithm for finding a perfect matching.
It remained open whether there is an $O(n^{\omega})$-time algorithm for several decades.
It was resolved by Mucha and Sankowski~\cite{DBLP:conf/focs/MuchaS04}.
Harvey~\cite{DBLP:journals/siamcomp/Harvey09} later gave a simpler algorithm.

\paragraph*{Harvey's algorithm.}
As we build upon Harvey's algorithm to develop fully polynomial-time algorithms, we briefly describe how it works.
The Sherman--Morrison--Woodbury formula is fundamental to Harvey's algorithm.

\begin{lemma}[Sherman--Morrison--Woodbury formula]
  Let $A$ be an $n \times n$ matrix and $U, V$ be $n \times k$ matrices.
  If $A$ is nonsingular, then 
  \begin{itemize}
    \item $\det(A + UV^T) = \det(I + V^T A^{-1} U) \det A$; in particular, $A + UV^T$ is nonsingular if and only if $I + VA^{-1}U$ is nonsingular, and
    \item $(A + UV^T)^{-1} = A^{-1} - A^{-1} U (I + V A^{-1} U)^{-1} V A^{-1}$.
  \end{itemize}
\end{lemma}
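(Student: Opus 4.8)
The plan is to derive both identities from a single $(n+k)\times(n+k)$ block matrix together with the two natural ways of block-triangularizing it. I would set
\[
  M := \begin{pmatrix} A & -U \\ V^T & I_k \end{pmatrix}.
\]
Since $A$ is nonsingular, left-multiplying $M$ by $\begin{pmatrix} I_n & 0 \\ -V^T A^{-1} & I_k \end{pmatrix}$ clears the lower-left block and produces a block-upper-triangular matrix with diagonal blocks $A$ and $I_k + V^T A^{-1} U$; hence $\det M = \det A \cdot \det(I_k + V^T A^{-1} U)$. On the other hand, left-multiplying $M$ by $\begin{pmatrix} I_n & U \\ 0 & I_k \end{pmatrix}$ clears the upper-right block and produces a block-lower-triangular matrix with diagonal blocks $A + UV^T$ and $I_k$; hence $\det M = \det(A + UV^T)$. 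Equating the two expressions yields the determinant identity, and since $\det A \neq 0$ the stated equivalence of nonsingularity is immediate.

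For the inverse, I would assume $A + UV^T$ is nonsingular, which by the first part is equivalent to $I_k + V^T A^{-1} U$ being nonsingular; hence $W := (I_k + V^T A^{-1} U)^{-1}$ is well-defined. It then suffices to verify directly that $B := A^{-1} - A^{-1} U W V^T A^{-1}$ is a right inverse of $A + UV^T$. Expanding,
\[
  (A + UV^T)B = I_n - UWV^TA^{-1} + UV^TA^{-1} - UV^TA^{-1}UWV^TA^{-1},
\]
and the two terms involving $W$ combine as $-U(I_k + V^TA^{-1}U)WV^TA^{-1} = -UV^TA^{-1}$, which cancels the term $UV^TA^{-1}$ and leaves $(A + UV^T)B = I_n$.

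I expect no real obstacle here, since this lemma is classical; the only points that need a little care are the sign conventions in $M$ (the entry $-U$ is exactly what makes both Schur complements appear with a plus sign, matching the statement) and the observation that the matrix $W$ inverted in the second identity is well-defined precisely under the stated hypothesis, which is what the first part guarantees.
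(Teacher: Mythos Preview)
Your proof is correct and is the standard Schur-complement derivation of the Sherman--Morrison--Woodbury formula. The paper, however, does not prove this lemma at all: it is quoted as a classical fact and used as a black box (immediately afterwards the paper derives \Cref{lemma:harvey} from it). So there is nothing to compare against on the paper's side; your argument simply supplies a proof where the paper offers none. One minor remark: the paper's statement contains a typo ($V$ instead of $V^T$ in two places inside the parentheses), and your write-up silently uses the correct $V^T$, which is fine.
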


Harvey showed the following based on this formula. 

\begin{lemma}[Harvey~\cite{DBLP:journals/siamcomp/Harvey09}] \label{lemma:harvey}
    For an $n \times n$ nonsingular matrix $M$ and $S, T \subseteq [n]$, let $\widetilde{M}$ be a matrix that is identical to $M$ except that $\Delta = \widetilde{M}[S,T] - M[S,T] \ne \mathbb{0}$ (i.e., $M[i,j] = \widetilde{M}[i, j]$ whenever $i \notin S$ or $j \notin T$).
    Then,
    \begin{itemize}
        \item $\det(\widetilde{M}) = \det(M) \det(I + \Delta M^{-1}[T, S])$; in particular, $\widetilde{M}$ is nonsingular if and only if~$I + \Delta M^{-1}[T, S]$ is nonsingular, and
        \item if $\widetilde{M}$ is nonsingular, then $\widetilde{M}^{-1} = M^{-1} - M^{-1}[\cdot, S] (I + \Delta M^{-1}[T, S])^{-1} \Delta M^{-1}[T, \cdot]$.
    \end{itemize}
\end{lemma}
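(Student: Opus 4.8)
The plan is to obtain this lemma as a direct specialization of the Sherman--Morrison--Woodbury formula stated above, once the perturbation $\widetilde{M} - M$ is written as a rank-at-most-$|S|$ update. Fix an arbitrary ordering of the elements of $S$ and of $T$, let $\mathbf{1}_S$ be the $n \times |S|$ matrix whose columns are the standard basis vectors $e_i$ for $i \in S$ in that order, and define $\mathbf{1}_T$ analogously. Because $M$ and $\widetilde{M}$ agree in every entry outside the rows of $S$ and the columns of $T$, and their difference on $(S,T)$ equals $\Delta$, we have $\widetilde{M} = M + \mathbf{1}_S \Delta \mathbf{1}_T^T$. The only additional ingredient is the elementary bookkeeping that for any $n \times n$ matrix $N$ one has $\mathbf{1}_T^T N \mathbf{1}_S = N[T, S]$, $N \mathbf{1}_S = N[\cdot, S]$, and $\mathbf{1}_T^T N = N[T, \cdot]$; these translate the abstract identity into the submatrix notation of the statement.

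Concretely, I would set $U := \mathbf{1}_S$ and $V := \mathbf{1}_T \Delta^T$, both $n \times |S|$ matrices, so that $U V^T = \mathbf{1}_S \Delta \mathbf{1}_T^T = \widetilde{M} - M$ and the inner matrix is $I + V^T M^{-1} U = I + \Delta\, \mathbf{1}_T^T M^{-1} \mathbf{1}_S = I + \Delta M^{-1}[T, S]$, an $|S| \times |S|$ matrix. The determinant part of Sherman--Morrison--Woodbury then yields $\det(\widetilde{M}) = \det(M) \det(I + \Delta M^{-1}[T, S])$; since $M$ is nonsingular, $\det(\widetilde{M}) \ne 0$ if and only if $\det(I + \Delta M^{-1}[T, S]) \ne 0$, which is the claimed nonsingularity characterization. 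When $\widetilde{M}$ is nonsingular, the inverse part gives $\widetilde{M}^{-1} = M^{-1} - M^{-1} U (I + V^T M^{-1} U)^{-1} V^T M^{-1}$, and substituting $M^{-1} U = M^{-1}[\cdot, S]$, $V^T M^{-1} = \Delta M^{-1}[T, \cdot]$, together with the inner matrix computed above, produces exactly $\widetilde{M}^{-1} = M^{-1} - M^{-1}[\cdot, S](I + \Delta M^{-1}[T, S])^{-1} \Delta M^{-1}[T, \cdot]$.

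I do not expect any genuinely hard step: the content is choosing the right low-rank factorization and unwinding notation. The one point requiring care is that $\Delta$ need not be square or invertible (indeed $|S|$ and $|T|$ may differ), so one must avoid the variant of the Woodbury identity in which a factor $\Delta^{-1}$ appears; the factorization $U = \mathbf{1}_S$, $V = \mathbf{1}_T \Delta^T$ above never inverts $\Delta$, only the matrix $I + \Delta M^{-1}[T, S]$, whose invertibility is equivalent to that of $\widetilde{M}$. A second, purely cosmetic point is that the orderings fixed inside $\mathbf{1}_S$ and $\mathbf{1}_T$ must be the same ones used to read off $\Delta$ and to interpret the submatrices $M^{-1}[T, S]$, $M^{-1}[\cdot, S]$, $M^{-1}[T, \cdot]$; any consistent choice works and the final statement is insensitive to it.
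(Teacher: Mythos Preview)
Your proposal is correct and is exactly the derivation the paper has in mind: the paper does not prove the lemma but simply cites Harvey and remarks that it follows from the Sherman--Morrison--Woodbury formula, which is precisely the specialization you carry out. Your handling of the non-square $\Delta$ via the factorization $U=\mathbf{1}_S$, $V=\mathbf{1}_T\Delta^T$ is the clean way to avoid inverting $\Delta$, and all the submatrix bookkeeping checks out.
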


The main idea in Harvey's algorithm is to delete edges until each vertex has exactly one neighbor while simultaneously ensuring that at least one perfect matching remains.
Here, deleting an edge corresponds to setting the corresponding two entries in the Tutte matrix to zero.
Whether an edge is \emph{deletable}, i.e., whether there remains a perfect matching after its deletion, can be determined in constant time, since it suffices to test whether some $2 \times 2$ matrix is nonsingular by \Cref{lemma:harvey}.
When an edge is deleted, we update the inverse using \Cref{lemma:harvey}.
A naive implementation, however, takes $O(n^2)$ time, resulting in an $O(n^2 m)$-time algorithm.
Harvey's algorithm employs a divide-and-conquer approach to obtain an $O(n^{\omega})$-time algorithm.
We will not cover the general case, but only the bipartite case.
Let $A$ be the Tutte matrix and $N$ be its inverse.
Harvey gave a recursive algorithm \textsc{DeleteEdgeCrossing} that takes a bipartition $(U, W)$ of equal size as input.
It maintains the invariant that~$B[U \cup W] = A^{-1}[U \cup W]$ upon invocation.
For the base case that $U = \{ u \}$ and $W = \{ w \}$, if the edge $uw$ exists and is deletable, then set $A[u,w] = 0$.
If~$|U| = |W| > 1$, then we divide $U$ and $W$ into $U = U_1 \cup U_2$ and $W = W_1 \cup W_2$ of equal size.
For each $i \in \{ 1, 2 \}$ and $j \in \{ 1, 2 \}$, we recurse into $(U_i, W_j)$.
After each recursion, we update~$B[U \cup W]$ using \Cref{lemma:harvey} in order to maintain the invariant. 
Note that this takes $O(|U|^{\omega})$ time, as we only update $B$ locally.
To find a perfect matching, run \textsc{DeleteEdgeCrossing} on the input bipartition.
The running time~$f(n)$ follows the recurrence $f(n) = 4 f(n/2) + O(n^{\omega})$, which gives~$f(n) = O(n^{\omega})$ (assuming $\omega > 2$).

\paragraph*{Organization.}
We present algorithms for detecting a perfect matching in \Cref{ss:detect-pm} and for constructing a perfect matching in \Cref{ss:find-pm}.
\Cref{ss:find-mm} discusses how to find a maximum matching, proving \Cref{theorem:matching}.
Finally, we develop an adaptive algorithm parameterized by the feedback vertex number in \Cref{ss:matching-fvs}.

\subsection{Detecting a perfect matching} \label{ss:detect-pm}

In this section, we show how to find a maximum matching in $O(k^{\omega - 1} n)$ time when a $k$-separator is given:

\begin{proposition}
	\label{prop:pm}
    Given a graph and a $k$-separator, we can test whether the graph has a perfect matching in randomized $O(k^{\omega - 1} n)$ time.
\end{proposition}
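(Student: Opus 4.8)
The plan is to use the block structure of the Tutte matrix from \Cref{eq:adj-matrix} together with Schur complements. Let $A$ be the (instantiated) Tutte matrix, written in the form
\begin{align*}
A = \begin{blockarray}{ccc}
& S & \bar S \\
\begin{block}{c[cc]}
S & \gamma & \beta \\
\bar S & -\beta^T & \alpha \\
\end{block}
\end{blockarray},
\end{align*}
where $\alpha$ is block-diagonal with blocks $\alpha_1, \dots, \alpha_\ncc$, each of size $O(k) \times O(k)$. By Tutte's theorem and the Schwartz--Zippel lemma, it suffices (up to an error probability we can drive down by enlarging $\mathbb{F}$) to test whether $A$ is nonsingular. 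First I would handle the case where $\bar S$ admits a perfect matching on its own: test whether each $\alpha_i$ is nonsingular, which costs $O(k^\omega)$ per block and $O(k^{\omega-1}n)$ in total. If some $\alpha_i$ is singular, that does not immediately settle the question, because a vertex of $T_i$ might still be matched into $S$; so I instead reduce to a "clean" situation by passing to a row/column basis.

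The key step is a Schur-complement reduction to an $O(k) \times O(k)$ matrix. If every $\alpha_i$ were nonsingular, then $\alpha$ is nonsingular and the standard Schur complement identity gives
\begin{align*}
\det A = \det(\alpha) \cdot \det\bigl(\gamma + \beta \alpha^{-1} \beta^T\bigr),
\end{align*}
and since $\alpha^{-1}$ is block-diagonal with blocks $\alpha_i^{-1}$, the matrix $\gamma + \sum_i \beta_i \alpha_i^{-1} \beta_i^T$ is $O(k) \times O(k)$ and can be assembled in $O(k^{\omega-1}n)$ time: invert each $\alpha_i$ and form each $\beta_i \alpha_i^{-1} \beta_i^T$ in $O(k^\omega)$ time. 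Then a single $O(k^\omega)$ nonsingularity test on this Schur complement finishes the job. The general case, where some $\alpha_i$ is singular, is handled by first replacing $T_i$ with a row basis $X_i$ of $\alpha_i$: by \Cref{lemma:ss-basis}, $\alpha_i[X_i]$ is nonsingular (this uses skew-symmetry of $\alpha_i$, which is inherited from $A$), and one argues that $G$ has a perfect matching iff $G$ restricted to $S \cup \bigcup_i X_i$ plus the original $S$--$\bar S$ edges incident to the discarded vertices $\dots$ --- more precisely, I would argue that the vertices of $T_i \setminus X_i$ are never needed: since $X_i$ is a basis, for matching purposes every vertex of $T_i$ that can be saturated can be saturated using only $X_i \cup S$. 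Restricting to these bases, the diagonal blocks become nonsingular and the Schur-complement computation above applies. (Alternatively, and perhaps more cleanly, one can add a tiny symbolic perturbation or directly cite a rank-additivity argument for Schur complements of possibly-singular blocks; but the basis argument is the most self-contained.)

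The main obstacle I anticipate is the singular-block case: making rigorous the claim that passing to a basis $X_i$ of $\alpha_i$ does not change whether a perfect matching exists. The cleanest route is a rank argument on the Tutte matrix itself rather than a combinatorial matching argument: write $A$ after a symmetric permutation grouping the basis rows/columns first, and use the fact that adding the non-basis rows of $\alpha_i$ (which are linear combinations of basis rows within the $T_i$ block) does not increase the rank contributed by that block, so $\rank A = \rank A'$ where $A'$ is the Tutte matrix of the graph on $S \cup \bigcup_i X_i$ together with all $S$--$T_i$ edges, restricted appropriately. Combined with Lov\'asz's rank characterization, $\rank A = 2 \cdot (\text{max matching size})$, one gets that $G$ has a perfect matching iff $\rank A = n$, and it remains to relate this to the Schur complement of the now-nonsingular $\alpha'$ part; here $\rank A = |\bar S'| + \rank(\text{Schur complement})$ where $\bar S' = \bigcup X_i$, so $G$ has a perfect matching iff the Schur complement $\gamma + \sum_i \beta_i' (\alpha_i[X_i])^{-1} (\beta_i')^T$ has rank $|S| + (n - |\bar S'|) - |\bar S'| = $ the appropriate value --- and one checks this reduces to a single $O(k^\omega)$ rank computation. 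Putting the pieces together, the total running time is $O(k^{\omega-1}n)$ for computing all bases and all $\beta_i (\alpha_i[X_i])^{-1} \beta_i^T$, plus $O(k^\omega)$ for the final rank test, which is $O(k^{\omega-1}n)$ overall.
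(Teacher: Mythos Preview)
Your overall plan---instantiate the Tutte matrix, pass to a basis of each block $\alpha_i$, and reduce to an $O(k)\times O(k)$ Schur complement---is exactly the paper's strategy, but there is a genuine gap in how you handle the singular-block case.

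The problem is that the vertices $T_i\setminus X_i$ cannot be ``discarded'': they must still be saturated in a perfect matching, and your claim that ``vertices of $T_i\setminus X_i$ are never needed'' is simply false (take a path $s\text{--}t_1\text{--}t_2$ with $S=\{s\}$, $T_1=\{t_1,t_2\}$; here $\alpha_1$ has rank $2$ so $X_1=T_1$, but if instead $T_1$ induced a single vertex with an edge to $S$, that vertex is essential). Your fallback rank argument has the same flaw in a different guise: the Schur complement of the nonsingular block $\alpha[\bigcup_i X_i]$ is indexed by $S\cup\bigcup_i(T_i\setminus X_i)$, not by $S$ alone, so the matrix you write as ``$\gamma+\sum_i\beta_i'(\alpha_i[X_i])^{-1}(\beta_i')^T$'' is only the $S\times S$ corner of it. Without further argument, $\sum_i|T_i\setminus X_i|$ could be $\Omega(n)$, and then you have gained nothing.

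The missing idea is a counting bound: since $X_i$ is a basis of $\alpha_i$, the maximum matching of $G[T_i]$ has size $|X_i|/2$, so in any perfect matching of $G$ at least $|T_i\setminus X_i|$ vertices of $T_i$ are matched into $S$. Summing over $i$ gives $\sum_i|T_i\setminus X_i|\le|S|\le k$; if this fails you can immediately answer ``no''. Otherwise, move all of $\bigcup_i(T_i\setminus X_i)$ into the separator, obtaining a new separator $S^*$ of size at most $2k$ with every remaining block $\alpha_i[X_i]$ nonsingular. Now your Schur-complement computation goes through verbatim and produces an $O(k)\times O(k)$ matrix whose nonsingularity decides the question. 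This is precisely what the paper does.
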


The key ingredient is the Schur complement.
Consider a block matrix of the form
\begin{align*}
    A =
    \begin{bmatrix}
          \alpha & \beta \\ \gamma & \delta
        \end{bmatrix}.
\end{align*}
If $\alpha$ is nonsingular, then the determinant of $A$ is $\det(A) = \det(\alpha) \cdot \det(C)$, where $C = \delta - \gamma \alpha^{-1} \beta$ is the Schur complement (see e.g.~\cite{murota1999matrices}).
Thus, assuming that~$\alpha$ is nonsingular, $A$ is nonsingular if and only if $\delta - \gamma \alpha^{-1} \beta$ is nonsingular.
A simple application of the Schur complement yields the following.
\begin{lemma} \label{lemma:schur}
Consider a matrix~$A$ of the following form.
Then, provided that~$\alpha$ is nonsingular, $A$~is nonsingular if and only if the following matrix $A'$ is nonsingular.
\begin{center}
\begin{minipage}{.3\textwidth}
    \begin{align*}
    A =
    \begin{bmatrix}
          \alpha & \beta & \bold{0} \\
          -\beta^T & \gamma & \zeta \\
          \bold{0} & -\zeta^T & \eta
    \end{bmatrix}
\end{align*}
\end{minipage}
\begin{minipage}{.6\textwidth}
    \begin{equation*}
    A' = \begin{bmatrix}
        \gamma & \zeta \\
        -\zeta^T & \eta
    \end{bmatrix}
    -
    \begin{bmatrix}
        -\beta^T \\ \bold{0}
    \end{bmatrix}
    \alpha^{-1}
    \begin{bmatrix}
        \beta & \bold{0}
    \end{bmatrix}
    =
    \begin{bmatrix}
        \gamma + \beta^T \alpha^{-1} \beta & \zeta \\
        -\zeta^T & \eta
    \end{bmatrix}
    \end{equation*}
\end{minipage}
\end{center}
\end{lemma}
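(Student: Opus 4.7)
The plan is to recognize the statement as a direct application of the Schur complement identity that is quoted in the paragraph preceding the lemma. I will regard $A$ as a $2 \times 2$ block matrix whose top-left block is $\alpha$ and whose bottom-right block is the $2 \times 2$ block matrix
\begin{equation*}
    D = \begin{bmatrix} \gamma & \zeta \\ -\zeta^T & \eta \end{bmatrix},
\end{equation*}
with off-diagonal blocks $B = \begin{bmatrix} \beta & \mathbf{0}\end{bmatrix}$ in the top-right and $C = \begin{bmatrix} -\beta^T \\ \mathbf{0} \end{bmatrix}$ in the bottom-left. Under this grouping, $A$ is exactly of the form to which the cited Schur-complement identity applies.

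Next I will compute the Schur complement of $\alpha$ in $A$. A short block-product calculation gives
\begin{equation*}
    C \alpha^{-1} B = \begin{bmatrix} -\beta^T \\ \mathbf{0}\end{bmatrix} \alpha^{-1} \begin{bmatrix} \beta & \mathbf{0}\end{bmatrix} = \begin{bmatrix} -\beta^T \alpha^{-1} \beta & \mathbf{0} \\ \mathbf{0} & \mathbf{0} \end{bmatrix},
\end{equation*}
so that $D - C\alpha^{-1} B$ equals precisely the matrix $A'$ stated in the lemma. I will then invoke the determinant identity $\det(A) = \det(\alpha) \cdot \det(D - C \alpha^{-1} B)$, which is the content of the Schur complement formula reproduced just above the lemma. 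Since $\alpha$ is assumed nonsingular, $\det(\alpha) \neq 0$, and hence $\det(A) \neq 0$ if and only if $\det(A') \neq 0$, which is the claim.

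There is no real obstacle here: the entire content of the lemma is bookkeeping that rewrites the general Schur-complement statement in the particular block shape arising later in the matching algorithm. The only thing worth being careful about is ensuring the signs match — in particular that the $-\beta^T$ in the $(2,1)$-block of $A$ becomes a $+\beta^T \alpha^{-1} \beta$ correction to $\gamma$ in $A'$ — which is verified by the block product above.
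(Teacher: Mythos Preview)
Your proposal is correct and matches the paper's approach exactly: the paper states this lemma as ``a simple application of the Schur complement'' with no further argument, and your write-up supplies precisely that application, including the block regrouping and the sign check. There is nothing to add.
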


We prove \Cref{prop:pm} using \Cref{lemma:schur}.

\begin{proof}[Proof of \Cref{prop:pm}]
We assume that the Tutte matrix $A$ has the form of \Cref{eq:adj-matrix} and that it is instantiated randomly from a field~$\mathbb{F} = \GF(2^{c \lceil \log n \rceil})$.
For each component $T_i$, we find a basis~$T_i' \subseteq T_i$ of $A[T_i]$ in~$O(k^\omega)$ time.
Note that $A[T_i']$ is nonsingular by \Cref{lemma:ss-basis}.
Let $T_i^* = T_i \setminus T_i'$, let~$S_i = S_{i-1} \cup T^*_i$ with~$S_0 = S$, and let $S^* = S_\ncc$.
Note that if $G$ has a perfect matching $M$, then at least~$|T^*_i|$~vertices of~$T_i$ are matched to~$S$.
Thus, if $\sum_{i \in [\ncc]} |T^*_i| > |S|$, we can conclude that there is no perfect matching.
Otherwise, we have~$|S^*| \leq 2k$.
Now, consider for each component $T'_i$ the matrix
\begin{align*}
    B_i = \begin{blockarray}{cccc}
        & T'_i & S^* & V \setminus (S^* \cup \bigcup_{j \in [i - 1]} T'_j) \\
        \begin{block}{c[ccc]}
            T'_i & \alpha_{i} & \beta_i & 0 \\
            S^* & -\beta_{i}^T & \gamma_i & \zeta_i \\
            V \setminus (S^* \cup \bigcup_{j \in [i - 1]} T'_j) & 0 & -\zeta_i^T & \eta_i \\
        \end{block}
    \end{blockarray},
\end{align*}
where $\gamma_1 = \gamma$, $\gamma_{i + 1} =  \gamma_{i} + \beta_i^T \alpha_i^{-1} \beta_i$, and all entries except for~$\gamma_i$ for~$i > 1$ are identical to the corresponding entries of~$A$.
Note that $\alpha_i$ and $\beta_i$ may differ from $\alpha_i$ and $\beta_i$ in \Cref{eq:adj-matrix}, but if $T_j' = T_j$ for all $j\in [\nu]$, then $\alpha_i$ and $\beta_i$ here coincide with $\alpha_i$ and $\beta_i$ from \Cref{eq:adj-matrix}.
Note that the matrix~$B_1$ coincides with the Tutte matrix~$A$.
Hence, we only need to test whether $B_1$ is nonsingular.
We do this by iteratively computing~$B_{i+1}$ from $B_i$ in $O(k^\omega)$ time.
For notational convenience, we will assume that there is an empty component~$T'_{\ncc+1}$.
Since~$\alpha_i = A[T_i']$ is nonsingular, by \Cref{lemma:schur} the matrix $B_i$ is nonsingular if and only if~$B_{i+1}$ is nonsingular.
The nonsingularity of~$B_{\ncc+1} = \gamma_{\ncc +1}$ can be tested in~$O(k^{\omega})$ time since it has size~$O(k) \times O(k)$.
Note that our algorithm takes $O(k^{\omega})$ time for each~$i \in [\ncc+1]$ and thus~$O(\ncc k^{\omega}) = O(k^{\omega - 1}n)$ time overall.
If $A$ is nonsingular, then our algorithm correctly concludes that there is a perfect matching.
By the Schwartz--Zippel lemma, we know that if $G$ admits a perfect matching, then the probability that $A$ is nonsingular is at least~$1 - n / |\mathbb{F}| \ge 1/2$.
\end{proof}

\subsection{Finding a perfect matching} \label{ss:find-pm}

We next give an algorithm to find a perfect matching (if one exists).
We will use the same notation as before and we will again assume that $\alpha_i = A[T'_i]$ is nonsingular for every $i \in [\ncc]$.
Note that if the submatrix $\gamma_1$ is also nonsingular, then it is easy to find a perfect matching since the corresponding submatrices $\gamma_1 = A[S^*]$ and $\alpha_i$ are all nonsingular and therefore there exists a perfect matching in $G[S^*]$ and $G[T'_i]$ for every $i \in [\ncc]$.
We can find these using one of the randomized~\mbox{$O(n^{\omega})$-time} algorithms \cite{DBLP:journals/siamcomp/Harvey09,DBLP:conf/focs/MuchaS04}.
This procedure takes $O(k^{\omega} \ncc) = O(k^{\omega - 1} n)$ time.
However, in general, we cannot assume that $\gamma_1$ is nonsingular.

To deal with the case that $\gamma_1$ is singular, we use Harvey's aforementioned algorithm.
Our algorithm first computes $\gamma_i$ for each $i \in [\ncc + 1]$.
We then proceed inductively in decreasing order from $i = \ncc+1$ to $i = 1$.
Our algorithm finds a set $\widetilde{S}_{i} \subseteq \widetilde{S}_{i + 1} \subseteq S^*$ (we set~$\widetilde{S}_{\ncc + 1} = S^*$ for notational convenience) and a matching $M_i$ which saturates every vertex in $(\widetilde{S}_{i + 1} \setminus \widetilde{S}_i) \cup T'_{i}$.
The matching~$M_i$ is the union of two matchings~$M_i'$ and~$M_i''$, where $M_i'$ is a perfect matching between $\widetilde{S}_{i + 1} \setminus \widetilde{S}_i$ and~$T_i' \setminus T_i''$ for $T_i'' \subseteq T_i'$  and $M_i''$ is a perfect matching in $G[T_i'']$ (see \Cref{lemma:submatching,lemma:nonsingularity}).

We will maintain the invariant that~$\gamma_{i}[\widetilde{S}_{i}]$ is nonsingular.
If this invariant is maintained, then~$\gamma_1[\widetilde{S}_1]$ is nonsingular and thus $G[\widetilde{S}_1]$ contains a perfect matching that can be combined with all previous~$M_i$ to obtain a perfect matching for~$G$.
Note that~${\gamma_{\ncc+1}[\widetilde{S}_{\ncc+1}] = B_{\ncc+1}}$ is nonsingular.
To find the set~$\widetilde{S}_{i}$ for $i \in [\ncc]$, consider the following submatrix $B'_i$ of~$B_i$ whose rows and columns are indexed by $\widetilde{S}_{i + 1} \cup T'_i$:
\begin{align*}
    B'_i = \begin{bmatrix}
        \alpha_i & \beta_{i}[T'_i,\widetilde{S}_{i + 1}] \\
        -\beta_{i}^T[\widetilde{S}_{i + 1},T'_i] & \gamma_{i}[\widetilde{S}_{i + 1}] \\
    \end{bmatrix}.
\end{align*}

Note that for a nonzero entry in $B'_i$, if one of the coordinates is in $T'_i$, then the corresponding edge exists in $G$ but this is not necessarily the case if both coordinates are in $\widetilde{S}_{i + 1}$.
We first show that $B_i'$ is nonsingular.
\begin{restatable}{lemma}{schurnonsingularity} \label{lemma:schur-nonsingularity}
    For each $i \in [\ncc]$, the matrix $B'_i$ is nonsingular.
\end{restatable}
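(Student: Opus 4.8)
The claim is that $B'_i$ is nonsingular for every $i \in [\ncc]$. I would argue by downward induction on $i$, mirroring the structure of the recursive construction. The key observation is that $B'_i$ is, up to deleting rows and columns, a Schur complement of $B_i$ with respect to $\alpha_i = A[T'_i]$, which is nonsingular by \Cref{lemma:ss-basis}. So the plan is to relate the nonsingularity of $B'_i$ back to the invariant $\gamma_i[\widetilde S_i]$ being nonsingular — which we are carrying through the induction — and to the invariant $\gamma_{i+1}[\widetilde S_{i+1}]$ nonsingular at the previous (larger) index.

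\textbf{First step: reduce via Schur complement.} Since $\alpha_i$ is nonsingular, I would apply \Cref{lemma:schur} (with the roles $\alpha \leftarrow \alpha_i$, $\beta \leftarrow \beta_i[T_i',\widetilde S_{i+1}]$, $\gamma \leftarrow \gamma_i[\widetilde S_{i+1}]$, and the trailing block $\eta$ empty) to conclude that $B'_i$ is nonsingular if and only if the Schur complement
\[
\gamma_i[\widetilde S_{i+1}] + \beta_i^T[\widetilde S_{i+1},T_i']\,\alpha_i^{-1}\,\beta_i[T_i',\widetilde S_{i+1}]
\]
is nonsingular. By the definition $\gamma_{i+1} = \gamma_i + \beta_i^T\alpha_i^{-1}\beta_i$, this Schur complement is exactly $\gamma_{i+1}[\widetilde S_{i+1}]$. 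So $B'_i$ is nonsingular $\iff$ $\gamma_{i+1}[\widetilde S_{i+1}]$ is nonsingular.

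\textbf{Second step: invoke the maintained invariant.} The construction sets $\widetilde S_{\ncc+1} = S^*$ and $\gamma_{\ncc+1}[\widetilde S_{\ncc+1}] = B_{\ncc+1}$, which is nonsingular (it equals $\gamma_{\ncc+1}$, and that was established to be nonsingular when detecting the perfect matching). For $i < \ncc$, the invariant being maintained by the algorithm says precisely that $\gamma_{i+1}[\widetilde S_{i+1}]$ is nonsingular — this is the invariant ``$\gamma_j[\widetilde S_j]$ nonsingular'' at index $j = i+1$, which holds by the inductive hypothesis (the algorithm proceeds in decreasing order, so $\widetilde S_{i+1}$ and its invariant are available when we process step $i$). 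Combining the two steps: $B'_i$ is nonsingular because $\gamma_{i+1}[\widetilde S_{i+1}]$ is, which is the downward inductive hypothesis. One subtlety to state carefully is that this lemma is logically prior to — or interleaved with — the proof that the invariant is maintained; the clean way to present it is to note that \Cref{lemma:schur-nonsingularity} only uses the invariant at index $i+1$, and the subsequent lemmas (\Cref{lemma:submatching,lemma:nonsingularity}) then use $B'_i$'s nonsingularity to construct $\widetilde S_i$ and re-establish the invariant at index $i$, closing the induction.

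\textbf{Main obstacle.} The real work is bookkeeping rather than mathematics: making sure the index $\widetilde S_{i+1}$ is well-defined at the moment we need it, and that $\beta_i$ here refers to the (possibly modified) block from $B_i$ rather than from \Cref{eq:adj-matrix}, so that the identity $\gamma_{i+1} = \gamma_i + \beta_i^T\alpha_i^{-1}\beta_i$ applies verbatim at the restricted index set $\widetilde S_{i+1}$. I would also double-check that restricting to rows/columns $\widetilde S_{i+1} \cup T_i'$ before taking the Schur complement commutes correctly — i.e., that the Schur complement of the restricted matrix equals the restriction of $\gamma_{i+1}$ — which holds because $\widetilde S_{i+1} \subseteq S^*$ and the block structure is preserved under this restriction since $\alpha_i$ is untouched. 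Modulo these routine checks, the proof is the two-line argument above.
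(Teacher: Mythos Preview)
Your proposal is correct and follows essentially the same approach as the paper: compute the Schur complement of $B'_i$ with respect to the nonsingular block $\alpha_i$, observe that it equals $\gamma_{i+1}[\widetilde S_{i+1}]$ by the recursive definition of $\gamma_{i+1}$, and conclude using the invariant at index $i+1$. The paper's proof is just the terse two-line version of what you wrote; your additional remarks about the interleaved induction and the bookkeeping on restricted indices are accurate but not needed for the formal argument.
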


\begin{proof}
Recall that $\alpha_i$ is nonsingular and that by definition~$\gamma_{i+1} = \gamma_{i} + \beta_i^T \alpha_i^{-1} \beta_i$.
Using the Schur complement, we can express the determinant of $B'_i$ by
\begin{align*}
    \det(B'_i) = \det(\alpha_i) \det(\gamma_{i}[\widetilde{S}_{i + 1}] + \beta_i^T[\widetilde{S}_{i + 1},T'_i] \alpha_i^{-1} \beta_i[T'_i,\widetilde{S}_{i + 1}]) = \det(\alpha_i)\det(\gamma_{i+1}[\widetilde{S}_{i + 1}]).
\end{align*}
By our invariant, $\det(\gamma_{i + 1}[\widetilde{S}_{i+1}]) \ne 0$, and thus $B'_i$ is nonsingular.
\end{proof}

Since $B_i'$ is nonsingular, we can compute $(B_i')^{-1}$ and apply the subroutine \textsc{DeleteEdges-Crossing} of Harvey \cite{DBLP:journals/siamcomp/Harvey09} on the bipartition $(\widetilde{S}_{i + 1}, T'_i)$ in $O(k^{\omega})$ time.
(This subroutine requires the inverse to be given.)
Essentially, we ``delete'' (i.e., change to zero) all deletable entries with one coordinate in $\widetilde{S}_{i + 1}$ and the other in $T'_i$.
We then get a skew-symmetric matrix $C_i$ that is identical to $B'_i$ except that some entries in $C_i[\widetilde{S}_{i + 1}, T'_i]$ and $C_i[T'_i, \widetilde{S}_{i + 1}]$ are set to zero.
Let~$\widetilde{S}_{i}$ be those vertices~$v\in \widetilde{S}_{i + 1}$ such that~$C_i[T'_i,v]$ is a zero vector and~$T''_i$ be those vertices~$u\in T'_i$ such that~$C_i[\widetilde{S}_{i + 1},v]$ is a zero vector.
Each remaining nonzero entry in~$C_i[\widetilde{S}_{i + 1}, T'_i]$ and $C_i[T'_i, \widetilde{S}_{i + 1}]$ is undeletable.
We show that all edges corresponding to these entries form a perfect matching $M_i'$ between $\widetilde{S}_{i + 1} \setminus \widetilde{S}_i$ and $T'_i \setminus T''_i$.

\begin{lemma} \label{lemma:submatching}
    The set $M_i' = \{ uv \mid C_i[u, v] \ne 0, u \in \widetilde{S}_{i + 1} \setminus \widetilde{S}_i, v \in T'_i \setminus T''_i \}$ is a matching in~$G$ with high probability.
\end{lemma}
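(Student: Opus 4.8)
The plan is to show that $M_i'$ is a matching by proving two things: (1) every edge $uv$ with $C_i[u,v] \ne 0$, $u \in \widetilde{S}_{i+1}$, $v \in T_i'$ actually exists in $G$ (with high probability), and (2) no vertex is covered twice. The second point should follow from the structure of the \textsc{DeleteEdgesCrossing} subroutine together with the definitions of $\widetilde{S}_i$ and $T_i''$: after the subroutine terminates, consider any $v \in \widetilde{S}_{i+1} \setminus \widetilde{S}_i$. By definition of $\widetilde{S}_i$, the column $C_i[T_i', v]$ is nonzero, so $v$ is covered by at least one edge of $M_i'$. I expect to invoke the invariant that $C_i$ remains nonsingular throughout (which Harvey's subroutine guarantees, since only deletable entries are zeroed) restricted to the relevant index sets, to argue that in fact exactly one such entry is nonzero: if two entries $C_i[v, w_1]$ and $C_i[v, w_2]$ in row $v$ were nonzero with $w_1, w_2 \in T_i' \setminus T_i''$, then one of them would still be deletable — this is where I would appeal to \Cref{lemma:harvey} to test deletability via a small matrix and derive a contradiction with the termination condition of \textsc{DeleteEdgesCrossing}. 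Symmetrically for vertices in $T_i' \setminus T_i''$.

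The first point — that the surviving nonzero entries correspond to genuine edges of $G$ — is where the random instantiation over $\GF(2^{c\lceil \log n\rceil})$ enters. For entries with one coordinate in $T_i'$, the matrix $B_i'$ agrees with the Tutte matrix $A$ in those positions (as noted in the text right before the lemma: ``if one of the coordinates is in $T_i'$, then the corresponding edge exists in $G$''), so a nonzero entry in $C_i[\widetilde{S}_{i+1}, T_i']$ means $A[u,v] = \pm x_{uv} \ne 0$, hence $uv \in E(G)$. The only subtlety is the word ``with high probability'': an entry $x_{uv}$ could in principle be instantiated to $0$ even when $uv \in E$, but this does not threaten the claim — it can only cause us to miss an edge, not to fabricate one. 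So actually the ``with high probability'' qualifier is harmless here and the matching claim for these entries is deterministic given the instantiation; I would state this cleanly.

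The main obstacle I anticipate is carefully justifying the ``exactly one nonzero entry per surviving row/column'' claim, i.e., that \textsc{DeleteEdgesCrossing} really does reduce the crossing block to a permutation-like pattern on the surviving coordinates. This requires understanding that the subroutine is called on the bipartition $(\widetilde{S}_{i+1}, T_i')$ and that its correctness guarantee (from Harvey's analysis) is precisely that upon termination every remaining crossing entry is undeletable while the matrix stays nonsingular; combined with \Cref{lemma:ss-basis} and a counting/rank argument on $C_i[\widetilde{S}_{i+1} \setminus \widetilde{S}_i, T_i' \setminus T_i'']$, one concludes this submatrix is square and its nonzero pattern contains a unique perfect matching, forcing every entry off that matching to vanish (else it would be deletable). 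I would organize the write-up as: first recall the guarantee of \textsc{DeleteEdgesCrossing}; then define $\widetilde{S}_i$, $T_i''$ and observe the block structure of $C_i$; then prove each vertex of $\widetilde{S}_{i+1}\setminus \widetilde{S}_i$ and of $T_i' \setminus T_i''$ has exactly one crossing neighbor; and finally note these entries lie in $A$'s support so the corresponding edges are in $G$.
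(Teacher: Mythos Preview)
Your plan correctly splits the claim into ``surviving entries are edges of $G$'' and ``each surviving row/column has exactly one nonzero crossing entry,'' but you have the probabilistic content backwards. The first part is indeed deterministic: crossing entries of $B_i'$ (hence of $C_i$) are literal Tutte-matrix entries, so nonzero implies $uv\in E$. The ``with high probability'' is needed for the \emph{second} part, and your proposed deterministic argument there does not go through.

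Concretely, it is \emph{not} true in general that after \textsc{DeleteEdgesCrossing} terminates, each row of the crossing block has at most one surviving nonzero. Take the $4\times 4$ skew-symmetric matrix over a field of characteristic~$2$ with all six above-diagonal entries equal to~$1$, and bipartition $U=\{1,4\}$, $W=\{2,3\}$. The Pfaffian is $C[1,2]C[3,4]+C[1,3]C[2,4]+C[1,4]C[2,3]=1$, yet zeroing any single entry kills it; in particular every crossing entry is undeletable, so \textsc{DeleteEdgesCrossing} deletes nothing and the crossing block stays full. Thus neither ``one of them would still be deletable'' nor ``the nonzero pattern contains a unique perfect matching'' can be derived from \Cref{lemma:harvey} or a rank/count argument alone: such configurations genuinely occur for specific field values and are ruled out only probabilistically.

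What the paper does is argue symbolically via the row expansion of the Pfaffian (\Cref{lemma:laplace}): if $C_i[u,v]$ and $C_i[u,v']$ were both undeletable, then $C_i[u,v]\,\pf(\widehat{C_i}_{u,v})=C_i[u,v']\,\pf(\widehat{C_i}_{u,v'})$, and this forces a nonzero polynomial (one side carries the variable $x_{uv}$, the other does not) to vanish. The Schwartz--Zippel lemma then gives that this happens only with probability $O(n/|\mathbb{F}|)$ under the random instantiation, which is precisely the source of the ``with high probability'' in the statement. You should reorganize your proof around this polynomial argument rather than a purely structural one.
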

\begin{proof}
    For a vertex $u \in \widetilde{S}_{i + 1} \setminus \widetilde{S}_i$, assume that there are two vertices $v, v'$ such that~${C_i[u, v] \ne 0}$ and~$C_i[u, v'] \ne 0$.
    We show that this leads to a contradiction when the variables are treated as indeterminates.
    Let~$\widehat{C_i}_{u, w}$ be the result of deleting the rows and columns indexed by~$u$ and~$w$ from~$C_i$.
    By \Cref{lemma:laplace}, we have
    \begin{align} \label{eq:expansion}
        \pf(C_i) = C_i[u, v] \pf(\widehat{C_i}_{u, v}) + C_i[u, v'] \pf(\widehat{C_i}_{u, v'}) + p,
    \end{align}
    where $p = \sum_{\substack{w \in \widetilde{S}_{i+1} \cup T'_i,\, w \notin \{v, v'\}}} C_i[u, w] \pf(\widehat{C_i}_{u, w})$.
    
    By the assumption that deleting the corresponding edge $uv$ or $uv'$ (i.e., setting $x_{uv} = 0$ or~${x_{uv'} = 0}$)
    results in a singular matrix (this is due to Harvey's algorithm), we have
    \begin{align*}
    C_i[u, v] \pf(\widehat{C_i}_{u, v}) + p = 0 \text{ and } 
    C_i[u, v'] \pf(\widehat{C_i}_{u, v'}) + p = 0.
    \end{align*}
    We thus obtain $C_i[u, v] \pf(\widehat{C_i}_{u, v}) = C_i[u, v'] \pf(\widehat{C_i}_{u, v'}) = -p$.
    Note that $p \ne 0$, since otherwise~$\pf(C_i) = -p = 0$ by \Cref{eq:expansion}, which is a contradiction because $C_i$ is nonsingular by Harvey's algorithm.
    The left-hand side $C_i[u, v] \pf(\widehat{C_i}_{u, v})$ is multiplied by a variable~${C_i[u, v] = x_{u, v}}$, but this variable does not appear on the right-hand side~${C_i[u, v'] \pf(\widehat{C_i}_{u, v'})}$.
    Thus, we have that~$Q = C_i[u, v] \pf(\widehat{C_i}_{u, v}) - C_i[u, v'] \pf(\widehat{C_i}_{u, v'})$ is a polynomial that is not identically zero, which is a contradiction.
    We thus have exactly one vertex $v \in T'_i \setminus T''_i$ such that $C_i[u, v] \ne 0$.
    Since $Q$ has degree at most $n$, this evaluates to non-zero with high probability by the Schwartz--Zippel lemma.
    An analogous argument shows that for every~$v \in T'_i \setminus T''_i$, there is exactly one $u \in \widetilde{S}_{i + 1} \setminus \widetilde{S}_i$ such that $C_i[u, v] \ne 0$.
    Since every nonzero entry in $C_i[\widetilde{S}_{i + 1}, T_i]$ corresponds to an edge in $G$, we are done.
\end{proof}

We can now show that our main invariant can be maintained with high probability.

\begin{restatable}{lemma}{nonsingularity}\label{lemma:nonsingularity}
    The matrices $\gamma_{i}[\widetilde{S}_{i}]$ and $\alpha_i[T''_i]$ are nonsingular with high probability.
\end{restatable}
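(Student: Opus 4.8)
The plan is to analyze the matrix $C_i$ obtained after running \textsc{DeleteEdgesCrossing} on $B_i'$ and to use the Schur complement once more. Recall that $C_i$ agrees with $B_i'$ on the blocks indexed by $T_i'$ (so $C_i[T_i'] = \alpha_i$) and on $\gamma_i[\widetilde{S}_{i+1}]$, and differs only in the off-diagonal blocks $C_i[\widetilde{S}_{i+1}, T_i']$ and $C_i[T_i', \widetilde{S}_{i+1}]$, where some entries have been zeroed out. By Harvey's algorithm (which only deletes an entry if the resulting matrix stays nonsingular), $C_i$ is nonsingular. By \Cref{lemma:submatching}, after deletion the rows $\widetilde{S}_{i+1}\setminus\widetilde{S}_i$ and columns $T_i'\setminus T_i''$ of $C_i[\widetilde{S}_{i+1},T_i']$ form a permutation pattern realizing $M_i'$; the rows $\widetilde{S}_i$ and columns $T_i''$ are identically zero in those off-diagonal blocks. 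So after permuting rows/columns, $C_i$ has the block-triangular shape with the matching block $M_i'$ (a nonsingular "antidiagonal" block pairing $\widetilde{S}_{i+1}\setminus\widetilde{S}_i$ with $T_i'\setminus T_i''$) and the two remaining diagonal blocks $\alpha_i[T_i''] = C_i[T_i'']$ and $\gamma_i[\widetilde{S}_i] = C_i[\widetilde{S}_i]$, with the interaction between $\widetilde{S}_i$ and $T_i''$ still zero (since $C_i[\widetilde{S}_i, T_i'] = \mathbf{0}$ in particular restricts to zero on $T_i''$).

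The key computation: taking $\det C_i$ and expanding along the block-triangular structure, $\det C_i = \pm \det(\text{matching block})^{\,?}\cdot \det(\alpha_i[T_i'']) \cdot \det(\gamma_i[\widetilde{S}_i])$ — more carefully, the permutation pairing $\widetilde{S}_{i+1}\setminus\widetilde{S}_i$ with $T_i'\setminus T_i''$ contributes a nonzero factor (a product of the surviving nonzero entries, up to sign), and the rest is a skew-symmetric block on $T_i'' \cup \widetilde{S}_i$ which is block-diagonal with blocks $\alpha_i[T_i'']$ and $\gamma_i[\widetilde{S}_i]$ (the cross block $\beta_i[T_i'', \widetilde{S}_i]$ having been deleted, since every surviving off-diagonal entry lies between $\widetilde{S}_{i+1}\setminus\widetilde{S}_i$ and $T_i'\setminus T_i''$). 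Since $C_i$ is nonsingular, all these factors are nonzero; in particular $\det(\alpha_i[T_i'']) \ne 0$ and $\det(\gamma_i[\widetilde{S}_i]) \ne 0$, which is exactly the claim. The "with high probability" qualifier enters only through \Cref{lemma:submatching}: the clean permutation structure of the surviving entries holds with high probability (a polynomial identity argument plus Schwartz--Zippel), and we can union-bound over the $O(n/k)$ stages, the $O(k)$ vertices per stage, keeping the total failure probability $o(1)$ by choosing the field $\mathbb{F} = \GF(2^{c\lceil\log n\rceil})$ with $c$ large enough.

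So the steps, in order: (i) recall that $C_i$ is nonsingular and that $C_i[T_i'] = \alpha_i$, $C_i[\widetilde{S}_{i+1}] = \gamma_i[\widetilde{S}_{i+1}]$ by construction; (ii) invoke \Cref{lemma:submatching} to get that, with high probability, the only surviving nonzero entries of the off-diagonal blocks of $C_i$ pair $\widetilde{S}_{i+1}\setminus\widetilde{S}_i$ bijectively with $T_i'\setminus T_i''$; (iii) conclude that, under a suitable ordering of rows/columns, $C_i$ becomes block triangular with the matching block, $\alpha_i[T_i'']$, and $\gamma_i[\widetilde{S}_i]$ on (anti)diagonal blocks, the $\widetilde{S}_i$–$T_i''$ cross block being zero; (iv) compute $\det C_i$ as a product of the determinants of these blocks (the matching block contributing a product of surviving nonzero scalars, hence nonzero); (v) since $\det C_i \ne 0$, deduce $\det(\alpha_i[T_i'']) \ne 0$ and $\det(\gamma_i[\widetilde{S}_i]) \ne 0$; (vi) union-bound the failure probabilities over all stages. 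The main obstacle I anticipate is step (iii)/(iv): making the block-triangularization precise — one must check that no surviving off-diagonal entry connects $\widetilde{S}_i$ to $T_i''$ (which follows directly from the definitions of $\widetilde{S}_i$ and $T_i''$ as the zero rows/columns of the relevant blocks of $C_i$) and then argue carefully that the determinant factorizes, paying attention to signs (harmless in characteristic $2$) and to the fact that the matching block's determinant is a product of nonzero field elements and thus nonzero.
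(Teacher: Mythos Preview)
Your proof has a genuine gap in step (iii)/(iv): the matrix $C_i$ is \emph{not} block-triangular in any ordering that would give the factorization you claim. Write $C_i$ with row/column ordering $T_i'',\, T_i'\setminus T_i'',\, \widetilde{S}_{i+1}\setminus\widetilde{S}_i,\, \widetilde{S}_i$. Then
\[
C_i \;=\;
\begin{bmatrix}
\alpha_i[T_i''] & \alpha_i[T_i'',\,T_i'\setminus T_i''] & \mathbf{0} & \mathbf{0}\\
\alpha_i[T_i'\setminus T_i'',\,T_i''] & \alpha_i[T_i'\setminus T_i''] & P & \mathbf{0}\\
\mathbf{0} & -P^T & \gamma_i[\widetilde{S}_{i+1}\setminus\widetilde{S}_i] & \gamma_i[\widetilde{S}_{i+1}\setminus\widetilde{S}_i,\,\widetilde{S}_i]\\
\mathbf{0} & \mathbf{0} & \gamma_i[\widetilde{S}_i,\,\widetilde{S}_{i+1}\setminus\widetilde{S}_i] & \gamma_i[\widetilde{S}_i]
\end{bmatrix}.
\]
The off-diagonal blocks inside $\alpha_i$ and inside $\gamma_i[\widetilde{S}_{i+1}]$ (and the diagonal blocks $\alpha_i[T_i'\setminus T_i'']$, $\gamma_i[\widetilde{S}_{i+1}\setminus\widetilde{S}_i]$) are in general nonzero --- \textsc{DeleteEdgesCrossing} only touches the $(\widetilde{S}_{i+1},T_i')$ block, not the interior of $\alpha_i$ or $\gamma_i$. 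Consequently there are perfect matchings in the Pfaffian expansion of $C_i$ that avoid $M_i'$ entirely: match $T_i'$ internally via $\alpha_i$ and $\widetilde{S}_{i+1}$ internally via $\gamma_i[\widetilde{S}_{i+1}]$. So from the zero pattern alone one only gets $\pf(C_i)=\bigl(\prod_{e\in M_i'}x_e\bigr)\pf(\alpha_i[T_i''])\pf(\gamma_i[\widetilde{S}_i]) + (\text{terms not containing all of }M_i')$, and there is no reason for the extra terms to vanish.

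What kills those extra terms is precisely the property you never invoke: each surviving entry $x_e$ with $e\in M_i'$ is \emph{undeletable}, i.e.\ setting it to zero makes $C_i$ singular. Hence $\pf(C_i)|_{x_e=0}=0$ for every $e\in M_i'$, which forces every monomial of $\pf(C_i)$ to be divisible by $\prod_{e\in M_i'}x_e$; only then does the sum collapse to the single term $\bigl(\prod_{e\in M_i'}x_e\bigr)\pf(\alpha_i[T_i''])\pf(\gamma_i[\widetilde{S}_i])$ (using that $C_i[\widetilde{S}_i,T_i'']=\mathbf{0}$). This is exactly how the paper argues. Your proof sketch imports the conclusion of \Cref{lemma:submatching} (that $M_i'$ is a matching) but not the mechanism behind it (undeletability), and without that mechanism the determinant simply does not factor.
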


\begin{proof}
    Consider the monomial expansion of $\pf(C_i)$.
    Since plugging $x_{uv} = 0$ into $\pf(C_i)$ results in~0 for every nonzero entry $C_i[u, v]$ with $u \in \widetilde{S}_{i + 1} \setminus \widetilde{S}_i$ and $v \in T'_i \setminus T_i''$, every monomial in $\pf(C_i)$ is divisible by $x_{uv}$.
    In fact, every monomial is divisible by $\prod_{e \in M_i'} x_e$.
    Thus, in the summation~$\pf(C_i) = \sum_{M \in \mathcal{M}} \prod_{uv \in M} C_i[u, v]$, we can actually assume that $\mathcal{M}$ is the collection of matchings containing~$M_i'$, where $M_i'$ is a matching between $\widetilde{S}_{i + 1} \setminus \widetilde{S}_i$ and $T_i' \setminus T_i''$ in \Cref{lemma:submatching}.
    More formally, let~$\mathcal{M}_i$ be the set of all matchings in $(V',\binom{V'}{2})$, where~$V' = \widetilde{S}_i \cup T''_i$.
    Then,~$\pf(C_i) = \prod_{e \in M_i'} x_e \sum_{M \in \mathcal{M}_i} \prod_{uv \in M} C_i[u, v]$.
    Since the submatrix $C_i[\widetilde{S}_i, T''_i]$ is a zero matrix by definition, we can further assume that $\mathcal{M}_i$ only contains the product of all perfect matchings in~$\widetilde{S}_i$ and in~$T''_i$.
    Formally, let~$\mathcal{N}_i$ and $\mathcal{P}_i$ be the collection of all perfect matchings in complete graphs over vertex sets $\widetilde{S}_i$ and $T''_i$, respectively.
    Then,
    \begin{align*}
    \pf(C_i)
    = \prod_{e \in M_i'} x_e \sum_{M_1 \in \mathcal{N}_i, M_2 \in \mathcal{P}_i} \prod_{uv \in M_1} C_i[u, v] \prod_{uv \in M_2} C_i[u, v].
    = \prod_{e \in M_i'} x_e \cdot \pf(\gamma_{i}[\widetilde{S}_i]) \pf(\alpha_i[T''_i]),
    \end{align*}
    
    Since $\pf(C_i) \ne 0$, we have $\pf(\gamma_{i}[\widetilde{S}_i]) \ne 0$ and $\pf(\alpha[T''_i]) \ne 0$.
    Unless the subroutine \textsc{DeleteEdgesCrossing} fails, we will have $\pf(C_i) \ne 0$ in our algorithm.
    The failure probability is at most $\poly(n) / |\mathbb{F}|$ as argued by Harvey.
\end{proof}

We next show how to find a perfect matching.

\begin{restatable}{proposition}{findpm}
    \label{prop:find-pm}
    Given a graph with a perfect matching and a $k$-separator, we can find a perfect matching in $O(k^{\omega - 1}n)$ time.
\end{restatable}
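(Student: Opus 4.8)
The plan is to convert the structural analysis behind \Cref{lemma:schur-nonsingularity,lemma:submatching,lemma:nonsingularity} into a constructive algorithm that processes the components $T_1',\dots,T_\nu'$ in reverse order and assembles a perfect matching out of the matchings $M_i',M_i''$ produced along the way, together with a final matching inside $\widetilde S_1$.

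First I would repeat the preprocessing from the proof of \Cref{prop:pm}: instantiate the Tutte matrix $A$ over $\mathbb{F}=\GF(2^{c\lceil\log n\rceil})$, compute for each $i$ a basis $T_i'\subseteq T_i$ of $A[T_i]$ (so $\alpha_i=A[T_i']$ is nonsingular by \Cref{lemma:ss-basis}), form $S^*$, and verify $\sum_i|T_i^*|\le|S|$; since $G$ has a perfect matching this inequality holds and $|S^*|\le 2k$. Then I compute $\gamma_1,\dots,\gamma_{\nu+1}$ via the iterated Schur complement $\gamma_{i+1}=\gamma_i+\beta_i^T\alpha_i^{-1}\beta_i$, spending $O(k^\omega)$ per index and hence $O(k^{\omega-1}n)$ in total. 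Because $G$ has a perfect matching, $A$ and therefore $B_{\nu+1}=\gamma_{\nu+1}$ is nonsingular with probability at least $1-n/|\mathbb{F}|$; setting $\widetilde S_{\nu+1}=S^*$ establishes the invariant ``$\gamma_{i+1}[\widetilde S_{i+1}]$ is nonsingular'' for $i=\nu$.

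The core is the reverse loop over $i=\nu,\dots,1$: form the $O(k)\times O(k)$ matrix $B_i'$, which is nonsingular by \Cref{lemma:schur-nonsingularity}; compute $(B_i')^{-1}$ and run Harvey's subroutine \textsc{DeleteEdgesCrossing} on the bipartition $(\widetilde S_{i+1},T_i')$, both in $O(k^\omega)$ time, obtaining the skew-symmetric matrix $C_i$; let $\widetilde S_i\subseteq\widetilde S_{i+1}$ and $T_i''\subseteq T_i'$ be the vertices whose corresponding crossing row/column of $C_i$ has become zero. By \Cref{lemma:submatching} the remaining nonzero crossing entries of $C_i$ form (w.h.p.) a matching $M_i'$ perfectly matching $\widetilde S_{i+1}\setminus\widetilde S_i$ to $T_i'\setminus T_i''$, and since each such entry has a coordinate in $T_i'$ it corresponds to a genuine edge of $G$. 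By \Cref{lemma:nonsingularity}, with high probability $\gamma_i[\widetilde S_i]$ is nonsingular (preserving the invariant) and $\alpha_i[T_i'']$ is nonsingular, so $G[T_i'']$ has a perfect matching $M_i''$, which I find in $O(k^\omega)$ time with one of the $O(n^\omega)$-time perfect-matching algorithms~\cite{DBLP:journals/siamcomp/Harvey09,DBLP:conf/focs/MuchaS04}; set $M_i=M_i'\cup M_i''$. The loop costs $O(\nu k^\omega)=O(k^{\omega-1}n)$ time (using $\omega>2$ for Harvey's recursion, as in the description above).

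After the loop the invariant gives that $\gamma_1[\widetilde S_1]=A[\widetilde S_1]$ is nonsingular, so $G[\widetilde S_1]$ has a perfect matching $M_0$, again found in $O(k^\omega)$ time, and I output $M=M_0\cup\bigcup_{i=1}^\nu M_i$. For correctness note that $V=S^*\uplus\biguplus_i T_i'$, that $S^*=\widetilde S_1\uplus\biguplus_{i=1}^\nu(\widetilde S_{i+1}\setminus\widetilde S_i)$ because $\widetilde S_1\subseteq\dots\subseteq\widetilde S_{\nu+1}=S^*$, and that $T_i'=T_i''\uplus(T_i'\setminus T_i'')$; the matchings $M_0$, $M_i''$, and $M_i'$ saturate exactly $\widetilde S_1$, $T_i''$, and $(\widetilde S_{i+1}\setminus\widetilde S_i)\cup(T_i'\setminus T_i'')$, respectively, so these matchings are vertex-disjoint and $M$ saturates every vertex exactly once, using (by the edge remarks above) only edges of $G$; hence $M$ is a perfect matching, computed in $O(k^{\omega-1}n)$ time. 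For the success probability, each of the $O(n)$ random events used — the Schwartz--Zippel arguments inside \Cref{lemma:submatching,lemma:nonsingularity}, the $O(\nu)$ calls to \textsc{DeleteEdgesCrossing}, and the perfect-matching subroutines on the $\alpha_i[T_i'']$-blocks and on $\widetilde S_1$ — fails with probability $\poly(n)/|\mathbb{F}|$, so a union bound with $|\mathbb{F}|=\GF(2^{c\lceil\log n\rceil})$ for large enough $c$ bounds the overall failure probability by a constant. The step needing the most care is exactly this bookkeeping: one must keep in mind that for $i>1$ the matrix $\gamma_i$ is no longer an adjacency submatrix, yet the crossing block and the $\alpha_i[T_i'']$-block of $C_i$ still consist only of edge variables, and one must check that the three index families above are pairwise disjoint and cover $V$, so that $M$ is a perfect matching of $G$ rather than of some complete super-graph.
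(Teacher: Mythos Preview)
Your proposal is correct and follows essentially the same approach as the paper: preprocess via the Schur-complement iteration of \Cref{prop:pm}, run the reverse loop that applies \textsc{DeleteEdgesCrossing} to each $B_i'$ to extract $M_i'$ and $M_i''$ via \Cref{lemma:submatching,lemma:nonsingularity}, and finish with a perfect matching in $G[\widetilde S_1]$. The paper's own proof is terser but structurally identical; your additional bookkeeping (the partition of $V$ and the union-bound on failure events) is exactly what is implicit in the paper's ``putting all these perfect matchings together'' and its $\poly(n)/|\mathbb{F}|$ bound.
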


\begin{proof}
    For each $i \in [\ncc]$, our algorithm finds $\widetilde{S_i} \subseteq \widetilde{S}_{i + 1}$ using the subroutine \textsc{DeleteEdgesCrossing}.
    There are perfect matchings $M_i'$ in $G[(\widetilde{S}_{i + 1} \setminus \widetilde{S_i}) \cup (T_i' \setminus T_i'')]$ and $M_i''$ in $G[T_i'']$ by \Cref{lemma:submatching,lemma:nonsingularity}, respectively.
    Harvey's algorithm finds $M_i''$ in $O(k^{\omega})$ time.
    By \Cref{lemma:nonsingularity}, the invariant that~$\gamma_i[\widetilde{S}_i]$ is nonsingular holds for every $i \in [\ncc + 1]$.
    In particular, $\gamma_1[\widetilde{S}_1] = \gamma[\widetilde{S}_1]$ is nonsingular, and thus Harvey's algorithm finds a perfect matching $M'$ in $G[\widetilde{S}_1]$ in $O(k^{\omega})$~time.
    Putting all these perfect matchings together yields a perfect matching for $G$.
    Our algorithm runs in~$O(\ncc k^{\omega}) = O(k^{\omega - 1} n)$~time.
    The failure probability is bounded by $\poly(n) / |\mathbb{F}|$, which is a constant when $\mathbb{F}$ is sufficiently large, e.g., $|\mathbb{F}| \in \Theta(n^5)$.
\end{proof}

\subsection{Finding a maximum matching} \label{ss:find-mm}
To find (the size of) a maximum matching when no perfect matching exists, we use the following standard approach.
It is known that the size of a maximum matching equals the size of a perfect matching in the graph $G[X]$, where~$X$ is a basis of the Tutte matrix \cite{DBLP:journals/siamcomp/Harvey09,DBLP:conf/focs/MuchaS04,DBLP:journals/jal/RabinV89}.
We show how to find a basis in $O(k^{\omega - 1} n)$ time.

We do the following for each $i \in [\ncc]$.
First, we compute a basis $T_i'$
of $\alpha_i[T_i]$ in~$O(k^\omega)$ time.
Note that we do not have the assumption that $\alpha_i$ is nonsingular.
We compute the Schur complement~$C$ of the block $A[\bigcup_{i \in [\nu]} T_i']$.
As in \Cref{lemma:schur}, taking the Schur complement of $A[\bigcup T_i']$ affects the submatrix $A[S \cup T_i]$.
For $T_i'' = T_i \setminus T_i'$, we thus have
\begin{align*}
    A[S \cup T_i] &= 
    \begin{blockarray}{cccc}
        & T_i' & T_i'' & S \\
        \begin{block}{c[ccc]}
            T_i' & \alpha_{i}[T_i'] & \alpha_i[T_i', T_i''] & \beta_i[T_i', \cdot] \\
            T_i'' & \alpha_i[T_i'', T_i'] & \bold{0} & \beta_i[T_i \setminus T_i', \cdot] \\
            S & \beta_i[\cdot, T_i'] & \beta_i[\cdot, T_i \setminus T_i'] & \gamma \\
        \end{block}
    \end{blockarray} \text{ and } \\
    C[S \cup T_i] &= \begin{blockarray}{ccc}
        & T_i' & T_i'' \\
        \begin{block}{c[cc]}
            T_i'' & -\alpha_i[T_i'', T_i'] (\alpha_i[T_i'])^{-1} \alpha_i[T_i, T_i''] & \beta_i' \\
            S & -(\beta_i')^T & \gamma' \\
        \end{block}
    \end{blockarray}
\end{align*}
where $\beta_i' = \beta_i[T_i \setminus T_i', \cdot] - \alpha_i[T_i', T_i''] (\alpha_i[T_i'])^{-1} \beta_i[T_i', \cdot]$.
By \Cref{lemma:schur}, it suffices to compute a basis in the Schur complement $C$.
We claim that $C[T_i'']$ is actually a zero matrix:
To see why, assume that there is an entry $C[u, v] = C[v, u] \ne 0$ with $u, v \in T_i$, i.e., $C[\{ u, v \}]$ is nonsingular.
We then have that $T_i' \cup \{ u, v \}$ is nonsingular in $A[T_i']$, contradicting that $T_i'$ is a basis of $A[T_i]$.
Thus, we have $A[T_i'', T_j''] = \mathbf{0}$ for all $i, j \in [\nu]$.
For a row basis $Y$ of $A[\bigcup_{i \in [\ncc]} T_i'', S]$, 
 we may assume that~$X \subseteq S \cup Y$.
Note that $Y$ can be computed in $O(k^{\omega - 1} n)$ time \cite{bunch1974triangular}.
Thus, we may delete all rows and columns in $\bigcup_{i \in [\ncc]} T_i''$ not part of $Y$.
Now the matrix $A$ has at most $2k$ rows and columns, and we can compute a basis of $A$ in~$O(k^\omega)$ time.
The whole procedure takes overall~$O(k^{\omega - 1} n)$~time and applying \Cref{prop:find-pm} to the computed basis yields \Cref{theorem:matching}.

\subsection{Parameterizing by feedback vertex number}
\label{ss:matching-fvs}

Recall that the feedback vertex number is the minimum number of vertices such that the graph becomes a forest after their deletion.
In this section, we prove the following:

\matchingfvs*

We start off by computing a feedback vertex set $S'$ of size at most $4\varphi$ in linear time \cite{DBLP:journals/siamcomp/Bar-YehudaGNR98} where $\varphi$ is the feedback vertex number.
By the same argument as in \Cref{ss:detect-pm}, we find a feedback vertex set~$S$ of size at most~$2|S'| \leq 8 \varphi$ such that~$G-S$ contains a perfect matching.
We will henceforth denote the size of $S$ by $k$.

Let $A$ be the Tutte matrix.
Our algorithm substitutes the variables $x_{uv}$ in $A$ with random elements from $\mathbb{F}$ and checks whether the resulting matrix $A$ is nonsingular.
To check whether $A$ is nonsingular, we would like to compute the Schur complement $A[S] - A[S, V \setminus S] B^{-1} A[V \setminus S, S]$, where $B := A[V \setminus S]$.
We show how to compute this in $O(k^{\omega - 1} n)$ time.
To that end, we use the algorithm of Fomin et al.~\cite{DBLP:journals/talg/FominLSPW18}, that solves a system of linear equations $Bx = b$ in $O(w^3 n)$ time, when the bipartite graph associated with $B$ has tree-width $w$.
In particular, the algorithm runs in~$O(n)$~time when the treewidth~$w$ is constant.
Using this algorithm $k$ times, $B^{-1} A[V \setminus S, S]$ can be computed in $O(kn)$ time.
With~$B^{-1} A[V \setminus S, S]$ at hand, we can compute the Schur complement in $O(k^{\omega - 1})$ time.
Thus, we have the following.

\begin{proposition}
    We can compute the size of a maximum matching in $O(k^{\omega - 1} n)$ time, where $k$ is the feedback vertex number.
\end{proposition}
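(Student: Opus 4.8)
The plan is to reduce computing $\mu(G)$ to the rank of an $O(k)\times O(k)$ matrix, reusing the basis-plus-Schur-complement machinery of \Cref{ss:find-mm} but over the \emph{forest} $G-S'$ rather than over bounded-size components. First I would compute a feedback vertex set $S'$ with $|S'|\in O(\varphi)$ in linear time~\cite{DBLP:journals/siamcomp/Bar-YehudaGNR98}, instantiate the indeterminates of the Tutte matrix $A$ with uniform random elements of $\mathbb{F}=\GF(2^{c\lceil\log n\rceil})$, and recall that by Lov\'asz~\cite{DBLP:conf/fct/Lovasz79} together with the Schwartz--Zippel lemma $\rank A = 2\mu(G)$ with high probability; so it suffices to compute $\rank A$ in $O(k^{\omega-1}n)$ time, where $k:=|S'|$. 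This extends the perfect-matching case treated above, where $G-S'$ happens to have a perfect matching so that $A[V\setminus S']$ is already nonsingular.

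For each tree $T_i$ of $G-S'$, compute a maximum matching $M_i$ in linear time and set $T_i':=V(M_i)$, $T_i'':=T_i\setminus T_i'$, $X:=\bigcup_i T_i'$, $S'':=\bigcup_i T_i''$. Since $G[T_i']$ is a forest with perfect matching $M_i$ and a forest has a unique perfect matching, $A[T_i']$ is a nonsingular principal submatrix of size $2\mu(T_i)=\rank A[T_i]$ (both statements hold with high probability via Schwartz--Zippel), so $T_i'$ is a row basis of $A[T_i]$; in particular $A[X]$ is nonsingular, being block diagonal with the blocks $A[T_i']$. Now form the Schur complement $C$ of $A[X]$ in $A$, indexed by $S'\cup S''$. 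The standard Schur identity gives $\rank A = |X| + \rank C$ (the rank additivity used implicitly in \Cref{ss:find-mm}), and exactly as in \Cref{ss:find-mm} one shows $C[S'']=\mathbf{0}$: a nonzero $2\times2$ principal submatrix of $C$ on $\{u,v\}\subseteq T_i''$ would make $A[T_i'\cup\{u,v\}]$ nonsingular by \Cref{lemma:schur}, contradicting that $T_i'$ is a basis of $A[T_i]$. Because $C[S'']=\mathbf{0}$, a row basis $Y\subseteq S''$ of $C[S'',S']$ satisfies $|Y|\le|S'|$, and deleting the rows and columns indexed by $S''\setminus Y$ leaves $\rank C$ unchanged; hence $\rank A = |X| + \rank C[Y\cup S']$, the rank of an $O(k)\times O(k)$ matrix, computable in $O(k^{\omega})$ time.

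The only blocks of $C$ that must actually be computed are $C[S']$ and $C[S'',S']$ (with $C[S'']=\mathbf{0}$ known for free). Since a tree $T_i'$ may have $\Theta(n)$ vertices, $A[X]$ cannot be inverted directly; instead I would use the bounded-treewidth linear solver of Fomin et al.~\cite{DBLP:journals/talg/FominLSPW18}. As $G[X]$ is a forest, the bipartite graph associated with the block-diagonal matrix $A[X]$ has treewidth $O(1)$, so $A[X]z=b$ is solved in $O(n)$ time; applying this to the $O(k)$ columns of $A[X,S']$ yields $Q:=A[X]^{-1}A[X,S']$ in $O(kn)$ time (skew-symmetry lets every needed block of $C$ be expressed through solves against $A[X]$). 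Then $C[S']=A[S']-A[S',X]\,Q$ is an $O(k)\times|X|$ times $|X|\times O(k)$ product, hence $O(k^{\omega-1}n)$ time; and $C[S'',S']=A[S'',S']-A[S'',X]\,Q$, where the key point is that $A[S'',X]$ is \emph{sparse} --- its nonzero entries are exactly the fewer than $n$ edges inside the forest $G-S'$ --- so this is a sparse-by-dense product costing $O(nk)$. Extracting the row basis $Y$ from the $|S''|\times O(k)$ matrix $C[S'',S']$ is Gaussian elimination in $O(k^{\omega-1}n)$ time, and assembling and ranking $C[Y\cup S']$ costs $O(k^{\omega})$. The running times sum to $O(k^{\omega-1}n)$, and the total failure probability is $\poly(n)/|\mathbb{F}|$, a constant for $\mathbb{F}$ large enough.

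I expect the main obstacle to be precisely this last point: one must resist the tempting shortcut of moving every unmatched forest vertex into the separator (which can inflate the parameter to $\Theta(n)$ even when $\varphi$ is tiny, e.g.\ on a long path with a large star glued on), and instead verify that the Schur complement $C$ collapses --- via $C[S'']=\mathbf{0}$ and the row-basis argument --- to an $O(k)\times O(k)$ matrix, while checking that every matrix product touching $A[X]$ or $A[S'',X]$ stays within the $O(k^{\omega-1}n)$ budget by combining Fomin et al.'s solver with the sparsity of the intra-forest part of $A$.
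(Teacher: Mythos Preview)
Your proof is correct and uses the same two ingredients the paper relies on---Fomin et al.'s bounded-treewidth linear solver for $A[X]^{-1}b$ and the Schur-complement rank reduction---but you deploy them slightly differently from the paper's immediate argument for this proposition. The paper first enlarges $S'$ to an $S$ with $|S|\le 2|S'|$ so that $G-S$ has a perfect matching (justified ``by the same argument as in \Cref{ss:detect-pm}'', though that bound on the enlargement is only established there under the hypothesis that $G$ has a perfect matching), and then takes a single Schur complement of the now-nonsingular block $A[V\setminus S]$. You instead keep $S'$ fixed, take the Schur complement of $A[X]$ for $X=V(M)$ with $M$ a maximum forest matching, and use $C[S'']=\mathbf{0}$ to collapse $C$ to an $O(k)\times O(k)$ matrix---exactly the device the paper itself invokes in the final paragraph of \Cref{ss:matching-fvs} when it moves on to \emph{finding} a maximum matching. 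Your route cleanly avoids the enlargement pitfall you identify (which indeed fails on a star), and your sparsity argument for computing $C[S'',S']$ via $A[S'',X]\,Q$ makes explicit a running-time detail that the paper leaves implicit.
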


Next, we discuss how to construct a perfect matching.
Let $\nu = n / k$.
We discuss how to find a perfect matching in~$O(k^{\omega} \nu \log \nu)$~time.
Our algorithm uses a divide and conquer approach following Harvey's algorithm.
We apply \textsc{DeleteEdgesCrossing} on $\nu$ matrices of dimension $O(k) \times O(k)$ to find all deletable edges $uv$ with $u \in S$ and $v \in V(G) \setminus S$.
The inverse needs to be updated after each call of \textsc{DeleteEdgeCrossing}.
To achieve this efficiently, the divide and conquer approach becomes essential.

The first step is to partially compute the inverse of the Tutte matrix.
We use the following formula to invert a block matrix:
\begin{align*}
    \begin{pmatrix}
        \alpha & \beta \\ \gamma & \delta
    \end{pmatrix}^{-1}
    = \begin{pmatrix}
        (\alpha - \beta \delta^{-1} \gamma)^{-1} & - (\alpha - \beta \delta^{-1} \gamma)^{-1} \beta \delta^{-1} \\ \delta^{-1} \gamma (\alpha - \beta \delta^{-1} \gamma)^{-1} & \delta^{-1} + \delta^{-1} \gamma (\alpha - \beta \delta^{-1} \gamma) \beta \delta^{-1}
    \end{pmatrix}
\end{align*}
Thus, we have $A^{-1}[S, V \setminus S] = -(A[S] - A[S, V \setminus S] B^{-1} A[V \setminus S, S])^{-1} A[S, V \setminus S] B^{-1}$, which can be computed in $O(k^{\omega - 1} n)$ time using the aforementioned algorithm of Fomin et al.~\cite{DBLP:journals/talg/FominLSPW18}.

We divide $V \setminus S$ into $\nu$ sets $T_1, \dots, T_{\nu}$ of equal size.
Let $B = A^{-1}[S, V \setminus S]$.
Our algorithm takes an interval $J \subseteq [\nu]$ as input.
We maintain the invariant that $B[\cdot, \bigcup_{i \in J} T_i] = A^{-1}[S, \bigcup_{i \in J} T_i]$ holds initially.
We give a recursive procedure \textsc{DeleteEdges} in \Cref{algorithm:divide-conquer}.
If $|J| = 1$, then we apply \textsc{DeleteEdgeCrossing} (from Harvey's algorithm) on $(S, T_i)$.
If $|J| > 1$, we partition $J$ into two sets $J_1$ and $J_2$, and recursively consider each.
After the first recursion, we update the inverse $B[S, \bigcup_{i \in J_2} T_i]$ as follows using \Cref{lemma:harvey} in order to maintain the invariant: 
\begin{align} \label{eq:myupdate}
    B[S, \bigcup_{i \in J_2} T_i] = B - B[S, S] (I + \Delta B[\bigcup_{i \in J_1} T_i, S])^{-1} \Delta S[T, \bigcup_{i \in J_2} T_i].
\end{align}
Here $\Delta$ is the difference in the Tutte matrix that occurs from edge deletions during the first recursion and it has size~$k \times |J_1| \cdot k$.
One can verify that this update can be performed in~${O(|J| \cdot k^{\omega})}$~time.
Let $f(\nu)$ denote the running time of \textsc{DeleteEdges}.
We have the recurrence
\begin{align*}
    f(\nu) = 2 f(\nu) + O(\nu k^{\omega}).
\end{align*}
Thus, we obtain $f(\nu) = O(\nu k^{\omega} \log \nu)$.

\begin{algorithm}[t]
\caption{An algorithm for finding a perfect matching}
\label{algorithm:divide-conquer}

\begin{algorithmic} 
\Procedure{DeleteEdges}{$I$}
    \State \textit{Invariant:} $B[\cdot, \bigcup_{i \in J} T_i] = A^{-1}[S, \bigcup_{i \in J} T_i]$.
    \If{$|J| = 1$}
        \State apply \Call{DeleteEdgeCrossing}{$S, T_i$} for $J = \{ i \}$
    \Else
        \State Partition $J$ into $J_1$ and $J_2$ of equal size
        \State \Call{DeleteEdges}{$J_1$}
        \State Update $B[S, \bigcup_{i \in J_2} T_i]$ using Equation~\eqref{eq:myupdate}
        \State \Call{DeleteEdges}{$J_2$}
    \EndIf
\EndProcedure
\end{algorithmic}
\end{algorithm}

It can be shown that non-zero entries in $A[S, V \setminus S]$ define a matching $M$ and that $A[S \setminus V(M)]$ and $A[(V \setminus S) \setminus V(M)]$ are nonsingular with high probability as we did in \Cref{lemma:submatching}.
We can find a perfect matching in $G[S \setminus V(M)]$ in $O(k^{\omega})$ time and a perfect matching in $G[(V \setminus S) \setminus V(M)]$ in~$O(n)$~time.

To find a maximum matching, it suffices to find a basis~$X$ of the Tutte matrix~$A$, and apply the algorithm for finding a perfect matching on~$G[X]$.
We find a maximum matching $M$ in $G[V \setminus S]$ in $O(n)$ time.
Let $T = V(M)$ and $T' = V \setminus (S \cup T)$.
Note that $A[T]$ is nonsingular.
We search for a basis in the Schur complement $A' = A[S \cup T'] - A[S \cup T', T] A[T]^{-1} A[T, S \cup T']$.
As we argued in \Cref{ss:find-mm}, we have $A'[T'] = \bold{0}$,
and thus a basis $Y$ in $A'$ can be found in $O(k^{\omega - 1} n)$ time.
Then, $Y \cup T$ forms a basis for $A$.
Thus, we have shown \Cref{theorem:matching-fvs}.

\section{All-Pairs Shortest Paths}
\label{sec:apsp}

	In this section, we study the well-known problem \textsc{All-Pairs Shortest Paths} (APSP) on unweighted graphs.
    APSP on unweighted graphs can be solved in~$O(n^{\omega})$-time using matrix multiplication~\cite{DBLP:conf/stoc/Seidel92}.
    We develop an algorithm which is faster than~$O(n^{\omega})$ when $k$ is sufficiently small, that is,~$k \in O(n^{2(\omega -2)/(\omega -1)}) \subseteq O(n^{0.541})$.

	\begin{restatable}{theorem}{apsp}
    	\label{thm:apsp}
		Given an (unweighted and undirected) graph~$G=(V,E)$ and a $k$-separator for~$G$, \textsc{APSP} can be solved in~$O(k^{(\omega-1)/2}n^2) \subseteq O(k^{0.687} n^2)$-time.
	\end{restatable}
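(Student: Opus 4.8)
The plan is to reduce \textsc{APSP} to a collection of min-plus matrix products with a bounded-difference structure, and then invoke the known subcubic algorithm for such products.

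\textbf{Set-up.} Use the decomposition $(S;T_1,\dots,T_{\ncc})$ with $\ncc = O(n/k)$ and $|T_i| = O(k)$. For $u,v\in S$ the distances $d_G(u,v)$ (and, more generally, $d_G(s,v)$ for every $s\in S$, $v\in V$) are obtained by running BFS from every vertex of $S$. For $u\in T_i$, a shortest $u$–$v$ path either stays inside $T_i$ or, using that every edge leaving $T_i$ goes to $S$, first reaches $S$ via an edge $w\to s$ with $w\in T_i$; this yields, for $v\in T_i$,
\begin{equation*}
d_G(u,v) = \min\Bigl(\dist_{G[T_i]}(u,v),\ \min_{s\in S}\bigl(d_i(u,s)+d_G(s,v)\bigr)\Bigr),
\end{equation*}
and the same identity without the first term when $v\notin T_i$; here $d_i(x,y)$ is the length of a shortest $x$–$y$ path whose internal vertices all lie in $T_i$, so $d_i(u,s)\le 2|T_i|+1=O(k)$, and $d_i(u,s)=\min_{w\in T_i\cap N(s)}(\dist_{G[T_i]}(u,w)+1)$. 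The matrices $\dist_{G[T_i]}$ are computed by BFS inside each $G[T_i]$. Stacking the rows indexed by $u\in T_i$ of $Z_i:=(d_i(u,s))_{u\in T_i,\,s\in S}$ into an $(n-k)\times k$ matrix $\mathcal Z$, and writing $P:=(d_G(s,v))_{s\in S,\,v\in\bar S}$ (a $k\times(n-k)$ matrix), all that remains is to compute the min-plus product $\mathcal Z\star P$ and, on the diagonal blocks, take the entrywise minimum with the $\dist_{G[T_i]}$ values.

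\textbf{The structured product.} Computing $\mathcal Z\star P$ naively costs $O(n^2k)$, which is too slow. Instead I would split it into the $\ncc^2=O((n/k)^2)$ blocks $Z_i\star P[\,\cdot\,,T_j]$, each a min-plus product of $O(k)\times O(k)$ matrices, and note that each such block is a \emph{bounded-difference} instance once we order the vertices of every component $T_\ell$ along an Euler tour of a spanning forest of $G[T_\ell]$, so that consecutive vertices are adjacent in $G$: then consecutive rows of $Z_i$ differ entrywise by at most $1$ (as $|d_i(u,s)-d_i(u',s)|\le 1$ for $uu'\in E$), and consecutive columns of $P[\,\cdot\,,T_j]$ differ entrywise by at most $1$ (as $|d_G(s,v)-d_G(s,v')|\le 1$ for $vv'\in E$), so the output block $Z_i\star P[\,\cdot\,,T_j]$ is bounded-difference. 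The subcubic min-plus algorithm of Chi, Duan, Xie and Zhang~\cite{DBLP:conf/stoc/ChiDX022} for such monotone/bounded-difference instances then computes each block in $\widetilde O(k^{(3+\omega)/2})$ time, for a total of $(n/k)^2\cdot\widetilde O(k^{(3+\omega)/2})=\widetilde O(k^{(\omega-1)/2}n^2)$. As a sanity check, if the diameter is a constant $D$ then all entries can be capped at $D$, each block becomes a min-plus product of matrices with $O(1)$ entries solvable via $O(1)$ Boolean products in $O(k^\omega)$ time, and the total drops to $O(k^{\omega-2}n^2)$, which is the stronger bounded-diameter statement.

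\textbf{Remaining pieces and the main obstacle.} The BFS preprocessing costs $O(k\cdot m)=O(k^2 n)$ time, which must be fit into the claimed budget (e.g.\ by instead computing $P$ from the separator-to-separator distance matrix $D_S=(d_G(s,s'))_{s,s'\in S}$ via further structured products $D_S\star R_j$); computing $D_S$ itself in time $\widetilde O(k^{(\omega+3)/2})$—for instance by iterated structured min-plus powering of the $O(k)$-vertex quotient graph in which each component is replaced by edges of weight $O(k)$—is one technical point. The heart of the argument, and where I expect the real work to be, is making the reduction to bounded-difference instances airtight: fixing the Euler-tour orderings, handling components $T_\ell$ for which $G[T_\ell]$ is disconnected (treating each connected piece as its own strip, and dealing with the degenerate regime of many tiny components, which is essentially the vertex-cover case handled separately), and checking the bounded-difference property in all directions needed so that \cite{DBLP:conf/stoc/ChiDX022} applies. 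I anticipate this bookkeeping, rather than any single estimate, to be the crux.
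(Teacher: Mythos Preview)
Your plan is essentially the paper's: decompose the computation into $O((n/k)^2)$ min-plus products of $O(k)\times O(k)$ blocks and exploit that, once vertices are ordered so that consecutive ones are adjacent, one factor is bounded-difference, whence Chi--Duan--Xie--Zhang applies. The paper resolves precisely the bookkeeping you flag as the crux via a preprocessing lemma (\Cref{lem:apsp-nice-partition}) that rebuilds the graph so every component of $G'-S'$ carries a genuine Hamiltonian path and all parts have equal size: it glues connected components together by inserting copies of separator vertices, doubles a spanning tree, and replaces every repeated occurrence of a vertex in the resulting Euler walk by a fresh degree-two copy; distances in the original graph are then read off directly from the new one. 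The only organisational difference is that the paper computes $S$-to-$S$ distances by cubic weighted APSP on a $k$-vertex quotient graph (with edge weights the local $G[S\cup T_i]$ distances) and then performs two rounds of bounded-difference products $D_i\star D_S$ and $(D_i\star D_S)\star D_j^T$, rather than your BFS-from-$S$ followed by a single round $\mathcal Z\star P$; both variants fit the budget once one assumes $k\le n^{c}$ for a suitable constant and falls back on Seidel's $O(n^\omega\log n)$ algorithm otherwise.
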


    APSP is closely connected to the min-plus-product.
    In fact, solving APSP is subcubic equivalent to computing the min-plus product of two matrices \cite{DBLP:conf/focs/FischerM71}.
    The \emph{min-plus product}~$A \star B$ of a $p \times q$-matrix~$A$ with a $q \times r$-matrices $B$ is the $p \times r$-matrix $C$ with~${C[i,j] = \min_{k \in [q]} A[i, k] + B[k, j]}$.
    While it is conjectured that there is no algorithm computing the min-plus product of two $n \times n$-matrices in $O(n^{3-\epsilon})$ time for any~$\epsilon > 0$, there is a subcubic-time min-plus matrix multiplication algorithm for bounded-difference matrices, i.e., matrices in which the difference of two adjacent entries in a row is small \cite{DBLP:conf/stoc/ChiDX022}.
    Although the distance matrix $D$ (which contains pairwise distances) does not necessarily have bounded differences, for two adjacent vertices $u$ and $v$, the difference between $D[u, w]$ and $D[u, v]$ for any~$w \in V$ is at most~1.
    The ``standard'' decomposition of~$G - S $ into $T_1, \dots, T_\ncc$ from \Cref{sec:prelim} is not convenient in this case, since two vertices in~$T_i$ may have distance $\Theta(n)$, e.g., when $T_i$ is not connected.
    Our algorithm first modifies the given graph so that we have bounded-difference distance matrices. 
    The main idea is that, for a $k$-separator $S$, we can join connected components from~$G-S$ by adding copies of vertices in~$S$.
    We will also ensure that there is a Hamiltonian path through each component (this step is essentially also used by Deng et al.~\cite{DBLP:conf/icalp/DengKRWZ22}).
    This ensures that if we rearrange the rows and columns according to the Hamiltonian path, then two consecutive rows correspond to adjacent vertices and the distances between either of them and some third vertex differ by at most one.
    \begin{restatable}{lemma}{nicepart}\label{lem:apsp-nice-partition}
         Given a graph~$G=(V,E)$ and a $k$-separator~$S$ for~$G$, we can compute in $O(n+m)$ time a graph~$G'$ and a~$O(k)$-separator $S'$ for it such that $G' - S'$ has connected components~$T_1', \dots, T_\ncc'$ for $\ncc' \in O(n/k)$, each of which has a Hamiltonian path~$H_i$, and $|S'| = |T_1'| = \dots = |T_\ncc'|$.
         Moreover, given the distance matrix~$D'$ for~$G'$, one can compute the distance matrix~$D$ of $G$ in $O(n^2)$ time.
    \end{restatable}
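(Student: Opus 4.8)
The plan is to take $G'$ to be a disjoint copy of $G$ to which, for every connected component of $G-S$, we attach a ``path gadget'' --- a graph of which we output a Hamiltonian path --- obtained by an Euler-tour doubling of a suitable tree; small components are merged into pieces of size $\Theta(k)$ by inserting fresh copies of vertices of $S$ that serve only as glue. We may assume $G$ is connected, otherwise we run the construction on each connected component of $G$ and record $\infty$ for the cross-component distances afterwards.

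First I would fix the grouping. Let $B$ be the multigraph obtained from $G$ by contracting each connected component of $G-S$ to a single node, so $V(B)=S\cup\{\text{components of }G-S\}$ and $B$ is connected with $O(n)$ nodes and $O(m)$ edges. Compute a spanning tree $\mathcal{T}_B$ of $B$, run a DFS on it, list the component-nodes $C_{\sigma(1)},C_{\sigma(2)},\dots$ in order of first visit, and cut this list into consecutive blocks of total size in $[k,2k)$ (the last block possibly smaller); this yields $O(n/k)$ blocks. For a block $\beta$, let $R_\beta$ be the smallest subtree of $\mathcal{T}_B$ containing the component-nodes of $\beta$. Because in every subtree of $\mathcal{T}_B$ the component-nodes occupy a contiguous interval of the DFS order, each edge of $\mathcal{T}_B$ lies in at most two of the $R_\beta$, so $\sum_\beta |V(R_\beta)|\in O(n)$.

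For each block $\beta$ I would build the tree $\widehat{\mathcal{T}}_\beta$ obtained from $R_\beta$ by replacing every component-node $C$ by a spanning tree of $C$, every node $s\in S\cap V(R_\beta)$ by a fresh vertex $s^{(\beta)}$, every edge $ss'$ of $R_\beta$ by $s^{(\beta)}(s')^{(\beta)}$, and every edge $sC$ of $R_\beta$ by $s^{(\beta)}r_{s,C}$ with $r_{s,C}\in N_G(s)\cap C$ (this is again a tree). Take an open Euler tour $w_1,\dots,w_L$ of $\widehat{\mathcal{T}}_\beta$, create vertices $g_1,\dots,g_L$ with $g_t$ a copy of $w_t$ (identifying the first copy of each node of $\widehat{\mathcal{T}}_\beta$ with that node), add the path edges $g_tg_{t+1}$, and add every edge of $G[C]$ for each component $C$ in $\beta$. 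Finally let $G'$ be the union of $G$ with all these gadgets, put $S'=S$, and pad --- attaching isolated vertices to $S'$ and pendant paths to the ends of the $g_1,\dots,g_L$ --- so that $|S'|$ and all $|T_j'|$ become a common value $M\in O(k)$; then $n'\in O(M\cdot n/k)\subseteq O(n)$, and everything above runs in $O(n+m)$ time.

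For correctness I would verify: each gadget is connected (it contains the Hamiltonian path $g_1,\dots,g_L$, extended through its pad), and the gadgets are exactly the connected components of $G'-S'$, since every added edge either stays inside one block or goes into $S'$. For distances, on one side $G'[V]=G$, because an added edge with both endpoints in $V$ must be a path edge duplicating a spanning-tree edge of some $G[C]$, hence $\dist_{G'}(u,v)\le\dist_G(u,v)$ for all $u,v\in V$; on the other side, define $\pi\colon V(G')\to V(G)$ to be the identity on $V$, to send every copy of $v\in V$ to $v$, every copy of a glue vertex $s^{(\beta)}$ to $s$, and the pad vertices so that pad edges become loops --- a short case check over the edge types shows $\pi$ maps every edge of $G'$ to an edge or a loop of $G$, so any $G'$-walk between $u,v\in V$ projects to a $G$-walk of no greater length and $\dist_{G'}(u,v)\ge\dist_G(u,v)$. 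Thus $\dist_{G'}$ restricted to $V$ equals $\dist_G$, so $D=D'[V,V]$, read off from $D'$ in $O(n^2)$ time. The step I expect to be the main obstacle is precisely the merging: joining two components directly, or through one new vertex, creates shortcuts, while making each glued-in copy of $S$ a faithful copy of $G[S]$ avoids shortcuts but can cost $\Theta(nk)$ edges; routing the gadgets along the contracted spanning tree $\mathcal{T}_B$ and allowing as glue only copies of edges already present in $G$ is what makes the number of pieces $O(n/k)$, the extra vertices and edges $O(n+m)$, and every gadget connected all at once.
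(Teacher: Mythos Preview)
Your overall strategy --- group components via a spanning tree of the contracted graph, glue with fresh copies of $S$-vertices, and Euler-tour to get Hamiltonian paths --- is close to the paper's, but there is a genuine gap in the merging step.

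The problem is that the Steiner tree $R_\beta$ of a block $\beta$ can contain component-nodes that are \emph{not} in $\beta$. You then ``replace every component-node $C$ by a spanning tree of $C$'' for \emph{all} component-nodes of $R_\beta$, so the original vertices of such a foreign component $C$ land inside the gadget for $\beta$. Since $C$ also lies in the gadget of its own block, two gadgets now share the vertices of $C$ and are no longer separate connected components of $G'-S'$. Worse, the size of a single gadget can blow up to $\Theta(k^2)$. Concretely, take $S=\{s_1,\dots,s_k\}$, components $C^*_1,\dots,C^*_k$ of size $k$ and $D_1,\dots,D_k$ of size $1$, with $s_i$ adjacent to $C^*_i$, $C^*_{i+1}$, and $D_i$; choose $\mathcal{T}_B$ to be the caterpillar with spine $C^*_1\!-\!s_1\!-\!C^*_2\!-\!\cdots\!-\!C^*_k\!-\!s_k$ and leaves $D_i$ at $s_i$. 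A DFS that walks the spine first and picks up the $D_i$ on the way back yields component order $C^*_1,\dots,C^*_k,D_k,\dots,D_1$, hence blocks $\{C^*_1\},\dots,\{C^*_k\},\{D_k,\dots,D_1\}$. The last block $\beta^*$ has $R_{\beta^*}$ containing $C^*_2,\dots,C^*_k$, so $\widehat{\mathcal T}_{\beta^*}$ has $\Theta(k^2)$ vertices, all of which are shared with the gadgets of $\{C^*_2\},\dots,\{C^*_k\}$. After padding, $M=\Theta(k^2)$ and $S'$ is no longer an $O(k)$-separator. Your bound $\sum_\beta |V(R_\beta)|\in O(n)$ is correct but does not control individual gadget sizes once component-nodes are expanded to their spanning trees.

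Replacing a foreign component $C\in R_\beta\setminus\beta$ by a single fresh vertex does not fix this: the edge $s^{(\beta)}c^{(\beta)}$ would project to an $s$--$v$ edge in $G$ only if some $v\in C$ is adjacent to both relevant $S$-neighbours, which need not exist, so you would create shortcuts; copying an actual $G$-path through $C$ costs $\Theta(|C|)$ vertices and you are back to $\Theta(k^2)$. The paper sidesteps this by building the spanning tree on the auxiliary graph $H$ with vertex set $S$ (so there are no foreign component-nodes to route through) and processing bottom-up, so that each group is connected to the rest via a single $G$-path of length $O(k)$ through one component and one fresh copy of an $S$-vertex.
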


\begin{proof}
            We assume without loss of generality that $G$ is connected; otherwise we can treat each connected component separately and then take the disjoint union over the graphs constructed for each connected component.
            We set~$S' := S$ and compute~$G'$ as follows:
            \begin{enumerate}
                \item For each connected component~$C$ of~$G - S$, choose one vertex~$s \in S$ that is adjacent to a vertex in~$C$.
                Let $f(C) := s$ and~$G' := G[S]$.
                \item Let $H=(S,E_H)$ be the graph with vertex set~$S$ which contains edge~$ss'$ if and only if $ss' \in E$ or there is a connected component~$C$ in $G- S$ such that there is an $s$-$s'$-path in~$G$ whose inner vertices are all contained in~$C$.
                Compute a rooted spanning tree~$T=(V_T,E_T)$ of~$H$ as follows:
                Pick an arbitrary vertex~$r$ from~$S$ to be the root of~$T$ and set~$R := \{r\}$ ($R$ will be the set of vertices currently reachable from~$r$).
                If there is an edge~$ss' \in E$ with~$s \in R $ and~$s' \in S\setminus R$, then add $s s'$ to~$T$ and set~$R := R \cup \{s'\}$.
                Otherwise pick a vertex~$s\in R$ such that there is some edge~$ss'\in E_H$ with~$s' \notin R$.
                Let $C$ be a component adjacent to both~$s$ and~$s'$.
                Add edge~$s s^*$ to~$T$ for each vertex~$s^* \in N_G (C) \setminus R$ and set $R := R \cup N_G (C)$.
                \item 
                Process the vertices from~$S$ in bottom-up manner.
                For each~$s \in S$, do the following:
                Let~$c_1, \ldots, c_\ell$ be the children of~$s$ and
                let $\mathcal{C}_s := f^{-1} (s)$ be the set of components for which the vertex~$s$ was chosen.
                Start with~$i = 1$ and $C_s^i = \emptyset$ for all $i$ ($C_s^i$ will be a connected component of~$G' - S'$).
                First, for each child $c_j$ of~$s$, we do the following:
                Let $C_{c_j}^{\max}$ be the last component created for~$c_j$ (i.e., $C_{c_j}^{i^*}$ where $i^*$ is maximal such that $C_{c_j}^{i^*}\neq \emptyset$).
                We add~$C_{c_j}^{\max}$ as well as a shortest~$s$-$c_j$-path~$P$ to~$C_s^i$ (i.e., $C_s^i = C_s^i \cup C_{c_j}^{\max} \cup P$) and reset~$C_{c_j}^{\max} := \emptyset$.
                If~$|C_s^i| > 3k$ afterwards, then add a vertex~$s_i'$ with neighbors~$N(s) \cap C_s^i$ to~$C_s^i$ and increment~$i$.
                
                Afterwards, we process the components from~$\mathcal{C}_s$.
                As long as $\mathcal{C}_s \neq \emptyset$, pick an arbitrary connected component~$C$ from~$\mathcal{C}_s$, delete~$C$ from~$\mathcal{C}_s$, and add $C$ to ${C}_s^i$ (i.e., $C_s^i := C_s^i \cup C$).
                If~$|C_s^i| > k$ or $\mathcal{C}_s = \emptyset$, then add a vertex~$s_i'$ with neighbors~$N(s) \cap C_s^i$ to~$C_s^i$, and increment~$i$.

                \item For each $s \in S$ and $i \in \mathbb{N}$, find a spanning tree~$T$ of $C_s^i$.
                Note that such a spanning tree exists since each~$C_s^i$ is connected due to the vertex~$s_i'$.
                Doubling the edges of~$T$ results in an Eulerian graph~$T'$ which admits an Eulerian walk~$P$.
                For each vertex appearing multiple times in the Eulerian walk, we replace all by its first occurrence with a new vertex connected only to its predecessor and successor in the Eulerian walk, resulting in~$\widetilde C_s^i$.
                The Eulerian walk~$P$ is a Hamiltonian path in~$\widetilde C_s^i$ and we set $H_i := P$.
                Add $\widetilde C_s^i$ to~$G'$ and add all edges between each vertex~$s \in S$ and all of its original neighbors in~$G$.

                \item To achieve that $|S'| = |T_1'| = \dots = |T_\ncc'|$, append paths of appropriate length to $H_i$ or an arbitrary vertex from~$S'$.
            \end{enumerate}

            Note that the connected components of~$G' - S$ are of the form~$\widetilde C_s^i$ for some~$s \in S $ and~$i \in \mathbb{N}$.
            We have $|\widetilde C_s^i| \le 2 |C_s^i| \le 2 (5k+1)$.
            Note that the first inequality holds since the number of copies of each vertex in~$\widetilde C_s^i$ corresponds to its degree in the Eulerian graph~$T'$ (which contains twice as many edges as~$T$) and summing over all degrees in a tree gives twice its number of edges which is one less than its number of vertices. 
            The second inequality holds as $|C_s^i| \le 3k$ whenever a connected component of~$G-S$ is added to~$C_s^i$ plus possible a shortest~$s$-$c_j$-path for a child~$c_j$ of~$s$ (and this path has length at most~$k$ as $c_j$ is a child of~$s$).
            It follows that each connected component of~$G' - S $ has $O(k)$ vertices.

            Next, we analyze the number of connected components of~$G' -S$.
            Because each connected component except for possible the last one contains at least $k$ vertices, it follows that the number of connected components is at most~$n/k + 1 = O(n/k)$.
                      
            It remains to show how to compute the distance matrix~$D$ for~$G$ from the distance matrix~$D'$ from~$G'$.
        Let $V$ be the vertex set of~$G$ and~$V'$ be the vertex set of~$G'$.
        Note that any vertex~$v \in V$ is also contained in~$V'$.
        We claim that~$D [u, v] = D' [u, v]$.
        Clearly, any $u$-$v$-path in~$G$ also exists in~$G'$, so we have $D[u, v] \ge D' [u, v]$.
        For any $u$-$v$-path in~$G'$, we can get a $u$-$v$-walk in~$G$ by replacing vertices from~$V' \setminus V$ with the vertex due to which they were added (i.e., $s_i'$ gets replaced by~$s$, and the vertex added due to the $i$-th occurrence of $v$ in some Eulerian walk $P$ is replaced by~$v$), implying that $D' [u, v] \ge D [u, v]$.
        \end{proof}       
Using \Cref{lem:apsp-nice-partition}, we now show that APSP on unweighted graphs can be solved faster than~$O(n^{\omega})$ if the vertex integrity is sufficiently small and $\omega > 2$.

\apsp*
\begin{proof}
    If $k \ge {n}^{0.6}$, then $k^{(\omega -1)/2} n^2 > n^\omega$ and we apply the $O(n^\omega \log n)$-time algorithm \cite{DBLP:conf/stoc/Seidel92}.
    Otherwise, $k \le {n}^{0.6}$ and
    we use the following algorithm to solve APSP.
    \begin{enumerate}
        \item\label{step:apsp-good-components}
        We apply \Cref{lem:apsp-nice-partition}, resulting in a graph~$G$ and sets~$S, T_1, \dots, T_\ncc$.
        \item\label{step:apsp-components}
        For every $i$, we solve APSP on $G[S \cup T_i]$.
        Let $D_i = D_i'[T_i, S]$, where $D_i'$ is the distance matrix of $G[S \cup T_i]$.
        \item\label{step:apsp-weighted-component}
        We compute an edge-weighted graph~$G_S$ where the vertex set is $S$, the edge set is $\binom{S}{2}$, and~$w(uv) = 1$ if $uv \in E$, and $w(uv) = \min_{i} D_i(uv)$ otherwise.
        We solve APSP on this weighted graph and call the resulting distance matrix $D_S$.
        As we show later, $D_S[u, v]$ is the distance from $u$ to $v$ in $G$.
        \item\label{step:apsp-final}
        For each $i, j \in [\ncc]$, we compute $D_i^* := D_i \star D_{S}$
        and $D_{i,j}^* := D_i^*\star D_j^T$.
        \item \label{step:apsp-return}
        Return a symmetric matrix~$D^*$ where the upper triangular part of $D^*$ is defined by
        \begin{align*} D^* [u, v] = \begin{cases} D_S [u, v] & u, v \in S,\\
        D_i^* [u, v] & u \in S \text{ and } v \in T_i,\\
        \min \{D_i [u, v], D_{i, i}^* [u, v]\} & u, v \in T_i \text{ for some }i \in [\ncc],\\
        D_{i,j}^* [u, v] & u \in T_i \text{ and } v \in T_j \text{ for } i \neq j.
        \end{cases}\end{align*}
    \end{enumerate}
 
    First we show the correctness of the algorithm.
    For $u, v\in S$, Step~\ref{step:apsp-weighted-component} guarantees that the distance between~$u$ and~$v$ in the weighted graph~$G_S$ is~${D^* [u, v] = D_S [u, v]}$.
    As each edge~$u'v'$ in~$G_S$ corresponds to a path in~$S \cup T_i$ of length~$w(u'v')$ for some~$i \in [\ncc]$, it follows that each path in~$G_S$ corresponds to a walk in~$G$ of the same length.
    Thus, $D_S [u, v]$ is not smaller than a shortest $u$-$v$-path in~$G$.
    On the other hand, let $P$ be a shortest $u$-$v$-path in~$G$.
    Let $u = s_1, \dots, s_\ell = v$ be the vertices of~$P$ in~$S$ (in the order they appear in~$P$).
    By construction, for each~$i < \ell$, there are no vertices in $S$ between~$s_i$ and~$s_{i+1}$ in~$P$. Hence, this subpath of~$P$ is fully contained in $G[S \cup T_j]$ for some connected component~$T_j$, and we have~$w(s_i, s_{i+1}) \leq \dist_{G[S \cup T_j]} (s_i, s_{i+1})$.
    Consequently, we have that $P$ corresponds to a path in the weighted graph~$G_S$ and $D_S (u, v)$ is not larger than the length of a shortest $u$-$v$-path in~$G$.
    Thus, $D_S [u, v]$ equals the distance between $u$ and $v$ in~$G$ and~$D^* [u, v]$ has therefore been computed correctly.

    Next, assume that $u \in T_i$ for some~$i \in [\ncc]$ and $v \in S$ (the case $u \in S$ and $v \in T_i$ is analogous).
    A shortest $u$-$v$-path then consists of a shortest~$u$-$s$-path in~$G[ S \cup T_i]$ and a shortest~$s$-$v$-path in~$G$ for some~$s \in S$ (since each vertex in $T_i$ only has neighbors in~$T_i \cup S$).
    This implies~${\dist_G[u, v] = \min_{s \in S} D_i [u, s] + D_S [s, v] = D_i^* [u, v]}$.

    Next, assume that $u \in T_i$ and $v \in T_i$ for some~$i \in [\ncc]$.
    Then a shortest $u$-$v$-path either stays completely in~$T_i$ or it decomposes into a shortest~$u$-$s$-path in~$G[S \cup T_i]$, followed by a shortest~\mbox{$s$-$s'$-path} in~$G$, and finally followed by a shortest~$s'$-$v$-path in~$G[S \cup T_i]$ for some~${s, s' \in S}$.
    In the first case, the distance between $u$ and $v$ equals to~$D_i [u, v]$.
    In the second case, the distance between $u$ and $v$ equals $${\min_{s, s' \in S} \dist_{G[S \cup T_i]} (u, s) + \dist_{G} (s, s') + \dist_{G[S \cup T_i]} (s', v) = D_{i, i}^* [u, v]}.$$
    Thus, $D^* [u, v]$ contains the distance between $u$ and $v$.

    Finally, assume that $u \in T_i$ and $v \in T_j$ for some~$i \neq j \in [\ncc]$.
    A shortest $u$-$v$-path then decomposes into a shortest~$u$-$s$-path in~$G[S \cup T_i]$, followed by a shortest~$s$-$s'$-path in~$G$, and finally followed by a shortest~$s'$-$v$-path in~$G[S \cup T_j]$ for some~$s, s' \in S$.
    Consequently, we have that~${\dist (u, v) = \min_{s, s' \in S} D_i [u, s] + D_S [s, s'] + D_j [s', v] = D_{i, j}^* [u, v]}$.

    It remains to analyze the running time of the algorithm.
    Step~\ref{step:apsp-good-components} runs in linear time by \cref{lem:apsp-nice-partition}.
    Step~\ref{step:apsp-components} solves APSP on $O(n/k)$ many instances, each with~$O(k)$~vertices.
    As APSP on unweighted graphs with $k$ vertices can be solved in $O (k^\omega \log(k))$ time, Step~\ref{step:apsp-components} runs in $O(n k^{\omega -1} \log(k))$ time.

    Step~\ref{step:apsp-weighted-component} first computes a weighted graph on $O(k)$ vertices.
    Each edge can be computed in $O(n/k)$ time.
    As there are $O(k^2)$ many edges, computing the graph takes~$O( n \cdot k)$ time.
    Solving weighted APSP on this graph then can be done in $O(k^3)$ time~\cite{DBLP:journals/siamcomp/Williams18}.
    As we assumed~$k \le n^{0.6}$, this step runs in $O(n^{1.8})$ time.

    Step~\ref{step:apsp-final} computes for each pair~$(i, j) \in [n/k]$ the min-plus product of $D_i$, $D_S$, and $D_j$.
    We will show that we can compute these min-plus products in $O(k^{(3+\omega)/2})$ time (which is faster than the state-of-the-art algorithm for computing arbitrary min-plus products).
    The trick is to use the fact that~$D_i$ and~$D^*_i$ have bounded difference and then use a result by
    Chi et al.~\cite{DBLP:conf/stoc/ChiDX022} stating that the min-plus product of two matrices of dimension $n \times n$ can be computed in~${O(n^{(3+\omega)/2}) \subseteq O(n^{2.687})}$~time if one of the matrices has bounded difference.
    While it is not true a priori that~$D_i$ and~$D^*_i$ have bounded difference, we will order the rows according to the Hamiltonian path~$H_i$.
    This ensures that two consecutive vertices are adjacent, which implies that consecutive entries in a row differ by at most one.
    Hence, each computation of the min-plus product can be done in~$O (k^{(3+\omega)/2})$~time, as~$|S| = |T_1 | = \dots = |T_\ncc| = O(k)$ by \Cref{lem:apsp-nice-partition}.
    Overall, the running time of Step~\ref{step:apsp-final} is in~${O((n/k)^2 \cdot k^{(3+\omega)/2}) = O(n^2 \cdot k^{(\omega-1)/2})}$.

    Lastly, Step~\ref{step:apsp-return} runs in $O (n^2)$ time.
    The overall running time of $O(k^{(\omega -1)/2} \cdot n^2)$ follows from the observation that the running time for Step~\ref{step:apsp-final} dominates the running time of Steps~\ref{step:apsp-good-components},~\ref{step:apsp-components},~\ref{step:apsp-weighted-component}, and~\ref{step:apsp-return} as~$k \le n$ and~$\omega < 2.9$.
\end{proof}

To conclude this section, we give an adaptive algorithm for constant diameter.
The crucial observation here is that we can compute the product of the adjacency matrix of a graph with any other matrix in $O(\iota^{\omega -2} n^2)$ time.

\begin{lemma}\label{lem:multiplication-with-arbitrary-matrix}
 Given a graph $G$ with adjacency matrix~$A$, a $k$-separator $S$ for~$G$, and an~$n \times n$-matrix~$M$, the matrices~$A M$ and $MA$ can be computed in $O(k^{\omega -2} n^2)$ time.
\end{lemma}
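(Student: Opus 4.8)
The plan is to exploit the block structure of the adjacency matrix $A$ guaranteed by the decomposition $(S; T_1, \dots, T_\ncc)$ from \Cref{sec:prelim}, in which $\ncc \le n/k + 1 = O(n/k)$ and each part $T_i$ has $O(k)$ vertices. Writing $A$ in the form of \Cref{eq:adj-matrix}, its only nonzero blocks are $\gamma = A[S]$, the blocks $\beta_i = A[S, T_i]$ together with their transposes, and the diagonal blocks $\alpha_i = A[T_i]$; in particular $A[T_i, T_j] = \mathbf{0}$ for $i \ne j$, and each of these blocks has $O(k)$ rows and columns.

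First I would split $M$ into row blocks $M_S := M[S, \cdot]$ and $M_i := M[T_i, \cdot]$ for $i \in [\ncc]$, each having $O(k)$ rows and $n$ columns. Then $AM$ decomposes blockwise as
\begin{align*}
    (AM)[S, \cdot] &= \gamma M_S + \sum_{i \in [\ncc]} \beta_i M_i, &
    (AM)[T_i, \cdot] &= \beta_i^T M_S + \alpha_i M_i \quad (i \in [\ncc]).
\end{align*}
Every product on the right-hand side multiplies an $O(k) \times O(k)$ matrix by an $O(k) \times n$ matrix; by cutting the second factor into $n/k$ blocks of size $O(k) \times O(k)$ and multiplying blockwise, each such product is computable in $O((n/k) \cdot k^{\omega}) = O(k^{\omega - 1} n)$ time (see the discussion of basic matrix operations in \Cref{sec:prelim}). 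There are only $O(\ncc) = O(n/k)$ such products in total, namely $\gamma M_S$, the $\ncc$ summands $\beta_i M_i$, and the two products $\beta_i^T M_S$ and $\alpha_i M_i$ per part; hence computing all of them, and then performing the additions and reassembling the $n \times n$ matrix (a further $O(\ncc \cdot kn) = O(n^2)$ time), takes $O((n/k) \cdot k^{\omega - 1} n) = O(k^{\omega - 2} n^2)$ time in total.

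For $MA$ I would use that $A$ is symmetric, so $MA = (A^T M^T)^T = (A M^T)^T$; thus it suffices to transpose $M$ in $O(n^2)$ time, run the procedure above on $M^T$, and transpose the result, which stays within $O(k^{\omega - 2} n^2)$ time. I do not expect any genuine obstacle here; the only point that needs a moment's attention is the sum $\sum_{i \in [\ncc]} \beta_i M_i$ in the $S$-row block, but since it is a sum of $\ncc = O(n/k)$ products each costing $O(k^{\omega - 1} n)$ time, this sub-step already fits inside the claimed running time (and $O(n^2) \subseteq O(k^{\omega-2}n^2)$ since $\omega \ge 2$).
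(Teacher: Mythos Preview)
Your proposal is correct and follows essentially the same approach as the paper: you exploit the block structure of \Cref{eq:adj-matrix} to write $AM$ as $O(n/k)$ products of an $O(k)\times O(k)$ matrix with an $O(k)\times n$ matrix, each costing $O(k^{\omega-1}n)$ time, and handle $MA$ via $(AM^T)^T$. The only cosmetic difference is that the paper splits some of these terms by column blocks of $M$ rather than by row blocks of $\beta$, but the computation and the running-time analysis are the same.
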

\begin{proof}
Suppose that the adjacency matrix has the form as described in \Cref{eq:adj-matrix}.
Then,
\begin{align*}
    AM = \begin{bmatrix}
        \gamma M[S, V] + \beta M[\overline{S}, V] \\
        \beta^T M[S, V] + \alpha M[\overline{S}, V]
    \end{bmatrix}.
\end{align*}

A closer inspection reveals that all the computation can be done in $O(k^{\omega - 2} n^2)$ time:
First observe that $\gamma M[S, V]$ can be computed in $O (k^{\omega - 1} n)$ time.
For the other terms, note that
\begin{align*}
    &\beta M[\overline{S}, V] = \begin{bmatrix}
        \beta M[\overline{S}, S] & \beta M[\overline{S}, T_1] & \ldots & \beta M[\overline{S}, T_{\nu}]
    \end{bmatrix}, \\
    &\beta^T M[S, V] = \begin{bmatrix}
        \beta^T M[\overline{S}, S] & \beta^T M[\overline{S}, T_1] & \ldots & \beta^T M[\overline{S}, T_{\nu}]
    \end{bmatrix}, \\
    &\alpha M[\overline{S}, V] = \begin{bmatrix}
        \alpha_1 M[T_1, \overline{S}] &  \alpha_2 M[T_2, \overline{S}] & \ldots &  \alpha_{\nu} M[T_{\ncc}, \overline{S}]
    \end{bmatrix}^T.
\end{align*}
Since there are $O(n/k)$ submatrices and each takes~$O(k^{\omega - 1} n)$ time to compute, $AM$ can be computed in $O(k^{\omega - 2} n^2)$ time.
Note that $MA = (AM^T)^T$ can be computed analogously.
\end{proof}

We obtain our algorithm from the folklore observation that for any two vertices~$u, v \in V$, there is a walk of length exactly length $d$ between $u$ and $v$ if and only if $A^d[u, v] \ne \emptyset$.

\begin{proposition}
    Given a graph $G$ and a~$k$-separator~$S$, APSP can be solved in $O(d k^{\omega - 2} n^2)$ time where $d$ is the diameter of~$G$.
\end{proposition}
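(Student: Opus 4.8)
The plan is to exploit the folklore fact recalled just above the statement: in an unweighted graph, $\dist_G(u,v)$ equals the smallest $i\ge 1$ for which $A^i[u,v]\neq 0$ (together with $\dist_G(u,u)=0$), since a shortest $u$-$v$-path is in particular a $u$-$v$-walk and no walk can be shorter than the distance. Hence it suffices to inspect the matrices $A^1, A^2, \dots, A^d$ and, for each pair $(u,v)$, output the first exponent at which a nonzero entry appears; all pairs in the same connected component are resolved within $d$ steps since the diameter is $d$ (pairs in different components keep distance $\infty$, which is correct, and one may in any case run the algorithm on each connected component separately).

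The main obstacle is that the entries of $A^i$ count walks and can be as large as $n^i$, so we cannot afford to manipulate them in $O(1)$ time per operation. I would circumvent this by never storing walk counts: maintain a $0/1$ matrix $B_i$ with $B_i[u,v]=1$ iff there is a $u$-$v$-walk of length exactly $i$, starting from $B_1=A$. Given $B_{i-1}$, compute the integer product $C=A\cdot B_{i-1}$ using \Cref{lem:multiplication-with-arbitrary-matrix} in $O(k^{\omega-2}n^2)$ time, and then set $B_i[u,v]=1$ exactly when $C[u,v]>0$. The entries of the intermediate products appearing in the proof of \Cref{lem:multiplication-with-arbitrary-matrix} (e.g.\ $\beta M[\overline S,T_j]$ or $\gamma M[S,V]$) are bounded by $n$ when $M$ is a $0/1$ matrix, so they fit in a single machine word and the $O(1)$-arithmetic assumption remains valid. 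Correctness of the recurrence is immediate by induction: $C[u,v]=\sum_w A[u,w]\,B_{i-1}[w,v]$ is positive iff some neighbour $w$ of $u$ admits a $w$-$v$-walk of length $i-1$, i.e.\ iff there is a $u$-$v$-walk of length $i$.

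Concretely, the algorithm initializes the distance matrix $D$ by $D[u,u]=0$, $D[u,v]=1$ for $uv\in E$, and $D[u,v]=\infty$ otherwise, together with $B_1=A$. For $i=2,\dots,d$ it computes $B_i$ from $B_{i-1}$ as above and then, for every pair $(u,v)$ with $D[u,v]=\infty$ and $B_i[u,v]=1$, sets $D[u,v]=i$; finally it returns $D$. Each iteration costs $O(k^{\omega-2}n^2)$ for the matrix product, which dominates the $O(n^2)$ thresholding-and-updating step since $k^{\omega-2}\ge 1$, and there are $d-1$ iterations, for a total running time of $O(d\,k^{\omega-2}n^2)$. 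The correctness of the returned matrix follows from the observation above, as $D[u,v]$ is set to the first exponent $i$ with $B_i[u,v]=1$, which equals $\dist_G(u,v)$.
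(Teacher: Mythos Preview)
Your proof is correct and follows essentially the same approach as the paper: iteratively build $0/1$ ``reachability'' matrices $B_i$ from $B_{i-1}$ via one multiplication by $A$ using \Cref{lem:multiplication-with-arbitrary-matrix}, then threshold, and read off each distance as the first index $i$ with a nonzero entry. If anything, your version is a bit more careful than the paper's, since you explicitly justify that the intermediate entries stay bounded by $n$ (so word-RAM arithmetic is valid) and you handle pairs in different components.
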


\begin{proof}
 Let~$A$ be the adjacency matrix of~$G$.
 We compute matrices~$B^1, \dots, B^d\in \{0,1 \}^{n \times n}$ recursively as follows.
 We start with~$B^1 := A$.
 Matrix~$B^{i+1}$ is computed by multiplying~$B^i$ with~$A$ and replacing all non-zero entries by~$1$.
 Note that $A^i[u, v] = 0$ if and only if~$B^i [u, v] = 0$.
 Thus, the distance between two vertices~$u \neq v$ is the minimum $i$ such that~$B^i[u, v] \neq 0$.
 By \Cref{lem:multiplication-with-arbitrary-matrix}, we can multiply any matrix with~$A$ in $O(k^{\omega -2} n^2)$ time.
 Thus, we can compute~$B^1,\dots, B^d$ in~$O(d k^{\omega -2} n^2)$~time.
\end{proof}
Note that $d$ can be of order $\Omega(k^2)$ as the vertex integrity in a cycle with~$n$ vertices is in~$O(\sqrt{n})$ and the diameter of the cycle is~$\lfloor n / 2 \rfloor$.

\section{Conclusion}
\label{sec:conclusion}

In this work, we investigated the parameter vertex integrity in search for more efficient algorithms in the FPT-in-P paradigm.
We exhibited that for many problems, the structure of graphs with a small vertex integrity allows us to harness the power of fast matrix multiplication.
In particular, we designed an~$O(\iota^{\omega - 1}n)$-time algorithm for computing the girth of a graph and randomized $O(\iota^{\omega - 1}n)$-time algorithms for finding a four-vertex subgraph (which is not a $K_4$ or a $\overline{K_4}$) and a maximum matching.
We also showed that unweighted APSP can be solved in $O(\iota^{\nicefrac{\omega - 1}{2}} n^2)$ time using min-plus product of bounded-differences matrices,
leaving open whether there is an $O(\iota^{\omega - 2} n^2)$-time algorithm.
More broadly, we wonder whether a similar approach using fast matrix multiplication can be used for graphs of bounded tree-depth or tree-width.
Existing approaches (e.g.~\cite{DBLP:journals/siamcomp/ChibaN85,DBLP:journals/talg/FominLSPW18,DBLP:conf/stacs/IwataOO18}) do not seem amenable to fast matrix multiplication.

\bibliography{ref}

\begin{thebibliography}{45}
\providecommand{\natexlab}[1]{#1}
\providecommand{\url}[1]{\texttt{#1}}
\expandafter\ifx\csname urlstyle\endcsname\relax
  \providecommand{\doi}[1]{doi: #1}\else
  \providecommand{\doi}{doi: \begingroup \urlstyle{rm}\Url}\fi

\bibitem[Abboud et~al.(2016)Abboud, Williams, and
  Wang]{DBLP:conf/soda/AbboudWW16}
Amir Abboud, Virginia~Vassilevska Williams, and Joshua~R. Wang.
\newblock Approximation and fixed parameter subquadratic algorithms for radius
  and diameter in sparse graphs.
\newblock In \emph{Proceedings of the 27th Annual {ACM-SIAM} Symposium on
  Discrete Algorithms ({SODA}~'16)}, pages 377--391. {SIAM}, 2016.
\newblock \doi{10.1137/1.9781611974331.ch28}.
\newblock URL \url{https://doi.org/10.1137/1.9781611974331.ch28}.

\bibitem[Alon and Yuster(2007)]{DBLP:conf/esa/AlonY07}
Noga Alon and Raphael Yuster.
\newblock Fast algorithms for maximum subset matching and all-pairs shortest
  paths in graphs with a (not so) small vertex cover.
\newblock In \emph{Proceedings of the 15th Annual European Symposium on
  Algorithms ({ESA}~'07)}, pages 175--186, 2007.
\newblock \doi{10.1007/978-3-540-75520-3\_17}.
\newblock URL \url{https://doi.org/10.1007/978-3-540-75520-3\_17}.

\bibitem[Alon et~al.(1995)Alon, Yuster, and Zwick]{DBLP:journals/jacm/AlonYZ95}
Noga Alon, Raphael Yuster, and Uri Zwick.
\newblock Color-coding.
\newblock \emph{J. {ACM}}, 42\penalty0 (4):\penalty0 844--856, 1995.
\newblock \doi{10.1145/210332.210337}.
\newblock URL \url{https://doi.org/10.1145/210332.210337}.

\bibitem[Bar{-}Yehuda et~al.(1998)Bar{-}Yehuda, Geiger, Naor, and
  Roth]{DBLP:journals/siamcomp/Bar-YehudaGNR98}
Reuven Bar{-}Yehuda, Dan Geiger, Joseph Naor, and Ron~M. Roth.
\newblock Approximation algorithms for the feedback vertex set problem with
  applications to constraint satisfaction and bayesian inference.
\newblock \emph{{SIAM} J. Comput.}, 27\penalty0 (4):\penalty0 942--959, 1998.
\newblock \doi{10.1137/S0097539796305109}.
\newblock URL \url{https://doi.org/10.1137/S0097539796305109}.

\bibitem[Barefoot et~al.(1987)Barefoot, Entringer, and
  Swart]{barefoot1987vulnerability}
Curtis~A Barefoot, Roger Entringer, and Henda Swart.
\newblock Vulnerability in graphs - {A} comparative survey.
\newblock \emph{Journal of Combinatorial Mathematics and Combinatorial
  Computing}, 1\penalty0 (38):\penalty0 13--22, 1987.

\bibitem[Benko et~al.(2009)Benko, Ernst, and
  Lanphier]{DBLP:journals/siamdm/BenkoEL09}
D.~Benko, C.~Ernst, and Dominic Lanphier.
\newblock Asymptotic bounds on the integrity of graphs and separator theorems
  for graphs.
\newblock \emph{{SIAM} Journal on Discrete Mathematics}, 23\penalty0
  (1):\penalty0 265--277, 2009.
\newblock \doi{10.1137/070692698}.
\newblock URL \url{https://doi.org/10.1137/070692698}.

\bibitem[Bentert et~al.(2019)Bentert, Fluschnik, Nichterlein, and
  Niedermeier]{DBLP:journals/jcss/BentertFNN19}
Matthias Bentert, Till Fluschnik, Andr{\'{e}} Nichterlein, and Rolf
  Niedermeier.
\newblock Parameterized aspects of triangle enumeration.
\newblock \emph{Journal of Computer and System Sciences}, 103:\penalty0 61--77,
  2019.
\newblock \doi{10.1016/j.jcss.2019.02.004}.
\newblock URL \url{https://doi.org/10.1016/j.jcss.2019.02.004}.

\bibitem[Bodlaender et~al.(2020)Bodlaender, Hanaka, Kobayashi, Kobayashi,
  Okamoto, Otachi, and van~der
  Zanden]{DBLP:journals/algorithmica/BodlaenderHKKOO20}
Hans~L. Bodlaender, Tesshu Hanaka, Yasuaki Kobayashi, Yusuke Kobayashi, Yoshio
  Okamoto, Yota Otachi, and Tom~C. van~der Zanden.
\newblock Subgraph isomorphism on graph classes that exclude a substructure.
\newblock \emph{Algorithmica}, 82\penalty0 (12):\penalty0 3566--3587, 2020.
\newblock \doi{10.1007/s00453-020-00737-z}.
\newblock URL \url{https://doi.org/10.1007/s00453-020-00737-z}.

\bibitem[Bringmann et~al.(2020)Bringmann, Husfeldt, and
  Magnusson]{DBLP:journals/algorithmica/BringmannHM20}
Karl Bringmann, Thore Husfeldt, and M{\aa}ns Magnusson.
\newblock Multivariate analysis of orthogonal range searching and graph
  distances.
\newblock \emph{Algorithmica}, 82\penalty0 (8):\penalty0 2292--2315, 2020.
\newblock \doi{10.1007/s00453-020-00680-z}.
\newblock URL \url{https://doi.org/10.1007/s00453-020-00680-z}.

\bibitem[Bunch and Hopcroft(1974)]{bunch1974triangular}
James~R. Bunch and John~E. Hopcroft.
\newblock Triangular factorization and inversion by fast matrix multiplication.
\newblock \emph{Mathematics of Computation}, 28\penalty0 (125):\penalty0
  231--236, 1974.
\newblock \doi{10.1090/S0025-5718-1974-0331751-8}.

\bibitem[B{\"{u}}rgisser et~al.(1997)B{\"{u}}rgisser, Clausen, and
  Shokrollahi]{DBLP:books/daglib/0090316}
Peter B{\"{u}}rgisser, Michael Clausen, and Mohammad~Amin Shokrollahi.
\newblock \emph{Algebraic complexity theory}.
\newblock Springer, 1997.
\newblock ISBN 3-540-60582-7.
\newblock \doi{0.1007/978-3-662-03338-8}.

\bibitem[Chi et~al.(2022)Chi, Duan, Xie, and Zhang]{DBLP:conf/stoc/ChiDX022}
Shucheng Chi, Ran Duan, Tianle Xie, and Tianyi Zhang.
\newblock Faster min-plus product for monotone instances.
\newblock In \emph{Proceedings of the 54th Annual {ACM} {SIGACT} Symposium on
  Theory of Computing ({STOC}~'22)}, pages 1529--1542. {ACM}, 2022.
\newblock \doi{10.1145/3519935.3520057}.
\newblock URL \url{https://doi.org/10.1145/3519935.3520057}.

\bibitem[Chiba and Nishizeki(1985)]{DBLP:journals/siamcomp/ChibaN85}
Norishige Chiba and Takao Nishizeki.
\newblock Arboricity and subgraph listing algorithms.
\newblock \emph{SIAM Journal on Computing}, 14\penalty0 (1):\penalty0 210--223,
  1985.
\newblock \doi{10.1137/0214017}.
\newblock URL \url{https://doi.org/10.1137/0214017}.

\bibitem[Corneil et~al.(1985)Corneil, Perl, and
  Stewart]{DBLP:journals/siamcomp/CorneilPS85}
Derek~G. Corneil, Yehoshua Perl, and Lorna~K. Stewart.
\newblock A linear recognition algorithm for cographs.
\newblock \emph{SIAM Journal on Computing}, 14\penalty0 (4):\penalty0 926--934,
  1985.
\newblock \doi{10.1137/0214065}.
\newblock URL \url{https://doi.org/10.1137/0214065}.

\bibitem[Coudert et~al.(2019)Coudert, Ducoffe, and
  Popa]{DBLP:journals/talg/CoudertDP19}
David Coudert, Guillaume Ducoffe, and Alexandru Popa.
\newblock Fully polynomial {FPT} algorithms for some classes of bounded
  clique-width graphs.
\newblock \emph{{ACM} Transactions on Algorithms}, 15\penalty0 (3):\penalty0
  33:1--33:57, 2019.
\newblock \doi{10.1145/3310228}.
\newblock URL \url{https://doi.org/10.1145/3310228}.

\bibitem[Deng et~al.(2022)Deng, Kirkpatrick, Rong, Williams, and
  Zhong]{DBLP:conf/icalp/DengKRWZ22}
Mingyang Deng, Yael Kirkpatrick, Victor Rong, Virginia~Vassilevska Williams,
  and Ziqian Zhong.
\newblock New additive approximations for shortest paths and cycles.
\newblock In \emph{Proceedings of the 49th International Colloquium on
  Automata, Languages, and Programming ({ICALP~'22})}, pages 50:1--50:10.
  Schloss Dagstuhl - Leibniz-Zentrum f{\"{u}}r Informatik, 2022.
\newblock \doi{10.4230/LIPIcs.ICALP.2022.50}.
\newblock URL \url{https://doi.org/10.4230/LIPIcs.ICALP.2022.50}.

\bibitem[Drange et~al.(2016)Drange, Dregi, and van~'t
  Hof]{DBLP:journals/algorithmica/DrangeDH16}
P{\aa}l~Gr{\o}n{\aa}s Drange, Markus~S. Dregi, and Pim van~'t Hof.
\newblock On the computational complexity of vertex integrity and component
  order connectivity.
\newblock \emph{Algorithmica}, 76\penalty0 (4):\penalty0 1181--1202, 2016.
\newblock \doi{10.1007/s00453-016-0127-x}.
\newblock URL \url{https://doi.org/10.1007/s00453-016-0127-x}.

\bibitem[Dvor{\'{a}}k et~al.(2021)Dvor{\'{a}}k, Eiben, Ganian, Knop, and
  Ordyniak]{DBLP:journals/ai/DvorakEGKO21}
Pavel Dvor{\'{a}}k, Eduard Eiben, Robert Ganian, Dusan Knop, and Sebastian
  Ordyniak.
\newblock The complexity landscape of decompositional parameters for {ILP}:
  {P}rograms with few global variables and constraints.
\newblock \emph{Artificial Intelligence}, 300:\penalty0 103561, 2021.
\newblock \doi{10.1016/j.artint.2021.103561}.
\newblock URL \url{https://doi.org/10.1016/j.artint.2021.103561}.

\bibitem[Eirinakis et~al.(2017)Eirinakis, Williamson, and
  Subramani]{DBLP:journals/jgaa/EirinakisWS17}
Pavlos Eirinakis, Matthew~D. Williamson, and K.~Subramani.
\newblock On the {S}hoshan-{Z}wick algorithm for the all-pairs shortest path
  problem.
\newblock \emph{Journal of Graph Algorithms and Applications}, 21\penalty0
  (2):\penalty0 177--181, 2017.
\newblock \doi{10.7155/jgaa.00410}.
\newblock URL \url{https://doi.org/10.7155/jgaa.00410}.

\bibitem[Eisenbrand and Grandoni(2004)]{DBLP:journals/tcs/EisenbrandG04}
Friedrich Eisenbrand and Fabrizio Grandoni.
\newblock On the complexity of fixed parameter clique and dominating set.
\newblock \emph{Theoretical Computer Science}, 326\penalty0 (1-3):\penalty0
  57--67, 2004.
\newblock \doi{10.1016/j.tcs.2004.05.009}.
\newblock URL \url{https://doi.org/10.1016/j.tcs.2004.05.009}.

\bibitem[Fischer and Meyer(1971)]{DBLP:conf/focs/FischerM71}
Michael~J. Fischer and Albert~R. Meyer.
\newblock Boolean matrix multiplication and transitive closure.
\newblock In \emph{Proceedings of the 12th Annual Symposium on Switching and
  Automata Theory ({SWAT~'71})}, pages 129--131. {IEEE} Computer Society, 1971.
\newblock \doi{10.1109/SWAT.1971.4}.
\newblock URL \url{https://doi.org/10.1109/SWAT.1971.4}.

\bibitem[Fomin et~al.(2018)Fomin, Lokshtanov, Saurabh, Pilipczuk, and
  Wrochna]{DBLP:journals/talg/FominLSPW18}
Fedor~V. Fomin, Daniel Lokshtanov, Saket Saurabh, Michal Pilipczuk, and Marcin
  Wrochna.
\newblock Fully polynomial-time parameterized computations for graphs and
  matrices of low treewidth.
\newblock \emph{{ACM} Transactions on Algorithms}, 14\penalty0 (3):\penalty0
  34:1--34:45, 2018.
\newblock \doi{10.1145/3186898}.
\newblock URL \url{https://doi.org/10.1145/3186898}.

\bibitem[Gall(2012)]{DBLP:conf/focs/Gall12}
Fran{\c{c}}ois~Le Gall.
\newblock Faster algorithms for rectangular matrix multiplication.
\newblock In \emph{Proceedings of the 53rd Annual {IEEE} Symposium on
  Foundations of Computer Science ({FOCS~'12})}, pages 514--523. {IEEE}
  Computer Society, 2012.
\newblock \doi{10.1109/FOCS.2012.80}.
\newblock URL \url{https://doi.org/10.1109/FOCS.2012.80}.

\bibitem[Gima et~al.(2022)Gima, Hanaka, Kiyomi, Kobayashi, and
  Otachi]{DBLP:journals/tcs/GimaHKKO22}
Tatsuya Gima, Tesshu Hanaka, Masashi Kiyomi, Yasuaki Kobayashi, and Yota
  Otachi.
\newblock Exploring the gap between treedepth and vertex cover through vertex
  integrity.
\newblock \emph{Theor. Comput. Sci.}, 918:\penalty0 60--76, 2022.
\newblock \doi{10.1016/j.tcs.2022.03.021}.
\newblock URL \url{https://doi.org/10.1016/j.tcs.2022.03.021}.

\bibitem[Godsil(1993)]{DBLP:books/daglib/0077284}
Chris~D. Godsil.
\newblock \emph{Algebraic combinatorics}.
\newblock Chapman and Hall, 1993.
\newblock ISBN 978-0-412-04131-0.

\bibitem[Harvey(2009)]{DBLP:journals/siamcomp/Harvey09}
Nicholas J.~A. Harvey.
\newblock Algebraic algorithms for matching and matroid problems.
\newblock \emph{{SIAM} Journal on Computing}, 39\penalty0 (2):\penalty0
  679--702, 2009.
\newblock \doi{10.1137/070684008}.
\newblock URL \url{https://doi.org/10.1137/070684008}.

\bibitem[Hegerfeld and Kratsch(2019)]{DBLP:conf/icalp/HegerfeldK19}
Falko Hegerfeld and Stefan Kratsch.
\newblock On adaptive algorithms for maximum matching.
\newblock In \emph{Proceedings of the 46th International Colloquium on
  Automata, Languages, and Programming ({ICALP}~'19)}, pages 71:1--71:16.
  Schloss Dagstuhl - Leibniz-Zentrum f{\"{u}}r Informatik, 2019.
\newblock \doi{10.4230/LIPIcs.ICALP.2019.71}.
\newblock URL \url{https://doi.org/10.4230/LIPIcs.ICALP.2019.71}.

\bibitem[Itai and Rodeh(1978)]{DBLP:journals/siamcomp/ItaiR78}
Alon Itai and Michael Rodeh.
\newblock Finding a minimum circuit in a graph.
\newblock \emph{{SIAM} J. Comput.}, 7\penalty0 (4):\penalty0 413--423, 1978.
\newblock \doi{10.1137/0207033}.
\newblock URL \url{https://doi.org/10.1137/0207033}.

\bibitem[Iwata et~al.(2018)Iwata, Ogasawara, and
  Ohsaka]{DBLP:conf/stacs/IwataOO18}
Yoichi Iwata, Tomoaki Ogasawara, and Naoto Ohsaka.
\newblock On the power of tree-depth for fully polynomial {FPT} algorithms.
\newblock In \emph{Proceedings of the 35th Symposium on Theoretical Aspects of
  Computer Science ({STACS}~'18)}, pages 41:1--41:14. Schloss Dagstuhl -
  Leibniz-Zentrum f{\"{u}}r Informatik, 2018.
\newblock \doi{10.4230/LIPIcs.STACS.2018.41}.
\newblock URL \url{https://doi.org/10.4230/LIPIcs.STACS.2018.41}.

\bibitem[Kratsch and Nelles(2020)]{DBLP:conf/stacs/KratschN20}
Stefan Kratsch and Florian Nelles.
\newblock Efficient parameterized algorithms for computing all-pairs shortest
  paths.
\newblock In \emph{Proceedings of the 37th International Symposium on
  Theoretical Aspects of Computer Science ({STACS}~'20)}, pages 38:1--38:15.
  Schloss Dagstuhl - Leibniz-Zentrum f{\"{u}}r Informatik, 2020.
\newblock \doi{10.4230/LIPIcs.STACS.2020.38}.
\newblock URL \url{https://doi.org/10.4230/LIPIcs.STACS.2020.38}.

\bibitem[Lampis and Mitsou(2021)]{DBLP:conf/isaac/LampisM21}
Michael Lampis and Valia Mitsou.
\newblock Fine-grained meta-theorems for vertex integrity.
\newblock In \emph{Proceedings of the 32nd International Symposium on
  Algorithms and Computation ({ISAAC}~'21)}, pages 34:1--34:15, 2021.
\newblock \doi{10.4230/LIPIcs.ISAAC.2021.34}.
\newblock URL \url{https://doi.org/10.4230/LIPIcs.ISAAC.2021.34}.

\bibitem[Lee(2019)]{DBLP:journals/mp/Lee19}
Euiwoong Lee.
\newblock Partitioning a graph into small pieces with applications to path
  transversal.
\newblock \emph{Mathematical Programming}, 177\penalty0 (1-2):\penalty0 1--19,
  2019.
\newblock \doi{10.1007/s10107-018-1255-7}.
\newblock URL \url{https://doi.org/10.1007/s10107-018-1255-7}.

\bibitem[Lov{\'{a}}sz(1979)]{DBLP:conf/fct/Lovasz79}
L{\'{a}}szl{\'{o}} Lov{\'{a}}sz.
\newblock On determinants, matchings, and random algorithms.
\newblock In \emph{Proceedings of the 2nd International Symposium on
  Fundamentals of Computation Theory ({FCT~'79})}, pages 565--574.
  Akademie-Verlag, Berlin, 1979.

\bibitem[Monien(1983)]{DBLP:journals/computing/Monien83}
Burkhard Monien.
\newblock The complexity of determining a shortest cycle of even length.
\newblock \emph{Computing}, 31\penalty0 (4):\penalty0 355--369, 1983.
\newblock \doi{10.1007/BF02251238}.
\newblock URL \url{https://doi.org/10.1007/BF02251238}.

\bibitem[Mucha and Sankowski(2004)]{DBLP:conf/focs/MuchaS04}
Marcin Mucha and Piotr Sankowski.
\newblock Maximum matchings via gaussian elimination.
\newblock In \emph{Proceedings of the 45th Symposium on Foundations of Computer
  Science {(FOCS}~'04)}, pages 248--255, 2004.
\newblock \doi{10.1109/FOCS.2004.40}.
\newblock URL \url{https://doi.org/10.1109/FOCS.2004.40}.

\bibitem[Murota(1999)]{murota1999matrices}
Kazuo Murota.
\newblock \emph{Matrices and matroids for systems analysis}.
\newblock Springer Science \& Business Media, 1999.
\newblock \doi{10.1007/978-3-642-03994-2}.

\bibitem[Rabin and Vazirani(1989)]{DBLP:journals/jal/RabinV89}
Michael~O. Rabin and Vijay~V. Vazirani.
\newblock Maximum matchings in general graphs through randomization.
\newblock \emph{J. Algorithms}, 10\penalty0 (4):\penalty0 557--567, 1989.
\newblock \doi{10.1016/0196-6774(89)90005-9}.
\newblock URL \url{https://doi.org/10.1016/0196-6774(89)90005-9}.

\bibitem[Schwartz(1980)]{DBLP:journals/jacm/Schwartz80}
Jacob~T. Schwartz.
\newblock Fast probabilistic algorithms for verification of polynomial
  identities.
\newblock \emph{Journal of the {ACM}}, 27\penalty0 (4):\penalty0 701--717,
  1980.
\newblock \doi{10.1145/322217.322225}.
\newblock URL \url{https://doi.org/10.1145/322217.322225}.

\bibitem[Seidel(1992)]{DBLP:conf/stoc/Seidel92}
Raimund Seidel.
\newblock On the all-pairs-shortest-path problem.
\newblock In \emph{Proceedings of the 24th Annual {ACM} Symposium on Theory of
  Computing ({STOC~'92})}, pages 745--749. {ACM}, 1992.
\newblock \doi{10.1145/129712.129784}.
\newblock URL \url{https://doi.org/10.1145/129712.129784}.

\bibitem[Shoshan and Zwick(1999)]{DBLP:conf/focs/ShoshanZ99}
Avi Shoshan and Uri Zwick.
\newblock All pairs shortest paths in undirected graphs with integer weights.
\newblock In \emph{Proceedings of the 40th Annual Symposium on Foundations of
  Computer Science ({FOCS}~'99)}, pages 605--615, 1999.
\newblock \doi{10.1109/SFFCS.1999.814635}.
\newblock URL \url{https://doi.org/10.1109/SFFCS.1999.814635}.

\bibitem[Tutte(1947)]{tutte1947factorization}
William~T. Tutte.
\newblock The factorization of linear graphs.
\newblock \emph{Journal of the London Mathematical Society}, 1\penalty0
  (2):\penalty0 107--111, 1947.
\newblock \doi{10.1112/jlms/s1-22.2.107}.

\bibitem[Williams(2018)]{DBLP:journals/siamcomp/Williams18}
Richard~Ryan Williams.
\newblock Faster all-pairs shortest paths via circuit complexity.
\newblock \emph{{SIAM} Journal on Computing}, 47\penalty0 (5):\penalty0
  1965--1985, 2018.
\newblock \doi{10.1137/15M1024524}.
\newblock URL \url{https://doi.org/10.1137/15M1024524}.

\bibitem[Williams et~al.(2015)Williams, Wang, Williams, and
  Yu]{DBLP:conf/soda/WilliamsWWY15}
Virginia~Vassilevska Williams, Joshua~R. Wang, Richard~Ryan Williams, and
  Huacheng Yu.
\newblock Finding four-node subgraphs in triangle time.
\newblock In \emph{Proceedings of the 26th Annual {ACM-SIAM} Symposium on
  Discrete Algorithms ({SODA}~'15)}, pages 1671--1680, 2015.
\newblock \doi{10.1137/1.9781611973730.111}.
\newblock URL \url{https://doi.org/10.1137/1.9781611973730.111}.

\bibitem[Yuster and Zwick(1997)]{DBLP:journals/siamdm/YusterZ97}
Raphael Yuster and Uri Zwick.
\newblock Finding even cycles even faster.
\newblock \emph{{SIAM} J. Discret. Math.}, 10\penalty0 (2):\penalty0 209--222,
  1997.
\newblock \doi{10.1137/S0895480194274133}.
\newblock URL \url{https://doi.org/10.1137/S0895480194274133}.

\bibitem[Zippel(1979)]{zippel1979probabilistic}
Richard Zippel.
\newblock Probabilistic algorithms for sparse polynomials.
\newblock In \emph{Proceedings of the 2nd International Symposium on Symbolic
  and Algebraic Manipulation ({EUROSM~'79})}, pages 216--226. Springer, 1979.
\newblock \doi{10.1007/3-540-09519-5\_73}.

\end{thebibliography}

\end{document}